\newtheorem {theorem}{Theorem}
\newtheorem {assumption}{Assumption}
\newtheorem {corollary}{Corollary}
\newtheorem {definition}{Definition}
\newtheorem {example}{Example}
\newtheorem {lemma}{Lemma}
\newtheorem {proposition}{Proposition}
\newenvironment {proof}[1][Proof]{\noindent \textbf {#1.} }{\ \rule {0.5em}{0.5em}}
\begin{document}
\title{Mean Field Equilibrium: Uniqueness, Existence, and Comparative Statics\protect\footnote{The authors wish to thank Aaron Bodoh-Creed, Ramesh Johari, Bob Wilson, three anonymous referees, and the associate and area editors, as well as seminar participants at Stanford and several conferences for their valuable comments. The second author thanks Joseph and Laurie Lacob for the support during the 2017-2018 and 2018-2019 academic years as a Joseph and Laurie Lacob  Faculty Scholar   at Stanford Graduate School of Business. }}
\author{Bar Light\protect\footnote{Graduate School of Business, Stanford University, Stanford, CA 94305, USA. e-mail: \textsf{barl@stanford.edu}}
~ and Gabriel Y. Weintraub\protect\footnote{  Graduate School of Business, Stanford University, Stanford, CA 94305, USA. e-mail: \textsf{gweintra@stanford.edu }\ }~}
\maketitle

\thispagestyle{empty}

\noindent \noindent \textsc{Abstract}:
\begin{quote}
The standard solution concept for stochastic games is Markov perfect equilibrium (MPE); however, its computation becomes intractable as the number of players increases. Instead, we consider mean field equilibrium (MFE) that has been popularized in the recent literature. MFE takes advantage of averaging effects in models with a large number of players.
We make three main contributions. First, our main result provides conditions that ensure the uniqueness of an MFE. We believe this uniqueness result is the first of its nature in the class of models we study.
Second, we generalize previous MFE existence results. Third,
we provide general comparative statics results. We apply our results to dynamic oligopoly models and to heterogeneous agent macroeconomic models commonly used in previous work in economics and operations.
\end{quote}

\noindent {\small Keywords:
Dynamic games; Mean field equilibrium; Uniqueness of equilibrium; Comparative statics; Dynamic oligopoly models; Heterogeneous agent macroeconomic models}  \\\relax
\smallskip \noindent \emph{} 

\newpage 

\section{Introduction} \label{sec:intro}

In this paper we consider a general class of stochastic games in which every player has an individual state that impacts payoffs. Historically, Markov perfect equilibrium (MPE) has been a standard solution concept for this type of stochastic games \citep{MaskinTirole2001}. However, in realistically-sized applications, MPE suffers from two drawbacks. 
First, because in MPE players keep track of the state of every competitor, the state space grows very quickly as the number of players grows, making the
analysis and computation of MPE infeasible in many applications of practical interest. Second, as the number of players increases, it becomes difficult
to believe that players can in fact track the exact state of the other players and optimize their strategies accordingly.

As an alternative,  mean field equilibrium (MFE)  has received extensive attention in the recent literature. In an MFE, each player optimizes her expected discounted payoff, assuming that the
distribution of the other players' states is fixed. Given the  players' strategy, the distribution of the players' states is an invariant distribution of the stochastic process that governs the states' dynamics. 
 As a solution concept for stochastic games, MFE offers several advantages over MPE. First, because
players only condition their strategies on their own state (the competitors' state is assumed to be fixed), MFE is computationally tractable. Second, as several
of the papers we cite below prove, due to averaging effects MFE provides accurate approximations of optimal behavior as the number of players grows. As a result, it provides an appealing behavioral model in games with many players. 

MFE models have many applications in economics, operations research, and optimal control; e.g., studies of anonymous sequential games \citep{jovanovic_1988}, continuous-time
mean field models (\cite{huang2006large} and \cite{lasry2007mean}), dynamic user equilibrium \citep{friesz1993variational}, auction theory (\cite{iyer2014mean}, \cite{balseiro2015repeated}, and \cite{bimpikis2018managing}), dynamic oligopoly models (\cite{weintraub2008markov}
and \cite{adlakha2015equilibria}), heterogeneous agent macro models (\cite{hopenhayn1992entry} and 
 \cite{heathcote2009}), matching
markets (\cite{kanoria2017facilitating} and \cite{arnosti2018managing}), spatial competition \citep{yang2018mean}, and
evolutionary game theory \citep{tembine2009mean}.

We provide three main contributions regarding MFE. First, we provide conditions that ensure the uniqueness of an MFE. This novel result is important because
it implies sharp counterfactual predictions. Second, we generalize previous existence results to a general state space setting. Our existence result includes
the case of a countable state space and a countable number of players, as well as the case of a continuous state space and a continuum of players. In addition,
we provide novel comparative statics results for stochastic games that do not exhibit strategic complementarities. 

We apply our results to well-known dynamic oligopoly models in which individual states represent the firms' ability to compete in the market \citep{doraszelski2007framework}. MFE and the related concept of oblivious equilibrium have previously been used to analyze such models.\footnote{For example, \cite{adlakha2015equilibria} use MFE, which they call \textit{stationary equilibrium}. \cite{adlakha2015equilibria} was motivated by  \cite{hopenhayn1992entry} who introduced the term to study models with infinite numbers of firms. \cite{weintraub2008markov} introduce oblivious equilibrium to study settings with finite numbers of firms.}
 In the models we study, for each firm, being in a larger state is more profitable, while if competitors' states are larger 
 it is less profitable. This structure is quite natural in dynamic models of competition that have been studied in the operations research and economics literature,  and we leverage it to prove our uniqueness result. We provide examples of dynamic investments models of quality, capacity, and advertising, as well as a dynamic reputation model of an online market. We also apply our results to commonly used heterogeneous agent macroeconomic models. 

We now explain our contributions in more detail and compare them to previous work on MFE. 

\textbf{Uniqueness}.
We do not know of any general uniqueness result regarding MFE in discrete-time mean field equilibrium models.\footnote{\cite{lasry2007mean} prove the uniqueness of an MFE in a continuous time setting under a certain monotonicity condition (see also \cite{carmona2018probabilistic}). This monotonicity condition is different and does not hold in the applications studied in the present paper.} Only a few  papers have obtained uniqueness results in specific applications. \cite{hopenhayn1992entry} proves the uniqueness of an MFE in a specific dynamic competition model. \cite{Light2017} proves the uniqueness of an MFE in a Bewley-Aiyagari model under specific conditions on the model's primitives (see a related result in \cite{hu2019unique}). Our main theorem in this paper is a novel result that provides conditions ensuring the uniqueness of an MFE for broader classes of models. Informally, under mild additional technical conditions, we show that if the probability that a player reaches a higher state in the next period is decreasing in the other players' states, and is increasing in the player's own state in the current period, then the MFE is unique (see Theorem \ref{Theorem uniq}). Hence, the conditions reduce the difficulty of showing that a stochastic game has a unique MFE to proving properties of the players' optimal strategies. 

In many applications, one can show that these properties of the optimal strategies arise naturally. For example, in several dynamic models of competition in operations research and economics, a higher firm's state (e.g., the quality of the firm's product or the firm's capacity) implies higher profitability, and the firm can make investments in each period in order to improve its state. In this setting, one can show that a firm invests less when its competitors' states are higher; hence, competitors' higher states induce a lower state for the firm in the next period. In contrast, if the firm's own current state is higher, it induces a higher state in the next period. Another example is heterogeneous agent macro models where each agent solves a consumption-savings problem. The agents' states correspond to their current savings level and current labor productivity. Under certain conditions it can be shown that an agent saves less when the other agents save more.
On the other hand, the agents' next period's savings are increasing in their current savings. 

We apply our uniqueness result to a general class of dynamic oligopoly models and heterogeneous agent macroeconomic models for which MFE has been used to perform counterfactual predictions implied by a policy or system change. In the past, in the absence of this result, previous work mostly focused on a particular MFE selected by a given algorithm, or on one with a specific structure. In the absence of uniqueness, the predictions often depend on the choice of the MFE, and therefore, uniqueness significantly sharpens such counterfactual analysis. We also show that the uniqueness results proved in \cite{hopenhayn1992entry}
and \cite{Light2017} can be obtained using our approach. 

\textbf{Existence.}
Prior literature has considered the existence of equilibria in stochastic games. Some prior work considered the existence of Markov perfect equilibria (MPE)
(see \cite{doraszelski2010computable} and \cite{he2017stationary}).
\cite{adlakha2015equilibria} prove the existence of an MFE for the case
of a countable and unbounded state space. \cite{acemoglu2012}
consider a closely related notion of equilibrium that is called stationary equilibrium and prove its existence for the case of a compact state space and a specific transition dynamic that is commonly used in economics
(see \cite{stokey1989}). Stationary equilibrium in the sense of \cite{acemoglu2012}
is an MFE where the players' payoff functions depend on the other players' states through an aggregator. Our existence result applies for a general compact state space, more general dependence on the
payoff function, and more general transitions.  In this sense, it is more closely related to the result of \cite{adlakha2013mean}. \cite{adlakha2013mean} prove the
existence of an MFE for the case of a compact state space in stochastic games with strategic complementarities using a lattice-theoretical approach. Instead, we do not assume strategic complementarities and our state space can be any compact separable metric space. 
For our existence result, we assume the standard continuity conditions on model primitives that are assumed in
the papers mentioned above. In addition, we assume that the optimal stationary strategy of the players is single-valued.\protect\footnote{In the dynamic oligopoly models and the heterogeneous agent macro models that we study in Sections \ref{Sec: DOM} and \ref{Sec: HAMM}, previous literature assumes that the players use pure strategies. Motivated by this fact, we focus on pure strategy MFE. In this case, if the optimal stationary strategy of the players is not single-valued then the MFE operator may not be convex-valued. Similar problems arise in proving the existence of a pure-strategy Nash equilibrium.} Concavity conditions on the profit function and the transition function can be imposed in order to ensure that the optimal stationary strategy is indeed single-valued. The main technical difficulty in proving existence
is to prove the weak continuity of the nonlinear MFE operator (see Theorem \ref{Theorem existence}). 

\textbf{Comparative statics.} While some papers contain certain specific results on how equilibria change with the parameters of the model
(for example, see \cite{hopenhayn1992entry} and \cite{aiyagari1994}),
only a few papers have obtained general comparative results in large dynamic economies (see \cite{acemoglu2012}
for a discussion of the difficulties associated with deriving such results). Three notable exceptions are \cite{adlakha2013mean}, \cite{acemoglu2012}, and \cite{acemoglu2018equilibrium}. \cite{adlakha2013mean}
use the techniques for comparing equilibria developed in \cite{milgrom1994comparing} to derive general comparative statics results, and essentially rely on results about the monotonicity of fixed points. The direct application of these results
requires that the MFE operator (see Equation (\ref{MFE OPERATOR}))
be increasing. Our comparative statics results are different because they rely on the uniqueness of an MFE. In particular, the MFE operator is not increasing in our setting (see more details in Section 3). In this sense, our comparative static results are more similar to the results in \cite{acemoglu2012};
however, our model has more general dynamics that include, for example, investment decisions with random outcomes that are typically considered in dynamic oligopoly
models (see Section \ref{Sec: DOM}). Our results are useful because they establish the directional changes of MFE when important model parameters, such as the discount factor and the investment cost, change. 

\section{The Model} \label{Section: model}

In this section we define our general model of a stochastic game and define mean field equilibrium (MFE). The model and the definition of an MFE are similar to \cite{adlakha2013mean} and \cite{adlakha2015equilibria}. 

\subsection{Stochastic Game Model}
In this section we describe our stochastic game model. Differently to standard stochastic games in
the literature (see \cite{shapley_1953}), in our model, every player has
an individual state. Players are coupled through their payoffs and state transition dynamics. A stochastic game has the following elements:~

\textit{Time}. The game is played in discrete time. We index time periods by $t =1 ,2 ,\ldots \;\;\text{.}\;\;$ 

\textit{Players. }There are $m$ players in the game. We use $i$ to denote a particular player. 

\textit{States.} The state of player $i$ at time $t$ is denoted by $x_{i ,t} \in X$ where $X$ is a separable metric space. Typically, we assume that the state space $X$ is in $\mathbb{R}^{n}$ or that $X$ is countable. We denote the state of all players at time $t$ by $\boldsymbol{x}_{\mathbf{t}}$ and the state of all players except player $i$ at time $t$ by $\boldsymbol{x}_{ -\boldsymbol{i} ,\boldsymbol{t}}$. 

\textit{Actions. }The action taken by player $i$ at time $t$ is denoted by $a_{i ,t} \in A$ where $A \subseteq \mathbb{R}^{q}$. We use $\boldsymbol{a}_{\mathbf{t}}$ to denote the action of all players at time $t$. The set of feasible actions for a player in state $x$ is given by $\Gamma  (x) \subseteq A$. 

\textit{States' dynamics. }The state of a player evolves in a Markov fashion. Formally, let
$h_{t} =\{\boldsymbol{x}_{0} ,\boldsymbol{a}_{0} ,\ldots  ,\boldsymbol{x}_{\boldsymbol{t} -1} ,\boldsymbol{a}_{\boldsymbol{t} -1}\}$ denote the \emph{history} up to time $t$. Conditional on $h_{t}$, players' states at time $t$ are independent of each other. This assumption implies that random shocks are idiosyncratic, ruling out aggregate random shocks
that are common to all players. Player $i$'s state $x_{i ,t}$ at time $t$ depends on the past history $h_{t}$ only through the state of player $i$ at time $t -1$, $x_{i ,t -1}$; the states of other players at time $t -1$, $\boldsymbol{x}_{ -\boldsymbol{i} ,\boldsymbol{t} -1}$; and the action taken by player $i$ at time $t -1$, $a_{i ,t -1}$. 

If player $i$'s state at time $t -1$ is $x_{i ,t -1}$, the player takes an action $a_{i ,t -1}$ at time $t -1$, the states of the other players at time $t -1$ are $\boldsymbol{x}_{ -\boldsymbol{i} ,\boldsymbol{t} -1}$, and $\zeta _{i ,t}$ is player $i$'s realized idiosyncratic random shock at time $t$, then player $i$'s next period's state is given by
\begin{equation*}x_{i ,t} =w (x_{i ,t -1} ,a_{i ,t -1} ,\boldsymbol{x}_{ -\boldsymbol{i},\boldsymbol{t} -1} ,\zeta _{i ,t}).
\end{equation*}We assume that
$\zeta $ is a random variable that takes values $\zeta _{j} \in E$ with probability $p_{j}$ for $j =1 ,\ldots  ,n$. $w :X \times A \times X^{m -1} \times E \rightarrow X$ is the transition function. 

\textit{Payoff.} In a given time period, if the state of player $i$ is $x_{i}$, the state of the other players is $\boldsymbol{x}_{ -\boldsymbol{i}}$, and the action taken by player $i$ is $a_{i}$, then the single-period payoff to player $i$ is $\pi  (x_{i} ,a_{i} ,\boldsymbol{x}_{ -\boldsymbol{i}}) \in \mathbb{R}$. In Section \ref{sec:exten} we extend our model to a model in which players are also coupled through actions, that is, the functions $w$ and $\pi$ can also depend on the rivals' current actions.

\textit{Discount factor.} The players discount their future payoff
by a discount factor $0 <\beta  <1$. Thus, a player $i$'s infinite horizon payoff is given by: $\sum _{t =1}^{\infty }\beta ^{t-1} \pi  (x_{i ,t} ,a_{i ,t} ,\boldsymbol{x}_{ -\boldsymbol{i} ,\boldsymbol{t}})$. 

In many games, coupling between players is independent of the identity of the players. This notion of anonymity
captures scenarios where the interaction between players is via aggregate information about the state (see \cite{jovanovic_1988}).
Let $s_{ -i ,t}^{(m)} (y)$ denote the fraction of players excluding player $i$ that have their state as $y$ at time $t$. That is,
\begin{equation*}s_{ -i ,t}^{(m)} (y) =\frac{1}{m -1} \sum _{j \neq i}1_{\{x_{j ,t} =y\}}
\end{equation*}where $1_{D}$ is the indicator function of the set $D$. We refer to $s_{ -i ,t}^{(m)}$ as the population state at time $t$ (from player $i$'s point of view).

\begin{definition}
(Anonymous stochastic game). A stochastic game is called an \textit{\emph{anonymous stochastic game }}if
the payoff function $\pi  (x_{i ,t} ,a_{i ,t} ,\boldsymbol{x}_{ -\boldsymbol{i} ,\boldsymbol{t}})$ and the transition function $w (x_{i ,t} ,a_{i ,t} ,\boldsymbol{x}_{ -\boldsymbol{i} ,\boldsymbol{t}} ,\zeta _{i ,t +1})$ depend on $\boldsymbol{x}_{ -i ,t}$ only through $s_{ -i ,t}^{(m)} $. In an abuse of notation, we write $\pi  (x_{i ,t} ,a_{i ,t} ,s_{ -i ,t}^{(m)})$ for the payoff to player $i$, and $w (x_{i ,t} ,a_{i ,t} ,s_{ -i ,t}^{(m)} ,\zeta _{i ,t +1})$ for the transition function for player $i$. 
\end{definition}

For the remainder of the paper, we focus our attention on anonymous stochastic games. For
ease of notation, we often drop the subscripts $i$ and $t$ and denote a generic transition function by $w (x ,a ,s ,\zeta )$ and a generic payoff function by $\pi (x ,a ,s)$ where $s$ represents the population state of players other than the player under consideration. Anonymity requires that a player's single-period payoff and transition function depend on the states of other players via their empirical distribution over the state space, and not on their specific
identify. In anonymous stochastic games the functional form of the payoff function and transition function are the same, regardless of the number of players
$m$.\protect\footnote{Our results also
generalize for models in which the primitives depend on the number of players $m$ like in the study of oblivious equilibria \citep{weintraub2008markov}).} In that sense, we often interpret the profit function $\pi (x ,a ,s)$ as representing a limiting regime in which the
number of players is infinite. 

We now provide a simple model of capacity competition that illustrates some of the notation presented above.  This is one of the dynamic competition models that we study in Section \ref{Subsec:Quality ladder}.

\begin{example} \label{ex:capacity}  
Our example is based on the capacity competition models of \cite{besanko2004capacity} and \cite{besanko2010lumpy}.
We consider an industry with homogeneous products, where each firm's state variable determines its production capacity. If the firm's state is $x$, then its capacity is $\bar{q} (x)$. In each period, each firm takes a costly action to improve its  capacity in the next period. Further, in each period, firms compete in a capacity-constrained quantity setting game. The inverse demand function is given by $P (Q)$, where $Q$ represents the total industry output. For simplicity, we assume that the marginal costs of all the firms are equal to zero.
Given the total quantity produced by its competitors $Q_{-i}$, the profit maximization problem for firm $i$ is given by $\underset{0 \leq q_{i} \leq \bar{q} (x_{i})}{\max }P (q_{i} +Q_{ -i}) q_{i}$. 

In general, one could solve for the equilibrium of the capacity-constrained static quantity game played by firms, and these static equilibrium actions would determine the single-period profits. However, we focus on the 
 limiting regime with a large number of firms with out market power, that is, firms take $Q$ as fixed. In this case, each firm produces at full capacity and  the limiting profit function is given by:
\begin{equation*} \pi(x,a,s) =P \left (\int_X \bar{q} (y) s (d y)\right ) \bar{q} (x) - d a,
\end{equation*} 
where  $a$ is the firm's investment and $d$ is the unit investment cost (see also \cite{ifrach2016framework}). 
The  next period's state depends on the amount of investment,  the current state, and a random shock. For example, assuming that the state depreciates at rate $\delta$, a possible transition function is given by:
$$w(x,a,s, \zeta) = ((1-\delta)x + k(a)) \zeta ,$$
where $k$ is an increasing function that determines the impact of the firm's investment and $\zeta$ represents uncertainty in the investment process.
\end{example}

Now, we let $\mathcal{P} (X)$ be the set of all possible population states on $X$, that is $\mathcal{P} (X)$ is the set of all probability measures on $X$. We endow $\mathcal{P} (X)$ with the weak topology. Since $\mathcal{P} (X)$ is metrizable, the weak topology on $\mathcal{P} (X)$ is determined by weak convergence (for details see \cite{aliprantis2006infinite}). We
say that $s_{n} \in \mathcal{P} (X)$ converges weakly to $s \in \mathcal{P} (X)$ if for all bounded and continuous functions $f :X \rightarrow \mathbb{R}$ we have
\begin{equation*}\underset{n \rightarrow \infty }{\lim }\int _{X}f (x) s_{n} (d x) =\int _{X}f (x) s (d x).
\end{equation*}

{\em For the rest of the paper}, we assume the following conditions on the primitives of the model: 
\begin{assumption} 
\label{Assumption 0}(i) $\pi $ is bounded and (jointly) continuous. $w$ is continuous.\protect\footnote{  Recall that
we endow $\mathcal{P} (X)$ with the weak topology. } 

(ii) $X$ is compact. 

(iii) The correspondence $\Gamma  :X \rightarrow 2^{A}$ is compact-valued and continuous.\protect\footnote{
By continuous we mean both upper hemicontinuous and lower hemicontinuous. } 
\end{assumption}

\subsection{Extensions To The Basic Model} \label{sec:exten}

We note two extensions that can be important in applications for which we can extend our results.

First, in our basic mean field model, we assume that the players are coupled through their states: both the transition function and the payoff function of each player depend on the states of all other players. We note that even in this setting, a player's payoff function can depend on rivals' actions as long as these actions do not affect the evolution of their own state  nor the evolution of the population state. For instance, the players' payoff functions can depend on the {\em static} pricing or quantity decisions of the other players. In Section \ref{Subsec:Quality ladder} we study models in which the firms' (static) actions affect other players' current payoffs but do not affect the evolution of future states.

In certain models of interest such as learning-by-doing and dynamic advertising, however, players' states are coupled through the {\em dynamic} actions, $a_{i,t}$. That is, the actions of other players, $\boldsymbol{a}_{ -\boldsymbol{i} ,\boldsymbol{t} }$, affect a player's transition function and payoff function.
For these cases, we consider a model where the transition function and the payoff function of each player depend on both the states and the actions of all other players. The model is like our original model except that now the probability measure $s$ describes the joint distribution of players over actions and states and not only over states, that is, $s \in \mathcal{P}(X \times A)$. Thus, the transition function $w(x,a,s,\zeta)$ and the payoff function $\pi (x,a,s)$ depend on the joint distribution over state-action pairs $s \in \mathcal{P}(X \times A)$. 

All the results in the paper can be extended to this setting where the population state is a measure on $\mathcal{P}(X \times A)$ (see Section \ref{Sec: Coupling th} in the Appendix for more details). The monotonicity conditions that are needed in order to prove the uniqueness of an MFE in the case that the population is a measure on $\mathcal{P}(X \times A)$ are similar to the conditions that are needed in the case that the population is a measure on $\mathcal{P}(X)$. In Section \ref{sec:advertising} we prove the uniqueness of an MFE for a dynamic advertising model where the players' payoff functions depend on the other players' actions (advertising expenditures), and thus, the population state is a measure on $\mathcal{P}(X \times A)$.

Our second extension relaxes the assumption on our base model that players are ex-ante homogeneous.
To consider players that may be ex-ante heterogeneous with different model primitives, we extend our model to a setting in which each player has a fixed type through out the time horizon that is drawn from a finite set. Then, the payoff function and transition function can depend on this type. We show that all our results hold in this more general setting (see Section \ref{Sec: Ex-ante h} for more details). In particular, we show that if the conditions that we use in order to prove our results hold for every type, then the results are valid for the model with ex-ante heterogeneous players.

\subsection{Mean Field Equilibrium}
In Markov perfect equilibrium (MPE), players' strategies are functions of the population state. However, MPE quickly becomes intractable as the
number of players grows, because the number of possible population states becomes too large. Instead, in a game with a large number of players, we might
expect that idiosyncratic  fluctuations of players' states ``average out'', and hence the actual population state remains roughly constant over time. Because the effect
of other players on a single player's payoff and transition function is only via the population state, it is intuitive that, as the number of players increases,
a single player's effect on the outcome of the game is negligible. Based on this intuition, related schemes for approximating Markov perfect equilibrium
(MPE) have been proposed in different application domains via a solution concept we call mean field equilibrium (MFE). 

Informally, an MFE is a strategy for the players and a population state such that: (1) Each player optimizes her expected discounted payoff assuming that this population
state is fixed; and (2) Given the players' strategy, the fixed population state is an invariant distribution of the states' dynamics.
The interpretation is that a single player conjectures the population state to be $s$. Therefore, in determining her future expected payoff stream, a player considers a payoff function and a transition function evaluated at the fixed population state $s$. In MFE, the conjectured $s$ is the correct one given the strategies being played. MFE alleviates 
the complexity of MPE, because in the former 
the population state is fixed, while in the latter players keep track of the exact evolution of the population state. We refer the reader to the papers cited in Section \ref{sec:intro} for a more detailed motivation and rigorous justifications for using MFE.

Let $X^{t}:=\underbrace{X \times \ldots  \times X}_{t~ \mathrm{t} \mathrm{i} \mathrm{m} \mathrm{e} \mathrm{s}}$. For a fixed population state, a nonrandomized pure strategy $\sigma $ is a sequence of (Borel) measurable functions $(\sigma _{1} ,\sigma _{2} ,\ldots  ,)$ such that $\sigma _{t} :X^{t} \rightarrow A$ and $\sigma _{t} (x_{1} ,\ldots  ,x_{t}) \in \Gamma  (x_{t})$ for all $t \in \mathbb{N}$. That is, a strategy $\sigma $ assigns a feasible action to every finite string of states. Note that a single player's strategy depends only on her own
history of states and does not depend on the population state. This strategy is called an \emph{oblivious} strategy (see \cite{weintraub2008markov}
and \cite{adlakha2015equilibria}). 

For each initial state $x \in X$ and long run average population state $s \in \mathcal{P} (X)$, a strategy $\sigma $ induces a probability measure over the space $X^{\mathbb{N}}$, describing the evolution of a player's state.\protect\footnote{The probability measure on $X^{\mathbb{N}}$ is uniquely defined (see for example \cite{bertsekas1978stochastic}).} We denote the expectation with respect to that probability measure by $\mathbb{E}_{\sigma }$, and the associated states-actions stochastic process by $\{x (t) ,a (t)\}_{t =1}^{\infty }$. 

When a player uses a strategy $\sigma $, the population state is fixed at $s \in \mathcal{P} (X)$, and the initial state is $x \in X$, then the player's expected present discounted value is
\begin{equation*}V_{\sigma } (x ,s) =\mathbb{E}_{\sigma } \left (\sum _{t =1}^{\infty }\beta ^{t -1} \pi  (x (t) ,a (t) ,s) \right ).
\end{equation*}
Denote
\begin{equation*}V(x,s) = \underset{\sigma } {\sup }\;V_{\sigma} (x,s).
\end{equation*}
That is, $V(x,s)$ is the maximal expected payoff that the player can achieve when the initial state is $x$ and the population state is fixed at $s \in \mathcal{P} (X)$. We call $V$ the \emph{value function} and a strategy $\sigma$ attaining it \emph{optimal}. 

Standard dynamic programming arguments (see \cite{bertsekas1978stochastic}) show that the value function satisfies the Bellman equation:
\begin{equation*}V (x ,s) =\underset{a \in \Gamma  (x)}{\max } ~\pi  (x ,a ,s) +\beta  \sum _{j =1}^{n}p_{j} V (w (x ,a ,s ,\zeta _{j}) ,s).
\end{equation*}
Under Assumption 1, there exists an optimal stationary Markov strategy (see Lemma \ref{Lemma 1} in the Appendix). Let $G (x ,s)$ be the optimal stationary strategy correspondence, i.e.,
\begin{equation*}G (x,s) =\underset{a \in \Gamma  (x)}{\ensuremath{\operatorname*{argmax}}}\; \pi  (x ,a ,s) +\beta  \sum _{j =1}^{n}p_{j} V (w (x ,a ,s ,\zeta _{j}) ,s). 
\end{equation*}
Let $\mathcal{B}(X)$ be the Borel $\sigma$-algebra on $X$. For a strategy $g \in G$ and a fixed population state $s \in \mathcal{P} (X)$, the probability that player $i$'s next period's state will lie in a set $B \in \mathcal{B} (X)$, given that her current state is $x \in X$ and she takes the action $a =g (x ,s)$, is:
\begin{equation*}Q_{g} (x ,s ,B) =\Pr  (w (x ,g (x ,s) ,s ,\zeta ) \in B).
\end{equation*}
Now suppose that the population state is $s$, and all players use a stationary strategy $g \in G$. Because of averaging effects, we expect that if the number of players is large, then the long run population state should in
fact be an invariant distribution of the Markov kernel $Q_{g}$ on $X$ that describes the dynamics of an individual player. 

We can now define an MFE. In an MFE, every player
conjectures that $s$ is the fixed long run population state and plays according to a stationary strategy $g$. On the other hand, if every agent plays according to $g$ when the population state is $s$, then the long run population state of all players, $s$, should constitute an invariant distribution of $Q_{g}$.

\begin{definition}
\label{Def MFE}A stationary strategy $g$ and a population state $s \in \mathcal{P} (X)$ constitute an MFE if the following two conditions hold: 

1. Optimality: $g$ is optimal given $s$, i.e., $g (x ,s) \in G (x ,s)$. 

2. Consistency: $s$ is an invariant distribution of $Q_{g}$. That is,
\begin{equation*}s (B) =\int _{X}Q_{g} (x ,s ,B) s (d x).
\end{equation*}for all $B \in \mathcal{B} (X)$, where we take Lebesgue integral with respect to the measure $s$. 
\end{definition}

Under Assumption \ref{Assumption 0} it can be shown that $G (x ,s)$ is nonempty, compact-valued and upper hemicontinuous. The proof is a standard application of the maximum theorem. We provide the proof for completeness (see Lemma \ref{Lemma 1}). In Theorem \ref{Theorem existence} we prove the existence of a population state that satisfies the consistency requirement in Definition \ref{Def MFE}.

\section{Main Results}
In this section we present our main results. In Section 3.1 we provide conditions that ensure the uniqueness of an MFE. In Section 3.2 we prove the existence of an MFE. In Section 3.3 we provide conditions that ensure unambiguous comparative statics results regarding MFE. 

\subsection{The Uniqueness of an MFE } \label{sec:unique}
In this section we present our uniqueness result. 


We recall that a stationary strategy-population state pair $(g ,s)$ is an MFE if and only if $g$ is optimal and $s$ is a fixed point of the operator $\Phi  :\mathcal{P} (X) \rightarrow \mathcal{P} (X)$ defined by
\begin{equation}\Phi  s (B) =\int _{X}Q_{g} (x ,s ,B) s (d x), \label{MFE OPERATOR}
\end{equation}for all $B \in \mathcal{B} (X)$.

We prove uniqueness by showing that the operator $\Phi $ has a unique fixed point. In order to prove uniqueness we will assume that $G$ is single-valued. For the rest of the section we will assume that $g \in G$ is the unique selection from the optimal strategy correspondence $G$. In the next section we provide conditions that ensure that $G$ is indeed single-valued (see Lemma \ref{Lemma concave}). $G$ being single-valued and Theorem \ref{Theorem existence} (see Section 3.2) imply
that $\Phi $ has at least one fixed point. In Theorem \ref{Theorem uniq} we will show that
under certain conditions the operator $\Phi $ has at most one fixed point. 

 We omit the reference to $g$ in $Q_{g} (x ,s ,B)$, i.e., we write $Q (x ,s ,B)$ instead of $Q_{g} (x ,s ,B)$. 
Since the Markov kernel $Q$ depends on $s$, it is complicated to work directly with the operator $\Phi $. Thus, to prove the uniqueness of an MFE and to prove our comparative statics results, we introduce an auxiliary operator that is easier to work with. For each $s \in \mathcal{P} (X)$, define the operator $M_{s} :\mathcal{P} (X) \rightarrow \mathcal{P} (X)\;$by
\begin{equation*}M_{s} \theta  (B) =\int _{X}Q (x ,s ,B) \theta  (d x).
\end{equation*}
We introduce the following useful definition. 

\begin{definition} \label{def:Xerg}
We say that $Q$ is $X$-ergodic if the following two conditions hold: 

(i) For any $s \in \mathcal{P} (X)$, the operator $M_{s}$ has a unique fixed point $\mu _{s}$. 

(ii) $M_{s}^{n} \theta $ converges weakly to $\mu _{s}$ for any probability measure $\theta  \in \mathcal{P} (X)$. 
\end{definition}
Note that $s$ is an MFE if and only if $\mu _{s} =s$ is a fixed point of the operator $M_{s}$. 
$X$-ergodicity means that for every population state $s \in \mathcal{P} (X)$ the players' long-run state is independent of the initial state. The $X$-ergodicity of $Q$ can be established using standard results from the theory of Markov chains in general state spaces (see \cite{meyn2012markov}). When $Q$ is increasing in $x$, which we assume in order to prove the uniqueness of an MFE (see Assumption \ref{Assumption uniq}), then the $X$-ergodicity of $Q$ can be established using results from the theory of monotone Markov chains. These results usually require a splitting condition (see \cite{bhattacharya1988asymptotics} and \cite{hopenhayn1992}) that typically holds in applications of interest. Specifically, in Sections \ref{Sec: DOM} and \ref{Sec: HAMM} we show that $X$-ergodicity holds in important classes of dynamic models. 

We now introduce other notation and definitions that are helpful in proving uniqueness. 
We assume that $X$ is endowed with a closed partial order $ \geq $. In the important case $X =\mathbb{R}^{n}$, $x ,y \in X$ we write $x \geq y$ if $x_{i} \geq y_{i}$ for each $i =1  ,.. , n$. Let $S \subseteq X$. We say that a function $f :S \rightarrow \mathbb{R}$ is increasing if $f (y) \geq f (x)$ whenever $y \geq x$ and we say that $f$ is strictly increasing if $f(y)>f(x)$ whenever $y>x$. 

For $s_{1} ,s_{2} \in \mathcal{P} (X)$ we say that $s_{1}$ stochastically dominates $s_{2}$ and we write $s_{1}  \succeq  _{SD}s_{2}$ if for every increasing function $f :X \rightarrow \mathbb{R}$ we have
\begin{equation*}\int _{X}f (x) s_{1} (dx) \geq \int _{X}f (x) s_{2}(dx),
\end{equation*}when the integrals exist. We say that $B \in \mathcal{B} (X)$ is an upper set if $x_{1} \in B$ and $x_{2} \geq x_{1}$ imply $x_{2} \in B$. Recall from   \cite{kamae1977stochastic}
that $s_{1}  \succeq  _{SD}s_{2}$ if and only if for every upper set $B$ we have $s_{1} (B) \geq s_{2} (B)$. 

In addition, for the rest of the section we will assume that there exists a binary relation $ \succeq $ on $\mathcal{P} (X)$, such that $s_{2} \sim s_{1}$ (i.e., $s_{2} \succeq s_{1}$ and $s_{1} \succeq s_{2})$) implies $\pi  (x ,a ,s_{1}) =\pi  (x ,a ,s_{2})$ for all $(x ,a) \in X \times A$ and $w (x ,a ,s_{1} ,\zeta ) =w (x ,a ,s_{2} ,\zeta )$ for all $(x ,a ,\zeta ) \in X \times A \times E$. 

Note that such binary relation always exists, for example one can take $s_{2} \sim s_{1} \Leftrightarrow s_{2} = s_{1}$.  For our uniqueness result we will further require that the binary relation $ \succeq $ on $\mathcal{P} (X)$ is complete, that is, for all $s_{1} ,s_{2} \in \mathcal{P} (X)$ we either have $s_{1} \succeq s_{2}$ or $s_{2} \succeq s_{1}$.   In many applications (see Section \ref{Sec: DOM} and Section \ref{Sec: HAMM})
there exists a function $H :\mathcal{P} (X) \rightarrow \mathbb{R}$ such that $\tilde{\pi}  (x ,a ,H (s)) =\pi(x ,a ,s)$ and $\tilde{w} (x ,a ,H (s) ,\zeta )=w(x,a ,s ,\zeta )$, where $H$ is continuous and increasing with respect to the stochastic dominance order $ \succeq  _{SD}$. In this case, a natural complete order $ \succeq $ on $\mathcal{P} (X)$ arises by defining $s_{1} \succeq s_{2}$ if and only if $H (s_{1}) \geq H (s_{2})$. Below, we also discuss the case of a non-complete order. We say that $ \succeq $ \emph{agrees with} $ \succeq  _{SD}$ if for any $s_{1} ,s_{2} \in \mathcal{P} (X)$, $s_{1}  \succeq  _{SD}s_{2}$ implies $s_{1} \succeq s_{2}$.

We  say that $Q$ is increasing in $x$ if for each $s \in \mathcal{P} (X)$, we have $Q (x_{2} ,s , \cdot ) \succeq  _{SD}Q (x_{1} ,s , \cdot )$ whenever $x_{2} \geq x_{1}$. In addition, we say that $Q$ is decreasing in $s$ if for each $x \in X$, we have $Q (x ,s_{1} , \cdot ) \succeq  _{SD}Q (x ,s_{2} , \cdot )$ whenever $s_{2} \succeq s_{1}$. We now state the main theorem of the paper. We show that if $Q$ is $X$-ergodic, $Q$ is increasing in $x$ and decreasing in $s$, and $ \succeq $ is complete and agrees with $ \succeq  _{SD}$, then if an MFE exists, it is unique. 

Intuitively, $Q$ decreasing in $s$ implies that the probability that a player will move to a higher state in the next period is decreasing in the current period's
population state. If there are two MFEs, $s_{2}$ and $s_{1}$, such that $s_{2} \succeq s_{1}$ (i.e., $s_{2}$ is ``higher'' than $s_{1}$), then the probability of moving to a higher state under $s_{2}$ is lower than under $s_{1}$, which is not consistent with $s_{2} \succeq s_{1}$, with the definition of an MFE,  and the fact that $ \succeq $ agrees with $ \succeq  _{SD}$.\footnote{In some models, the condition that $Q$ is decreasing in $s$ follows from the fact that the policy function $g$ is decreasing in the population state $s$ (see Section \ref{Sec: DOM}).  \cite{xu2013supermarket} prove the uniqueness of an equilibrium in a supermarket mean field game under a similar monotonicity condition on the policy function. Their setting is different from ours because the players do not have individual states nor they dynamically optimize.}

\begin{assumption}
\label{Assumption uniq} 

(i) $Q$ is $X$-ergodic. $Q$ is increasing in $x$ and decreasing in $s$. 

(ii) $ \succeq $ agrees with $ \succeq  _{SD}$. 

(iii) $G$ is single-valued. 
\end{assumption}

\begin{theorem}
\label{Theorem uniq}Suppose that Assumption \ref{Assumption uniq}
holds. If the binary relation $ \succeq $ is complete, then if an MFE exists, it is unique. 
\end{theorem}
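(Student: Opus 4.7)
The plan is to argue by contradiction: suppose $s_1$ and $s_2$ are two MFEs, and use completeness of $\succeq$ to assume without loss of generality that $s_2 \succeq s_1$. I want to show $s_1 = s_2$ by first deducing that $s_1 \succeq_{SD} s_2$, which (by compatibility of $\succeq$ with $\succeq_{SD}$) forces $s_1 \sim s_2$, and then exploiting the definition of $\sim$ to conclude that $s_1$ and $s_2$ induce the same Markov kernel and hence the same unique invariant distribution.

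The core of the argument is the following inductive claim: for every $\theta \in \mathcal{P}(X)$ and every $n \geq 1$,
\begin{equation*}
M_{s_1}^n \theta \succeq_{SD} M_{s_2}^n \theta.
\end{equation*}
I would prove this in two sub-steps. First, because $Q$ is increasing in $x$, for each fixed $s$ the operator $M_s$ is \emph{monotone} with respect to $\succeq_{SD}$: if $\theta \succeq_{SD} \theta'$, then for every upper set $B$ the map $x \mapsto Q(x,s,B)$ is increasing, so $M_s \theta(B) \geq M_s \theta'(B)$. Second, because $Q$ is decreasing in $s$ and $s_2 \succeq s_1$, for every $\theta$ we get $M_{s_1}\theta \succeq_{SD} M_{s_2}\theta$, again by integrating $Q(x,s_1,B) \geq Q(x,s_2,B)$ against $\theta$ on upper sets $B$. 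Combining these two facts, $M_{s_1}^{n+1}\theta = M_{s_1}(M_{s_1}^n \theta) \succeq_{SD} M_{s_1}(M_{s_2}^n \theta) \succeq_{SD} M_{s_2}(M_{s_2}^n \theta) = M_{s_2}^{n+1}\theta$ completes the induction.

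Next I pass to the limit. By $X$-ergodicity, $M_{s_i}^n \theta$ converges weakly to $\mu_{s_i}$ for $i=1,2$; and since each $s_i$ is an MFE, $\mu_{s_i}=s_i$. Stochastic dominance is preserved under weak limits (this is the Kamae--Krengel--O'Brien characterization of $\succeq_{SD}$ via bounded continuous increasing test functions, which the paper already invokes), so the inductive inequality passes to the limit to give $s_1 \succeq_{SD} s_2$. Because $\succeq$ agrees with $\succeq_{SD}$, this yields $s_1 \succeq s_2$; together with the hypothesis $s_2 \succeq s_1$, we obtain $s_1 \sim s_2$.

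Finally, I use the standing assumption on $\sim$ to close the loop. By definition of $\sim$, $\pi(x,a,s_1)=\pi(x,a,s_2)$ and $w(x,a,s_1,\zeta)=w(x,a,s_2,\zeta)$ for all $(x,a,\zeta)$. Therefore the Bellman equation at $s_1$ coincides with the one at $s_2$, so the value functions agree, and by Assumption \ref{Assumption uniq}(iii) the unique optimal policies agree: $g(\cdot,s_1)=g(\cdot,s_2)$. Since $Q(x,s,B)=\Pr(w(x,g(x,s),s,\zeta)\in B)$ then depends on $s\in\{s_1,s_2\}$ only through $w$ and $g$, we get $M_{s_1}=M_{s_2}$, and by uniqueness of the fixed point in Definition \ref{def:Xerg}(i) we conclude $s_1=\mu_{s_1}=\mu_{s_2}=s_2$. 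The main obstacle is really the monotone-coupling/weak-limit step: the inductive inequality is purely formal, but transferring $\succeq_{SD}$ across the weak limit requires care because increasing functions need not be continuous, so I would explicitly reduce to bounded continuous increasing test functions (or equivalently to closed upper sets combined with a portmanteau argument) to justify the limiting inequality.
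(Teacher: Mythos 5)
Your proposal is correct and follows essentially the same route as the paper's proof: the same two monotonicity inequalities (from $Q$ decreasing in $s$ and increasing in $x$) establish the iterated dominance $M_{s_1}^n\theta \succeq_{SD} M_{s_2}^n\theta$, the limit is taken via $X$-ergodicity and closedness of $\succeq_{SD}$ under weak convergence, and the conclusion $s_1 \sim s_2 \Rightarrow Q(\cdot,s_1,\cdot)=Q(\cdot,s_2,\cdot) \Rightarrow s_1=s_2$ is identical. The only cosmetic difference is that you run the induction from a single initial measure $\theta$ rather than a pair $\theta_1 \succeq_{SD} \theta_2$, which is immaterial given ergodicity.
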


\begin{proof}
Let $\theta _{1} ,\theta _{2} \in \mathcal{P} (X)$ and assume that $\theta _{1}  \succeq  _{SD}\theta _{2}$. Let $B$ be an upper set and let $s_{1} ,s_{2}$ be two MFEs such that $s_{2} \succeq s_{1}$. We have
\begin{align*}M_{s_{2}} \theta _{2} (B) &  =\int _{X}Q (x ,s_{2} ,B) \theta _{2} (dx) \\
 &  \leq \int _{X}Q (x ,s_{1} ,B) \theta _{2} (dx) \\
 &  \leq \int _{X}Q (x ,s_{1} ,B) \theta _{1} (dx) \\
 &  =M_{s_{1}} \theta _{1} (B). \end{align*}
 Thus, for any upper set $B$ we have $M_{s_{2}} \theta _{2} (B) \leq M_{s_{1}} \theta _{1} (B)$ which implies that $M_{s_{1}} \theta _{1}  \succeq  _{SD}M_{s_{2}} \theta _{2}$. The first inequality follows from the fact that $Q (x ,s ,B)$ is decreasing in $s$ for an upper set $B$ and all $x$. The second inequality follows from the fact that $\theta _{1}  \succeq  _{SD}\theta _{2}$ and $Q (x ,s ,B)$ is increasing in $x$ for an upper set $B$ and any $s$. 

We conclude that $M_{s_{1}}^{n} \theta _{1}  \succeq  _{SD}M_{s_{2}}^{n} \theta _{2}$ for all $n \in \mathbb{N}$. $Q$ being $X$-ergodic implies that $M_{s_{i}}^{n} \theta _{i}$ converges weakly\protect\footnote{  Recall that $\mu _{s}$ is the unique fixed point of $M_{s}$ and that $s$ is an MFE if and only if $\mu _{s} =s$. } to $\mu _{s_{i}} =s_{i}$. Since $ \succeq  _{SD}$ is closed under weak convergence (see \cite{kamae1977stochastic}), we have $s_{1}  \succeq  _{SD}s_{2}$. 

We conclude that if $s_{1}$ and $s_{2}$ are two MFEs such that $s_{2} \succeq s_{1}$, then $s_{1}  \succeq  _{SD}s_{2}$. Since $ \succeq $ agrees with $ \succeq  _{SD}$, we have $s_{1} \succeq s_{2}$. That is, $s_{1} \sim s_{2}$, which implies that $\pi  (x ,a ,s_{1}) =\pi  (x ,a ,s_{2})$ and $w (x ,a ,s_{1} ,\zeta ) =w (x ,a ,s_{2} ,\zeta )$. Thus, under $s_{1}$ the players play according to the same strategy as under $s_{2}$ (i.e., $g (x ,s_{1}) =g (x ,s_{2})$ for all $x \in X$). We conclude that $Q (x ,s_{1} ,B) =Q (x ,s_{2} ,B)$ for all $x \in X$ and $B \in \mathcal{B} (X)$. $X$-ergodicity of $Q$ implies that $M_{s_{1}}$ and $M_{s_{2}}$ have a unique fixed point. Thus, $\mu _{s_{1}} =\mu _{s_{2}}$, i.e., $s_{1} =s_{2}$. Similarly, we can show that $s_{1} \succeq s_{2}$ implies that $s_{1} =s_{2}$. 

Since $ \succeq $ is complete if $s_{1}$ and $s_{2}$ are two MFEs we have $s_{2} \succeq s_{1}$ or $s_{1} \succeq s_{2}$. Thus, we proved that if $s_{1}$ and $s_{2}$ are two MFEs then $s_{1} =s_{2}$. We conclude that if an MFE exists, it is unique. 
\end{proof}

The assumptions on $Q$ in Theorem \ref{Theorem uniq} involve assumptions on the optimal strategy $g$. Thus, these assumptions are not over the primitives of the model. In Section \ref{Sec: DOM}
we introduce conditions on the primitives of dynamic oligopoly models that guarantee the uniqueness of an MFE. In particular, we show that the monotonicity
conditions over $Q$ arise naturally in important classes of these models. In Section \ref{Sec: HAMM} we apply our result to prove the uniqueness of an MFE in heterogeneous agent macro models.

In some applications the assumption that the binary relation $ \succeq $ is complete is restrictive. In the case that $ \succeq $ is not complete and Assumption \ref{Assumption uniq} holds, the following Corollary shows that the MFEs are not comparable by the binary relation $ \succeq $. 
This Corollary can be used to derive properties on the MFE when there are multiple MFEs. For example, suppose that there exist two functions $H_{i}:\mathcal{P}(X) \rightarrow\mathbb{R}$, $i =1 ,2$ such that $\tilde{\pi} (x ,a ,H_{1}(s))=\pi (x ,a ,s)$ and $\tilde{w}(x ,a ,H_{2}(s),\zeta )=w (x,a ,s ,\zeta )$, where $H_{i}$ is continuous and increasing with respect to the stochastic dominance order $ \succeq  _{SD}$. We can define an order $ \succeq $ on $\mathcal{P} (X)$ by defining $s_{1} \succeq s_{2}$ if $H_{1}(s_{1}) \geq H_{1}(s_{2})$ and $H_{2}(s_{1}) \geq H_{2}(s_{2})$. Clearly, this may not be a complete order. The following Corollary provides conditions that imply that if $s_{1}$ and $s_{2}$ are two MFEs, then it cannot be the case that $H_{1}(s_{1}) > H_{1}(s_{2})$ and $H_{2}(s_{1}) > H_{2}(s_{2})$. We write $s_{1} \succ s_{2}$ if $s_{1} \succeq s_{2}$ and $s_{2}\nsucceq s_{1}$.

\begin{corollary}
Suppose that Assumption \ref{Assumption uniq} holds. If $s_{1}$ and $s_{2}$ are two MFEs then $s_{1} \nsucc s_{2}$ and $s_{2} \nsucc s_{1}$. 
\end{corollary}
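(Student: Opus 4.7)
The plan is to observe that completeness of $\succeq$ was invoked in the proof of Theorem~\ref{Theorem uniq} only in the very last sentence, to guarantee that two arbitrary MFEs are comparable under $\succeq$. Everything that comes before that sentence establishes, using only parts~(i)--(iii) of Assumption~\ref{Assumption uniq}, the following stronger fact: for any two MFEs $s_a$ and $s_b$, if $s_a \succeq s_b$ then $s_b \succeq s_a$ (and in fact $s_a=s_b$). I would extract this as an intermediate claim and then deduce the corollary from it by a one-line contradiction.

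To justify the intermediate claim I would simply rerun the argument of Theorem~\ref{Theorem uniq} verbatim, relabeling $s_2 \leftrightarrow s_a$ and $s_1 \leftrightarrow s_b$. Concretely: pick any $\theta_1 \succeq_{SD} \theta_2$, and for an arbitrary upper set $B$ combine $Q$ decreasing in $s$ (applied to the pair $s_a \succeq s_b$) with $Q$ increasing in $x$ (applied to $\theta_1 \succeq_{SD} \theta_2$) to obtain
\begin{equation*}
M_{s_a}\theta_2(B)=\int_X Q(x,s_a,B)\,\theta_2(dx)\le\int_X Q(x,s_b,B)\,\theta_2(dx)\le\int_X Q(x,s_b,B)\,\theta_1(dx)=M_{s_b}\theta_1(B).
\end{equation*}
Iterating gives $M_{s_b}^n\theta_1 \succeq_{SD} M_{s_a}^n\theta_2$ for all $n$; by $X$-ergodicity both iterates converge weakly to $\mu_{s_b}=s_b$ and $\mu_{s_a}=s_a$ respectively, and since $\succeq_{SD}$ is closed under weak convergence we conclude $s_b \succeq_{SD} s_a$. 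Because $\succeq$ agrees with $\succeq_{SD}$, this yields $s_b\succeq s_a$, hence $s_a\sim s_b$. The final collapse is also as in the proof of Theorem~\ref{Theorem uniq}: the defining property of $\sim$ gives $\pi(\cdot,\cdot,s_a)=\pi(\cdot,\cdot,s_b)$ and $w(\cdot,\cdot,s_a,\cdot)=w(\cdot,\cdot,s_b,\cdot)$, so $g(\cdot,s_a)=g(\cdot,s_b)$, whence $M_{s_a}=M_{s_b}$ and $X$-ergodicity forces $s_a=s_b$.

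Granted this, the corollary is an immediate contradiction. Suppose $s_1 \succ s_2$, i.e.\ $s_1\succeq s_2$ and $s_2\nsucceq s_1$. Applying the intermediate claim with $s_a=s_1$ and $s_b=s_2$ delivers $s_2\succeq s_1$, contradicting $s_2\nsucceq s_1$. The symmetric case $s_2\succ s_1$ is excluded identically. The main difficulty, such as it is, is purely expository: making clear that the monotonicity chain in Theorem~\ref{Theorem uniq} is genuinely symmetric in the two MFE labels, so that the single direction $s_1\succeq s_2$ that we have in hand suffices to run the whole argument without ever appealing to completeness.
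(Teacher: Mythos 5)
Your proposal is correct and follows essentially the same route as the paper: the paper's own proof of the corollary likewise observes that the monotonicity-and-ergodicity chain in Theorem~\ref{Theorem uniq} needs only the one comparison $s_2 \succeq s_1$ (not completeness) to deliver $s_1 \succeq_{SD} s_2$ and hence $s_1 \succeq s_2$, contradicting $s_2 \succ s_1$. Your version merely spells out the relabeled chain explicitly and pushes on to the full collapse $s_a = s_b$, which is slightly more than the contradiction requires but changes nothing substantive.
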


\begin{proof}
Suppose, in contradiction, that $s_{2} \succ s_{1}$. The argument in the proof of Theorem \ref{Theorem uniq} implies that $s_{1}  \succeq  _{SD}s_{2}$. Since $ \succeq $ agrees with $ \succeq  _{SD}$, we have $s_{1} \succeq s_{2}$, which is a contradiction. We conclude that $s_{2} \nsucc s_{1}$. Similarly, we can show that $s_{1} \nsucc s_{2}$. 
\end{proof}

When the state space $X$ is given by the product space $X=X_{1} \times X_{2}$ where $X_{1}$ and $X_{2}$ are separable metric spaces, a modification of our uniqueness result can be applied to prove the uniqueness of an MFE under slightly different conditions than the conditions of Assumption \ref{Assumption uniq}. 

Assumption \ref{Assumption uniq} requires that $Q$ be increasing in $x$ on $X$. However, when $X=X_{1} \times X_{2}$, and $X_{i}$ is endowed with the closed partial order $ \geq_{i}$, it is enough to assume that $Q$ is increasing in $x_{i}$ on $X_{i}$ for some $i=1,2$ to prove that the MFE is unique. We say that $Q$ is increasing in $x_{1}$ if for all functions $f :X_{1} \times X_{2} \rightarrow \mathbb{R}$ that are increasing in $x_{1}$ on $X_{1}$, for all $s \in \mathcal{P} (X)$, and for all $x_{2} \in X_{2}$, the function 
\begin{equation}\label{Eq:increasing1}
    \int_{X} f(y_{1} ,y_{2})Q((x_{1} ,x_{2}) ,s ,d(y_{1} ,y_{2}))
    \end{equation}
    is increasing in $x_{1}$. 
Similarly, $Q$ is decreasing in $s$ with respect to $x_{1}$ if for all functions $f :X_{1} \times X_{2} \rightarrow \mathbb{R}$ that are increasing in $x_{1}$on $X_{1}$ and for all $x \in X$ the function in (\ref{Eq:increasing1}) is decreasing in $s$.
In Sections \ref{Sec:dynamic rep} and \ref{Sec: HAMM} we show the usefulness of Theorem \ref{Corr:product space}. We establish the uniqueness of an MFE for dynamic reputation models and heterogeneous agent macro models by proving that $Q$ is increasing in $x_{i}$ for some $i=1,2$. In these models it is not necessarily true that $Q$ is increasing in $x$ on $X$, so Theorem \ref{Theorem uniq} cannot be applied directly. The Appendix contains the proofs not presented in the main text. 

\begin{theorem}\label{Corr:product space}
Suppose that $X=X_{1} \times X_{2}$. Suppose that Assumption \ref{Assumption uniq} holds,  apart from the condition that $Q$ is increasing in $x$ and decreasing in $s$. Suppose that $Q$ is increasing in $x_{i}$ and decreasing in $s$ with respect to $x_{i}$ for some $i=1,2$. If the binary relation $ \succeq $ is complete, then if an MFE exists, it is unique. 
\end{theorem}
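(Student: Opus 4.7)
The plan is to reduce Theorem \ref{Corr:product space} to Theorem \ref{Theorem uniq} by endowing $X=X_1\times X_2$ with a weaker partial order tailored to the coordinate in which $Q$ is monotone. Without loss of generality take $i=1$ and define $(y_1,y_2)\ge^1(x_1,x_2)$ iff $y_1\ge x_1$ and $y_2=x_2$. Under $\ge^1$, the upper Borel sets are precisely the ``$x_1$-upper'' sets $B$ (those with $(x_1,x_2)\in B$ and $y_1\ge x_1$ implying $(y_1,x_2)\in B$), and a bounded measurable $f$ is increasing iff for each fixed $x_2$ the map $x_1\mapsto f(x_1,x_2)$ is increasing. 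Denote the induced stochastic dominance on $\mathcal{P}(X)$ by $\succeq^1_{SD}$. Because every function increasing in the product order is also $\ge^1$-increasing, $\succeq^1_{SD}$ refines the product-order $\succeq_{SD}$, and consequently any $\succeq$ agreeing with the product-order $\succeq_{SD}$ also agrees with $\succeq^1_{SD}$.

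With this choice of order, Assumption \ref{Assumption uniq}(i) reads exactly as the hypotheses of Theorem \ref{Corr:product space}: ``$Q$ increasing in $x$'' becomes ``$Q$ increasing in $x_1$.'' The remaining parts of Assumption \ref{Assumption uniq} carry over with the same $\ge^1$ order: $Q$ decreasing in $s$ in $\succeq^1_{SD}$, $\succeq$ agreeing with $\succeq^1_{SD}$, $G$ single-valued, and $Q$ being $X$-ergodic. The plan is then to rerun the proof of Theorem \ref{Theorem uniq} verbatim. Concretely, for an $x_1$-upper Borel set $B$, MFEs $s_2\succeq s_1$, and $\theta_1\succeq^1_{SD}\theta_2$, I would produce the chain
\begin{equation*}
M_{s_2}\theta_2(B)=\int Q(x,s_2,B)\,\theta_2(dx)\leq\int Q(x,s_1,B)\,\theta_2(dx)\leq\int Q(x,s_1,B)\,\theta_1(dx)=M_{s_1}\theta_1(B),
\end{equation*}
where the first inequality follows from $Q$ being decreasing in $s$ evaluated on $x_1$-upper sets, and the second from the fact that $x\mapsto Q(x,s_1,B)$ is increasing in $x_1$ with $x_2$ fixed (the defining property of $Q$ increasing in $x_1$). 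Iterating yields $M_{s_1}^n\theta\succeq^1_{SD}M_{s_2}^n\theta$ for any initial $\theta\in\mathcal{P}(X)$; $X$-ergodicity together with the closure of $\succeq^1_{SD}$ under weak convergence then gives $s_1\succeq^1_{SD}s_2$, hence $s_1\succeq s_2$ by agreement. Combined with $s_2\succeq s_1$ one has $s_1\sim s_2$, and the final step of Theorem \ref{Theorem uniq} (equivalent $\pi$ and $w$, coincidence of the single-valued optimal policies, identical Markov kernels, then $X$-ergodicity) concludes $s_1=s_2$, with completeness of $\succeq$ handling any pair of MFEs.

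The main obstacle is the interpretation of ``$Q$ decreasing in $s$'' with respect to the refined order $\succeq^1_{SD}$ rather than the product-order $\succeq_{SD}$, which is a strictly stronger requirement on $Q$ and is the price paid for weakening the monotonicity in $x$. In the applications of Sections \ref{Sec:dynamic rep} and \ref{Sec: HAMM} this refined monotonicity should be automatic, typically because the $X_2$-coordinate evolves through an exogenous Markov kernel independent of $s$ so that the transition factorizes and the decreasing-in-$s$ property reduces to a condition on the $X_1$-component alone.
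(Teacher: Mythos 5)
Your proposal is correct and follows essentially the same route as the paper: the paper likewise introduces the coordinatewise dominance relation $\succeq_{SD,X_{1}}$ (defined via test functions increasing in the first argument, i.e., exactly the stochastic order induced by your $\ge^{1}$), observes that agreement with $\succeq_{SD}$ implies agreement with this finer relation, and then reruns the chain of inequalities and the $X$-ergodicity argument from Theorem \ref{Theorem uniq}. Your explicit remark that ``$Q$ decreasing in $s$'' must be read against the refined order is a subtlety the paper's proof passes over silently (its second inequality is asserted from the product-order version of the hypothesis), and, as you note, it is harmless in the intended applications where the $X_{2}$-transition is exogenous and independent of $s$.
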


\subsection{The Existence of an MFE } \label{sec:exist}
In this section we study the existence of an MFE. We show that if $G$ is single-valued, then the operator $\Phi $ defined in Equation (\ref{MFE OPERATOR}) has a fixed point and thus, there exists
an MFE.

\begin{theorem}
\label{Theorem existence}Assume that $G\;$is single-valued. There exists a mean field equilibrium. 
\end{theorem}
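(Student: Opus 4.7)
The plan is to realize any MFE as a fixed point of the operator $\Phi$ defined in (\ref{MFE OPERATOR}) and to establish existence of such a fixed point via Schauder's fixed-point theorem. The first step is to note that $\mathcal{P}(X)$, endowed with the topology of weak convergence, is a convex, compact, metrizable subset of the space of signed Borel measures on $X$: compactness follows from Prokhorov's theorem because $X$ is compact (Assumption~\ref{Assumption 0}(ii)), and metrizability (via, e.g., the Prokhorov metric) follows from separability of $X$. Moreover $\Phi$ sends $\mathcal{P}(X)$ into itself because $Q(x,s,\cdot)$ is a probability measure for every $(x,s)$. So the only remaining requirement for Schauder is continuity of $\Phi$ in the weak topology.

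For continuity, I would leverage the single-valuedness of $G$. By the maximum theorem (Lemma~\ref{Lemma 1}), $G$ is upper hemicontinuous on $X\times\mathcal{P}(X)$, and a single-valued upper hemicontinuous correspondence is in fact a continuous function, so $g:X\times\mathcal{P}(X)\to A$ is continuous. Fix a bounded continuous test function $f:X\to\mathbb{R}$ and define
\[
\phi(x,s) \;=\; \sum_{j=1}^{n} p_j\, f\bigl(w(x,g(x,s),s,\zeta_j)\bigr).
\]
Continuity of $w$ (Assumption~\ref{Assumption 0}(i)), $g$, and $f$, together with the fact that $\zeta$ has finite support, make $\phi$ jointly continuous on the compact metric product $X\times\mathcal{P}(X)$, hence uniformly continuous. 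Thus if $s_n\to s$ weakly, then $\phi(\cdot,s_n)\to \phi(\cdot,s)$ uniformly on $X$. Writing $\int_X f\,d(\Phi s_n)=\int_X \phi(x,s_n)\,s_n(dx)$ and splitting
\[
\int \phi(\cdot,s_n)\,ds_n - \int \phi(\cdot,s)\,ds \;=\; \int \bigl(\phi(\cdot,s_n)-\phi(\cdot,s)\bigr)\,ds_n \;+\; \Bigl(\int \phi(\cdot,s)\,ds_n - \int \phi(\cdot,s)\,ds\Bigr),
\]
the first summand vanishes by the uniform convergence just established, and the second vanishes by weak convergence of $s_n$ to $s$ applied to the bounded continuous function $\phi(\cdot,s)$. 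This yields $\Phi s_n\to \Phi s$ weakly.

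Schauder's theorem then produces $s^{\ast}\in\mathcal{P}(X)$ with $\Phi s^{\ast}=s^{\ast}$, and the pair $(g(\cdot,s^{\ast}),s^{\ast})$ satisfies both clauses of Definition~\ref{Def MFE}, hence is an MFE. The hard part is exactly the continuity step, and for two reasons: first, single-valuedness of $G$ is used crucially to upgrade Berge upper hemicontinuity to genuine continuity of $g$ (without it, $\Phi$ would be a correspondence and one would need Kakutani--Fan--Glicksberg plus convexity arguments that can fail here); second, the variable $s$ appears simultaneously inside the transition $Q(x,s,\cdot)$ and as the measure against which we integrate, so weak convergence alone does not suffice, and it is compactness of $X\times\mathcal{P}(X)$ that lets us promote joint continuity of $\phi$ to uniform continuity and thereby decouple the two occurrences of $s$.
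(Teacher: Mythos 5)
Your proposal is correct and follows the paper's overall architecture exactly: realize an MFE as a fixed point of $\Phi$, note that $\mathcal{P}(X)$ is a compact convex subset of a locally convex space, prove weak continuity of $\Phi$, and invoke Schauder (the paper uses the Schauder--Tychonoff theorem). The only genuine divergence is in the continuity step, which both you and the authors identify as the crux. The paper reduces the problem to showing that $k_n(x):=\sum_j p_j f(w(x,g(x,s_n),s_n,\zeta_j))$ converges \emph{continuously} to $k(x):=\sum_j p_j f(w(x,g(x,s),s,\zeta_j))$ (i.e., $k_n(x_n)\to k(x)$ whenever $x_n\to x$) and then cites a generalized bounded convergence theorem of Serfozo (1982) to conclude $\int k_n\,ds_n\to\int k\,ds$. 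You instead observe that $\phi(x,s)$ is jointly continuous on the compact metric product $X\times\mathcal{P}(X)$, hence uniformly continuous, so that $\phi(\cdot,s_n)\to\phi(\cdot,s)$ uniformly on $X$, and you then decouple the two occurrences of $s_n$ by the standard two-term splitting. These are two proofs of the same fact: on a compact product, continuous convergence of $\phi(\cdot,s_n)$ is equivalent to uniform convergence, so your argument is in effect an elementary, self-contained proof of the special case of Serfozo's theorem that the paper actually needs. What the paper's route buys is generality --- Serfozo's result does not require compactness of $X$ and would survive in settings where one only has tightness; what your route buys is the elimination of an external citation at no cost, since Assumption~\ref{Assumption 0}(ii) already guarantees compactness of $X$ and hence of $X\times\mathcal{P}(X)$. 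Both proofs also rest on the same preliminary facts (single-valued plus upper hemicontinuous implies $g$ continuous, and the change-of-variables identity $\int f\,d(\Phi s)=\int\phi(x,s)\,s(dx)$), which you state correctly.
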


Note that we do not impose Assumption \ref{Assumption uniq} for this result. 
Also note that $X$ can be any compact separable metric space in the proof of Theorem \ref{Theorem existence},
so the existence result holds for the important cases of finite state spaces, countable state spaces,  and $X \subseteq \mathbb{R}^{n}$. In addition, the proof of existence does not depend on the number of players in the game; the number of players in the game can be
finite, countable or uncountable. Finally, we note that we do not require $X$-ergodicity (see Definition \ref{def:Xerg}) to show existence; instead we use compactness and continuity (see Assumption \ref{Assumption 0}). The main challenge to prove existence 
is to prove the weak continuity of the nonlinear MFE operator. To do so, we leverage a generalized version of the bounded convergence theorem by \cite{serfozo1982convergence}. 

We now provide conditions over the model primitives that guarantee that $G$ is single-valued when $X$ is a convex set in $\mathbb{R}^{n}$. Similar conditions have been used in dynamic oligopoly models.\footnote{For similar results in a countable state space
setting see \cite{adlakha2015equilibria} and \cite{doraszelski2010computable}).}

\begin{assumption} \label{Assumption exist}
\label{Assumption single valued}Suppose that $X \subseteq \mathbb{R}^{n}$ and is convex. 

(i) Assume that $\pi  (x ,a ,s)$ is concave in $(x ,a)$, strictly concave in $a$ and increasing in $x$ for each $s \in \mathcal{P} (X)$. 

(ii) Assume that $w$ is increasing in $x$ and concave in $(x ,a)$ for each $\zeta  \in E$. 

(iii) $\Gamma  (x)$ is convex-valued and increasing in the sense that $x_{2} \geq x_{1}$ implies $\Gamma  (x_{2}) \supseteq \Gamma  (x_{1})$. 
\end{assumption}

The following Lemma shows that the preceding conditions on the primitives of the model ensure that
$G$ is single-valued.

\begin{lemma}
\label{Lemma concave}Suppose that Assumption \ref{Assumption single valued}
holds. Then $G$ is single-valued. 
\end{lemma}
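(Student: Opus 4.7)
The plan is to reduce uniqueness of the maximizer in the Bellman equation to strict concavity of its right-hand side in $a$ over the convex feasible set $\Gamma(x)$. Concretely, for each fixed $s$, I will show that the value function $V(\cdot,s)$ is concave and increasing in $x$; together with Assumption \ref{Assumption single valued}(i)--(ii), this will make the Bellman objective strictly concave in $a$, and since $\Gamma(x)$ is convex, the argmax must be a singleton.

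First I would argue that $V(\cdot,s)$ is concave and increasing in $x$ via value iteration. Define the operator
\begin{equation*}
(T_{s}f)(x) = \sup_{a\in\Gamma(x)} \pi(x,a,s) + \beta\sum_{j=1}^{n} p_{j}\, f(w(x,a,s,\zeta_{j}))
\end{equation*}
acting on bounded Borel functions on $X$. By the boundedness of $\pi$ and $\beta<1$, $T_{s}$ is a contraction on $(C_{b}(X),\|\cdot\|_{\infty})$, with the unique fixed point $V(\cdot,s)$. I would then show that $T_{s}$ leaves invariant the closed cone of concave and increasing functions. Monotonicity preservation follows because $\pi$ is increasing in $x$, $w$ is increasing in $x$ (so $f\circ w(\cdot,a,s,\zeta_{j})$ is increasing whenever $f$ is), and $\Gamma(\cdot)$ is increasing in the weak set order (Assumption \ref{Assumption single valued}(iii)). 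Concavity preservation is the delicate step: for $x_{\lambda}=\lambda x_{1}+(1-\lambda)x_{2}$ and near-optimal actions $a_{1}\in\Gamma(x_{1})$, $a_{2}\in\Gamma(x_{2})$, I need $\lambda a_{1}+(1-\lambda)a_{2}\in\Gamma(x_{\lambda})$; this is the graph-convexity content built into Assumption \ref{Assumption single valued}(iii) (convex-valuedness together with the monotonic set inclusion). Once feasibility is in place, concavity of $\pi$ in $(x,a)$ and concavity of $w$ in $(x,a)$, composed with a concave increasing $f$, yield concavity of the operand in $(x,a)$; taking the sup over a convex-graph correspondence preserves concavity in $x$. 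Starting value iteration at $V_{0}\equiv 0$ and passing to the uniform limit gives $V(\cdot,s)$ concave and increasing.

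Second, using the established properties of $V$, I would show that the Bellman right-hand side
\begin{equation*}
\Psi_{s}(x,a) \;=\; \pi(x,a,s) + \beta\sum_{j=1}^{n} p_{j}\, V(w(x,a,s,\zeta_{j}),s)
\end{equation*}
is strictly concave in $a$ for each $(x,s)$. Strict concavity of $\pi(x,\cdot,s)$ is Assumption \ref{Assumption single valued}(i). For each $j$, $w(x,\cdot,s,\zeta_{j})$ is concave in $a$ by Assumption \ref{Assumption single valued}(ii), and since $V(\cdot,s)$ is concave and componentwise increasing, the composition $a\mapsto V(w(x,a,s,\zeta_{j}),s)$ is concave in $a$. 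A strictly concave function plus a concave function is strictly concave, so $\Psi_{s}(x,\cdot)$ is strictly concave on the convex set $\Gamma(x)$, forcing the argmax $G(x,s)$ to be a single point.

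The main obstacle is the concavity preservation in value iteration, specifically the feasibility of convex combinations of actions at different states. Everything else is a standard dynamic programming calculation: increasingness is a monotone-selection argument, and strict concavity of $\Psi_{s}(x,\cdot)$ follows cleanly once $V(\cdot,s)$ is known to be concave and monotone so that the increasing monotone-composition trick applies to $V\circ w$. With these ingredients assembled, single-valuedness of $G$ is immediate from the uniqueness of maximizers of strictly concave functions over convex sets.
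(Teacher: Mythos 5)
Your proposal is correct and follows essentially the same route as the paper: value iteration showing that the Bellman operator preserves concave, increasing functions (the paper handles the partial-maximization step by citing Proposition 2.3.6 of Bertsekas together with convex-valuedness of $\Gamma$, which is the same graph-convexity point you flag), then strict concavity of the Bellman objective in $a$ from strict concavity of $\pi$ in $a$ plus concavity of $V\circ w$, yielding a singleton argmax over the convex set $\Gamma(x)$.
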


The previous results can be summarized by the following Corollary that imposes
conditions over the primitives of the model which guarantee the existence of an MFE.

\begin{corollary}
Suppose that Assumption \ref{Assumption single valued} holds. Then, there exists
an MFE. 
\end{corollary}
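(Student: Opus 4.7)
The plan is essentially to chain together the two preceding results, since the Corollary is a direct combination of Lemma \ref{Lemma concave} and Theorem \ref{Theorem existence}. Recall that Assumption \ref{Assumption 0} is assumed to hold throughout the paper, so the only new hypothesis introduced by Assumption \ref{Assumption single valued} is a set of structural conditions (convex $X \subseteq \mathbb{R}^n$, concavity/monotonicity of $\pi$ and $w$, and convexity/monotonicity of $\Gamma$) that are strong enough to pin down the optimal stationary strategy uniquely.

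First, I would invoke Lemma \ref{Lemma concave}, which states precisely that under Assumption \ref{Assumption single valued} the optimal stationary strategy correspondence $G$ is single-valued. Thus the hypothesis needed to apply the existence theorem is satisfied. Second, I would invoke Theorem \ref{Theorem existence}: since Assumption \ref{Assumption 0} holds and $G$ is single-valued, the nonlinear MFE operator $\Phi$ defined in Equation (\ref{MFE OPERATOR}) admits a fixed point, which by the definition of MFE together with optimality of $g\in G$ yields the existence of a mean field equilibrium $(g,s)$.

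There is no substantive obstacle here; the Corollary is purely a packaging result that converts primitive conditions (concavity/monotonicity of $\pi$, $w$, and $\Gamma$) into the existence conclusion. All the real work was done in Theorem \ref{Theorem existence} (the weak continuity argument for $\Phi$ via Serfozo's generalized bounded convergence theorem combined with a Kakutani-type fixed-point argument) and in Lemma \ref{Lemma concave} (which uses strict concavity of $\pi$ in $a$ together with the monotonicity/concavity of $w$ to show that the Bellman maximizer is unique). The Corollary itself follows in a single sentence: apply Lemma \ref{Lemma concave} to get single-valuedness of $G$, then apply Theorem \ref{Theorem existence} to obtain an MFE.
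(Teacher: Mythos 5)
Your proposal is correct and matches the paper's (implicit) argument exactly: the Corollary is obtained by applying Lemma \ref{Lemma concave} to conclude that $G$ is single-valued under Assumption \ref{Assumption single valued}, and then invoking Theorem \ref{Theorem existence}. The only minor inaccuracy is your parenthetical description of the fixed-point machinery behind Theorem \ref{Theorem existence} as ``Kakutani-type''---the paper uses the Schauder--Tychonoff theorem for a continuous \emph{function} $\Phi$ on the compact convex set $\mathcal{P}(X)$, which is precisely why single-valuedness of $G$ is needed---but this does not affect the validity of the Corollary's proof.
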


\subsection{Comparative Statics}
In this section we derive comparative statics results. Let $(I , \succeq _{I})$ be a partially ordered set that influences the players' optimal decisions. We denote
a generic element in $I$ by $e$. For example, $e$ can be the discount factor, a parameter that influences the players' payoff functions, or a parameter that influences the players' transition dynamics. Throughout this section we slightly
abuse notation and when the parameter $e$ influences the players' optimal decisions we add it as a parameter. For instance, we write $Q (x ,s ,e , \cdot )$ instead of $Q (x ,s , \cdot )$. We say that $Q$ is increasing in $e$ if $Q (x ,s ,e_{2} , \cdot ) \succeq  _{SD}Q (x ,s ,e_{1} , \cdot )$ for all $x$, $s$, and all $e_{2} ,e_{1} \in I$ such that $e_{2}  \succeq _{I}e_{1}$. We prove that under the assumptions of Theorem \ref{Theorem uniq}, if $Q$ is increasing in $e$ then $e_{2}  \succeq _{I}e_{1}$ implies that the unique MFE under $e_{2}$ is higher than the unique MFE under $e_{1}$ with respect to $ \succeq $. 

\cite{adlakha2013mean} derive comparative statics results for MFE in the case that $Q$ is increasing in $s$, $x$ and $e$. They prove that $e_{2}  \succeq _{I}e_{1}$ implies $s (e_{2})  \succeq  _{SD}s (e_{1})$ where $s (e)$ is the maximal MFE with respect to $ \succeq  _{SD}$ under $e$. \cite{adlakha2013mean} use the techniques to
compare equilibria developed in \cite{milgrom1994comparing} (see also \cite{topkis2011supermodularity}).
We note that under the assumptions of Theorem \ref{Theorem uniq}, $Q$ is increasing in $x$ but decreasing in $s$. Thus, the results in \cite{adlakha2013mean} do not apply to our setting. However, with the help of the uniqueness of an MFE, we derive a general comparative statics result.

\begin{theorem}
\label{Theorem comp}Let $(I, \succeq _{I})$ be a partial order. Assume that $Q$ is increasing in $e$ on $I$. Then, under the assumptions of Theorem \ref{Theorem uniq}, the unique MFE $s (e)$ is increasing in the following sense: $e_{2}  \succeq _{I}e_{1}$ implies $s (e_{2}) \succeq s (e_{1})$. 
\end{theorem}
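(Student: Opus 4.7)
The plan is to adapt the operator-iteration argument from Theorem~\ref{Theorem uniq} to compare chains under the two parameter values simultaneously. Fix $e_2 \succeq_I e_1$ and let $s_1 = s(e_1)$, $s_2 = s(e_2)$ denote the corresponding (unique) MFEs. Because $\succeq$ is complete, either $s_2 \succeq s_1$, in which case we are done, or $s_1 \succeq s_2$. I would handle the second case by showing that it forces $s_2 \succeq s_1$ as well, so that the desired ordering holds in every case; the heart of the proof is this second case.

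For the second case, assume $s_1 \succeq s_2$ and write the kernel with its explicit $e$-dependence as $Q(x,s,e,B)$ and the auxiliary operator as $M_{s,e}\theta(B) = \int_X Q(x,s,e,B)\,\theta(dx)$. The key one-step claim is: for any $\theta_1,\theta_2 \in \mathcal{P}(X)$ with $\theta_2 \succeq_{SD} \theta_1$ and any upper set $B$, one has $M_{s_2,e_2}\theta_2(B) \geq M_{s_1,e_1}\theta_1(B)$. To see this, insert the intermediate quantity $\int_X Q(x,s_1,e_1,B)\,\theta_2(dx)$ and bound the two resulting pieces separately. The first piece has integrand $Q(x,s_2,e_2,B)-Q(x,s_1,e_1,B)$, which is nonnegative: replace $e_2$ by $e_1$ using monotonicity of $Q$ in $e$, then replace $s_2$ by $s_1$ using monotonicity of $Q$ in $s$ together with the standing assumption $s_1 \succeq s_2$, both valid because $B$ is upper. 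The second piece is $\int_X Q(x,s_1,e_1,B)\,(\theta_2-\theta_1)(dx)$; since $Q(x,s_1,e_1,B)$ is increasing in $x$ (monotonicity of $Q$ in $x$), the condition $\theta_2 \succeq_{SD} \theta_1$ makes this nonnegative. Hence $M_{s_2,e_2}\theta_2 \succeq_{SD} M_{s_1,e_1}\theta_1$.

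Iterating this one-step comparison via the trivial induction (plug the iterates back in as the new initial measures) yields $M_{s_2,e_2}^{n}\theta_2 \succeq_{SD} M_{s_1,e_1}^{n}\theta_1$ for all $n$. By $X$-ergodicity (assumed for each parameter $e$), each side converges weakly to the unique invariant measure of its operator, i.e., to $\mu_{s_2,e_2}=s_2$ and $\mu_{s_1,e_1}=s_1$ respectively; and since $\succeq_{SD}$ is closed under weak convergence by \cite{kamae1977stochastic}, we obtain $s_2 \succeq_{SD} s_1$. Because $\succeq$ agrees with $\succeq_{SD}$, this yields $s_2 \succeq s_1$, completing the second case and hence the theorem.

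The main obstacle is precisely the one-step comparison of the auxiliary operators: the three monotonicity directions of $Q$ (in $x$, $s$, $e$) must combine with the ``free'' stochastic dominance $\theta_2 \succeq_{SD} \theta_1$ in a compatible way, and it is the completeness of $\succeq$ that supplies the sign of $s_1$ versus $s_2$ needed for the $s$-monotonicity step to go through. Without completeness, the $s$-monotonicity of $Q$ could not be invoked, since the MFE operator itself is not monotone in $s$ in our setting.
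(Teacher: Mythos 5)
Your proof is correct, and it reorganizes the argument in a way that differs from the paper's. The paper proceeds in two separate stages and by contradiction: first it fixes the population state $s$ and shows, using only the monotonicity of $Q$ in $x$ and in $e$, that the invariant measure $\mu_{s,e}$ of $M_s(\cdot,e)$ is $\succeq_{SD}$-increasing in $e$; then, assuming $s(e_2)\prec s(e_1)$, it invokes the argument of Theorem~\ref{Theorem uniq} to get $\mu_{s(e_2),e}\succeq_{SD}\mu_{s(e_1),e}$, and chains the two comparisons through the intermediate invariant measure $\mu_{s(e_2),e_1}$ via transitivity of $\succeq_{SD}$ to reach a contradiction. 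You instead fold all three monotonicities of $Q$ (in $x$, in $s$, and in $e$) into a single one-step comparison $M_{s_2,e_2}\theta_2\succeq_{SD}M_{s_1,e_1}\theta_1$, iterate once, and pass to the limit; the completeness of $\succeq$ enters exactly where you say it does, to supply the sign $s_1\succeq s_2$ needed for the $s$-monotonicity step, and the remaining case is trivial. Your route buys two small things: it argues directly rather than by contradiction, and it only requires $X$-ergodicity at the two ``diagonal'' pairs $(s(e_1),e_1)$ and $(s(e_2),e_2)$ rather than at the mixed pair $(s(e_2),e_1)$ whose invariant measure the paper's transitivity chain passes through (though the paper's Definition~\ref{def:Xerg} grants ergodicity for all $s$, so this is a matter of economy rather than of correctness). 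The paper's two-stage version, in turn, isolates the $e$-comparative-statics of $\mu_{s,e}$ for fixed $s$ as a reusable intermediate fact. Both arguments rest on the same machinery: the auxiliary operator $M$, preservation of $\succeq_{SD}$ under iteration, and closedness of $\succeq_{SD}$ under weak convergence.
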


The same result can be shown with a similar argument under the assumptions of Theorem \ref{Corr:product space}. We omit the details for sake of brevity.  We note that our comparative statics result is with respect to the order $ \succeq $ and not with respect to the usual stochastic dominance order. The machinery mentioned in the paragraph above is not directly applicable in our models, and without it we believe that comparative statics results with respect to the usual stochastic dominance order are much harder to obtain. We discuss the usefulness of our comparative static result with respect to the order $\succeq$ in the context of dynamic oligopoly models below.

\section{\label{Sec: DOM}Dynamic Oligopoly Models}

In this section we study various dynamic models of competition  or dynamic oligopoly models that capture a wide range of phenomena in economics and operations research.\footnote{Even though we study models with potentially large numbers of firms, we keep the name dynamic oligopoly to be consistent with previous literature in which MFE or its variants have been used to approximate oligopolistic behavior (for example, see \cite{qi2013impact}, \cite{adlakha2015equilibria}, and \cite{onishi2016quantity}).} We leverage our results to provide conditions under which a broad class of dynamic oligopoly models admit a unique MFE. We also provide comparative statics results. 

More specifically, we show that under concavity assumptions and a natural substitutability condition, the MFE is unique. The substitutability condition requires that the firms' profit function has decreasing differences in each firm's own state and the states of the other firms. 
This condition implies that the marginal profit of a firm (with respect to its own state) is decreasing in the other firms' states. It arises naturally in many dynamic oligopoly models. In Section \ref{Subsec:Quality ladder} we consider well studied capacity competition and quality ladder models. In Section \ref{sec:advertising} we consider a dynamic advertising model. In Section \ref{Sec:dynamic rep} we introduce a dynamic reputation model of an online market.  In all of these models, it holds that the firms' actions are higher when their own state is higher and the firms' actions are lower when the competitors' states (or the competitors' actions) are higher. These are essentially the conditions that imply the uniqueness of an MFE for dynamic oligopoly models.

\subsection{Capacity Competition and Quality Ladder Models}\label{Subsec:Quality ladder}

In this section we consider dynamic capacity competition models and dynamic quality ladder models which have received significant attention in the recent operations research and economics literature. In these models, firms' states correspond to a variable that affects their profits. For example, the state can be the firm's capacity or the quality of the firm's product. Per-period
profits are based on a static competition game that depends on the heterogeneous firms' state variables. Firms take actions in order to improve their individual state  over time. 

We now describe the models we consider.

\textit{States.} The state
of firm $i$ at time $t$ is denoted by $x_{i ,t} \in X$ where $X \subseteq \mathbb{R}_{ +}$ and is convex. 

\textit{Actions.} At each time $t$, firm $i$ invests $a_{i ,t} \in A =[0 ,\bar{a}]$ to improve its state. The investment changes the firm's state in a stochastic fashion. 

\textit{States' dynamics.} A firm's state evolves in a Markov fashion. Let $0 <\delta  <1$ be the depreciation rate. If firm $i$'s state at time $t -1$ is $x_{i ,t -1}$, the firm takes an action $a_{i ,t -1}$ at time $t -1$, and $\zeta _{i ,t}$ is firm $i$'s realized idiosyncratic random shock at time $t$, then firm $i$'s state in the next period is given by:
\begin{equation*}x_{i ,t} =((1 -\delta ) x_{i ,t -1} +k (a_{i ,t -1})) \zeta _{i ,t}
\end{equation*} where $k :A \rightarrow \mathbb{R}$  is typically an increasing function that determines the impact of investment $a$.
We assume that $\zeta$ takes positive values $0 <\zeta _{1} <\ldots  <\zeta _{n}$, where $\zeta _{1} <1$, $\zeta _{n} >1$, $p_{1} ,p_{n} >0$. That is, there exists a positive probability for a bad shock $\zeta _{1}$ and a positive probability for a good shock $\zeta _{n}$. In each period, the firm's state is naturally depreciating at rate $\delta$, but the firm can make investments in order to improve it. Further, the outcome of depreciation and investment is subject to an idiosyncratic random shock ($\zeta$) that, for example, could capture uncertainty in R\&D or a marketing campaign. 
Related dynamics have been used in previous literature. 
Further, our uniqueness result for capacity competition and quality ladder models holds under other states' dynamics. For example, we could also assume additive dynamics $x_{i,t} =(1 -\delta )x_{i,t-1}+k(a_{i ,t -1}) +\zeta _{i ,t}$.\footnote{For our results to hold we need to impose some constraints on these additive dynamics so that the state space remains compact. We can also  assume an exogenous bound on the state as in Section \ref{Sec:dynamic rep}. We believe that our results also hold if we drop the assumption that $X$ is compact, under some additional conditions over model primitives that ensure some form of ``decreasing returns to larger states'' (see \cite{adlakha2015equilibria}).} We make the following assumption over the dynamics that we discuss later before Theorem \ref{th:unique_olig}.
\begin{assumption}
\label{Assumption DOP:0}

(i) $k(a)$ is strictly concave, continuously differentiable, strictly increasing and $k(0)>0$.\protect\footnote{
The differentiability assumptions can be relaxed. We assume differentiability of $u$ and $k$ in order to simplify the proof of Theorem \ref{Unique DOP}.}

(ii) $(1 -\delta ) \zeta _{n} <1$. 
\end{assumption} 
\textit{Payoff.} The cost of a unit of investment is $d >0$.\footnote{The investment  cost could be a convex function, but linearity simplifies the comparative static results in the parameter $d$.} We assume there is a single-period profit function $u (x ,s)$ derived from a static game. When a firm invests $a \in A$, the firm's state is $x \in X$, and the population state is $s \in \mathcal{P} (X)$, then the firm's single-period payoff function is given by $\pi  (x ,a ,s) =u (x ,s)-da$.  

We assume that there exists a complete and transitive binary relation $ \succeq $ on $\mathcal{P} (X)$ such that $s_{1} \sim s_{2}$ implies that $u (x ,s_{1}) =u (x ,s_{2})$ for all $s_{1} ,s_{2} \in \mathcal{P} (X)$ and $x \in X$. Furthermore, we assume that $ \succeq $ agrees with $ \succeq _{SD}$ (cf. Section \ref{sec:unique}). 

To prove the uniqueness of an MFE for capacity competition and quality ladder models, we introduce the following conditions on the primitives of the model. It is simple to verify that both of the dynamic oligopoly models introduced in the examples below satisfy these assumptions. We believe the conditions are quite natural, and thus other commonly used dynamic oligopoly models may satisfy them as well.

Recall that a function $f (x ,s)$ is said to have decreasing differences in $(x ,s)$ on $X \times S$ if for all $x_{2} \geq x_{1}$ and $s_{2} \succeq s_{1}$ we have
$f (x_{2} ,s_{2}) -f (x_{1} ,s_{2}) \leq f (x_{2} ,s_{1}) -f (x_{1} ,s_{1}).$ $f$ is said to have increasing differences if $ -f$ has decreasing differences.

\begin{assumption}
\label{Assumption DOP}
 $u (x ,s)$ is jointly continuous. Further, it is concave and continuously differentiable in $x$, for each $s \in \mathcal{P} (X)$. In addition,  $u (x ,s)$ has decreasing differences in $(x,s)$. 

\end{assumption} 

We now provide two classic examples of profit functions $u (x ,s)$ that are commonly used in the literature. For these examples, we explicitly define the binary relation $ \succeq $.

The first one is the capacity competition model described in Example \ref{ex:capacity}. Recall that 
if  the firm's state is $x$, then its capacity is $\bar{q} (x)$. We assume that $\bar{q}$ is an increasing, continuously differentiable, concave, and bounded function. We also assume that the inverse demand function $P(\cdot)$ is decreasing and continuous. In this model, \begin{equation*}u (x ,s) =P \left (\int_X \bar{q} (y) s (d y)\right ) \bar{q} (x).
\end{equation*}
For the capacity competition model, we define $s_{2} \succeq s_{1}$ if and only if $\int \bar{q} (y) s_{2} (d y) \geq \int \bar{q} (y) s_{1} (d y)$. Since $\bar{q}$ is an increasing function, $ \succeq $ agrees with $ \succeq  _{SD}$. It can be verified that $u$ satisfies the conditions of Assumption \ref{Assumption DOP}.

Our second example is a classic quality ladder model, where individual states represent the quality of a firm's product (see, e.g., \cite{pakes1994computing} and \cite{ericson1995markov}). Consider a price competition under a logit demand system. There are $N$ consumers in the market. The utility of consumer $j$ from consuming the good produced by firm $i$ at period $t$ is given by
\begin{equation*}u_{i j t} =\theta _{1} \ln  (x_{it} +1) +\theta _{2} \ln  (Y-p_{it}) +v_{ijt},
\end{equation*}
where $\theta _{1}<1 ,\theta _{2} >0$, $p_{it}$ is the price of the good produced by firm $i$, $Y$ is the consumer's income, $x_{it}$ is the quality of the good produced by firm $i$, and $\{v_{i j t}\}_{i ,j ,t}$ are i.i.d Gumbel random variables that represent unobserved characteristics for each consumer-good pair. 

There are $m$ firms in the market and the marginal production cost is constant and the same across firms. There is a unique Nash equilibrium in pure strategies of the pricing
game (see \cite{caplin1991aggregation}). These equilibrium static prices determine the single-period profits. Now, the limiting profit function that we focus on can be obtained from the asymptotic regime in which the number of consumers $N$ and the number of firms $m$ grow to infinity at the same rate. The limiting profit function corresponds to a logit model of monopolistic competition given by: 
\begin{equation*}u (x ,s) =\frac{\tilde{c} (x +1)^{\theta _{1}}}{\int _{X}(y +1)^{\theta _{1}} s (d y)}
\end{equation*} (see \cite{besanko1990logit}). 
$\tilde{c}$ is a constant that depends on the limiting equilibrium price, the marginal production cost, the consumer's income, and $\theta _{2}$. For the quality ladder model, we define $s_{2} \succeq s_{1}$ if and only if $\int (y +1)^{\theta _{1}} s_{2} (d x) \geq \int (y +1)^{\theta _{1}} s_{1} (d y)$. It is easy to see that $\succeq $ agrees with $\succeq  _{SD}$. It can also be verified that $u$ satisfies the conditions of Assumption \ref{Assumption DOP}.

The proof of our uniqueness result for the capacity competition and quality ladder models consists of showing that Assumptions \ref{Assumption DOP:0} and \ref{Assumption DOP} imply Assumptions \ref{Assumption 0} and \ref{Assumption uniq}, and that $ \succeq $ is a complete order. These are the conditions we use to show the existence of a unique MFE in Sections \ref{sec:unique} and \ref{sec:exist}. 

Specifically, similarly to Lemma \ref{Lemma concave}, one can show that the concavity assumptions in Assumptions \ref{Assumption DOP:0} and \ref{Assumption DOP} imply that $G$ is single-valued. The assumption that $k(0)>0$ (see condition (i) in Assumption \ref{Assumption DOP:0}) is used to prevent the pathological case that the Dirac measure on the point $0$ is an invariant distribution of $M_{s}$ which could violate $X$-ergodicity (see Section \ref{sec:unique}). In addition, condition (ii) in Assumption \ref{Assumption DOP:0} is used to control the growth of firms, so that one can show that the state space remains compact. We believe our results hold with a milder version of this assumption. With this, the only remaining assumption that we need to show in order to prove the uniqueness of an MFE  for our capacity competition and quality ladder models is Assumption \ref{Assumption uniq}(i). For this, we use the fact that the profit function has decreasing differences in the state $x$ and the population state $s$. This  implies that firms invest less when the population state is higher (see Lemma \ref{Lemma single-valued}).
 We use this fact to show the desired monotonicity of $Q$. 

Our main result for dynamic capacity competition and dynamic quality ladder models is the following:

\begin{theorem} \label{th:unique_olig}
\label{Unique DOP}Suppose that Assumptions \ref{Assumption DOP:0} and \ref{Assumption DOP}
hold. Then there exists a unique MFE for the capacity competition and quality ladder models. 
\end{theorem}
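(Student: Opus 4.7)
The plan is to verify, for each of the two models, the hypotheses of the existence Theorem \ref{Theorem existence} and the uniqueness Theorem \ref{Theorem uniq}; their conclusions then yield a unique MFE. Concretely, I will check (a) Assumption \ref{Assumption 0}; (b) that the optimal-strategy correspondence $G$ is single-valued; (c) the $X$-ergodicity and monotonicity properties required in Assumption \ref{Assumption uniq}; and (d) that the candidate order $\succeq$ is complete and agrees with $\succeq_{SD}$.

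First I would argue that the state space may be taken as a compact interval $X=[0,L]$. Because $k$ is continuous on the compact $[0,\bar a]$ and condition (ii) of Assumption \ref{Assumption DOP:0} gives $(1-\delta)\zeta_n<1$, the choice $L\ge k(\bar a)\zeta_n/(1-(1-\delta)\zeta_n)$ makes $X$ invariant under the dynamics $w(x,a,s,\zeta)=((1-\delta)x+k(a))\zeta$. The remaining items in Assumption \ref{Assumption 0} follow at once: $w$ and $\pi(x,a,s)=u(x,s)-da$ are jointly continuous on the compact domain (by Assumption \ref{Assumption DOP}), $\pi$ is bounded there, and $\Gamma(x)\equiv[0,\bar a]$ is trivially compact-valued and continuous. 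For $G$ to be single-valued I adapt the argument of Lemma \ref{Lemma concave}: standard dynamic-programming arguments (using concavity and monotonicity of $\pi$ in $x$ and of $w$ in $(x,a)$) give that $V(\cdot,s)$ is concave and (strictly) increasing; then in the Bellman objective, the term $V(((1-\delta)x+k(a))\zeta_j,s)$ is strictly concave in $a$ because $k$ is strictly concave by Assumption \ref{Assumption DOP:0}(i) and $V$ is increasing and concave in its first argument. The linear cost $-da$ preserves strict concavity in $a$, so the maximizer is unique.

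Next I verify Assumption \ref{Assumption uniq}. For the monotonicity of $Q$ in $x$, I would recast the Bellman problem via the change of variable $y=(1-\delta)x+k(a)$, giving the objective $-d\,k^{-1}(y-(1-\delta)x)+\beta\sum_j p_j V(y\zeta_j,s)$ on $y\in[(1-\delta)x+k(0),(1-\delta)x+k(\bar a)]$; since $k$ is strictly concave and increasing, $k^{-1}$ is convex, so the cross-partial of the objective in $(y,x)$ equals $d(1-\delta)[k^{-1}]''(y-(1-\delta)x)\ge 0$, and the constraint set shifts upward in $x$. Topkis's theorem then yields that $y^\ast(x,s):=(1-\delta)x+k(g(x,s))$ is nondecreasing in $x$, so $w(x,g(x,s),s,\zeta)=y^\ast(x,s)\zeta$ is nondecreasing in $x$ for each $\zeta$, and $Q(\cdot,s,\cdot)$ is increasing in $x$ in stochastic dominance. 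For $Q$ decreasing in $s$, the decreasing differences of $u$ in $(x,s)$ propagate via value iteration to decreasing differences of $V$ in $(x,s)$, because $w$ does not depend on $s$ and the max operator preserves decreasing differences; Lemma \ref{Lemma single-valued} (cited in the paragraph preceding the theorem) then gives $g(x,s)$ decreasing in $s$, and since $w$ depends on $s$ only through $a=g(x,s)$ and is increasing in $a$, $Q$ is decreasing in $s$. The $X$-ergodicity of $Q$ follows from monotone Markov chain theory (e.g., \cite{hopenhayn1992}, \cite{bhattacharya1988asymptotics}): the chain $x_{t+1}=((1-\delta)x_t+k(g(x_t,s)))\zeta_{t+1}$ is monotone by the previous step, and a splitting condition holds because $k(0)>0$ prevents collapse to $0$, $\zeta_1<1$ with $p_1>0$ provides a downward contraction, and $\zeta_n>1$ with $p_n>0$ provides an upward expansion.

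Finally, in each example the order $\succeq$ is defined by comparing real-valued aggregators ($\int\bar q(y)s(dy)$ in the capacity model and $\int(y+1)^{\theta_1}s(dy)$ in the quality ladder), hence is complete; since both $\bar q$ and $(y+1)^{\theta_1}$ are increasing, $s_1\succeq_{SD}s_2$ implies $s_1\succeq s_2$, so $\succeq$ agrees with $\succeq_{SD}$. All hypotheses of Theorems \ref{Theorem existence} and \ref{Theorem uniq} are thereby verified, giving existence and uniqueness of the MFE. The step I expect to be the main obstacle is the $X$-ergodicity claim, where one must carefully combine the monotonicity of the induced chain with the shock structure and the condition $k(0)>0$ to obtain a splitting property strong enough for the Hopenhayn-Prescott type results; the remaining pieces reduce to standard dynamic programming, monotone comparative statics, and convex analysis.
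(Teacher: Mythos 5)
Your overall plan coincides with the paper's: verify Assumptions \ref{Assumption 0} and \ref{Assumption uniq} together with single-valuedness of $G$, then invoke Theorems \ref{Theorem existence} and \ref{Theorem uniq}; the compactness bound, the change of variables $z=(1-\delta)x+k(a)$, and the Topkis arguments all mirror the paper's Lemmas \ref{Lemma compact state} and \ref{Lemma single-valued}. However, two of your steps do not go through as written. First, your single-valuedness argument claims that $a\mapsto V(((1-\delta)x+k(a))\zeta_j,s)$ is strictly concave because $k$ is strictly concave and $V$ is concave and increasing. A weakly increasing concave function composed with a strictly concave one is only weakly concave, and Assumption \ref{Assumption DOP} does not even assume $u$ increasing in $x$, so monotonicity of $V$ is not available; adding the linear term $-da$ does not restore strict concavity. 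The paper obtains uniqueness of the maximizer from the transformed problem: in the $z$-variable the cost term is $-d\,k^{-1}(z-(1-\delta)x)$, which is strictly concave in $z$ because $k^{-1}$ is strictly convex, while the continuation term is only (weakly) concave in $z$; hence the maximizer in $z$, and therefore in $a$, is unique. You already introduce this change of variables for the Topkis step, so the fix is at hand, but the argument you actually give is incorrect. Relatedly, your assertion that ``the max operator preserves decreasing differences'' is false in general; the paper instead uses the envelope theorem, $Kf_{x}(x,s)=\pi_{x}(x,\mu_{f}(x,s),s)$, combined with the increasing differences of $\pi$ in $(x,z)$ and the already-established monotonicity of $\mu_{f}$ in $s$, to propagate decreasing differences of the value function through value iteration.

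Second, the $X$-ergodicity step, which you rightly flag as the main obstacle, is genuinely incomplete and is where the paper does most of its work (Lemmas \ref{Lemma properties} and \ref{Lemma Q is ergodic}). Monotonicity of the chain plus ``$\zeta_{1}<1$, $\zeta_{n}>1$, $k(0)>0$'' does not by itself deliver a splitting condition: one must exhibit a point $x^{\ast}$ and an integer $m$ with $Q^{m}(\bar{x},s,[0,x^{\ast}])>0$ and $Q^{m}(0,s,[x^{\ast},\bar{x}])>0$. The paper does this by iterating $x_{k+1}=\mu(x_{k},s)\zeta_{n}$ from $0$ and $y_{k+1}=\mu(y_{k},s)\zeta_{1}$ from $\bar{x}$ and showing the respective limits satisfy $C_{s}>C_{s}^{\ast}$, which in turn requires two properties of $\mu$ that you never establish: strict monotonicity in $x$ (proved from the first-order conditions and the strict convexity of $k^{-1}$) and Lipschitz continuity in $x$ with constant $1$, which makes $x\mapsto\mu(x,s)\zeta_{1}$ a contraction and pins down $C_{s}^{\ast}$ as its unique fixed point. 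Without these ingredients the two limits could coincide and the splitting argument fails, so this part of your proof needs to be filled in before the Hopenhayn--Prescott type results can be invoked.
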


 Under Assumptions \ref{Assumption DOP:0} and \ref{Assumption DOP} we can also derive comparative statics results for our capacity competition and  quality ladder models. In particular, we show that an increase in the cost of a unit of investment decreases the unique MFE population state. Note that an increase in the investment cost decreases firms incentives to invest. However, a lower population state incentivizes the firms to invest more. As a consequence, our model does not have the properties of a supermodular game (e.g., \cite{topkis1979equilibrium}). Despite this, relying on the uniqueness
of an MFE and on Theorem \ref{Theorem comp} we are able to show that in fact the unique MFE decreases when the cost of a unit of investment increases. 

We also derive comparative statics results regarding a change in a parameter that influences the profit function and a change in the discount factor. We show that if there is a parameter $c$ such that the marginal profit of the firms is decreasing in that parameter, then the unique MFE decreases in the parameter $c$. For example, in the quality ladder model, as the marginal cost of production goes up, the unique MFE decreases. In the capacity competition model, as the potential market size increases, the MFE increases. In addition, we show that an increase in the discount factor increases the unique MFE. 

We note that all of our comparative statics results are with respect to the order $\preceq$ and not with respect to the usual stochastic dominance order as one would typically obtain using supermodularity arguments (e.g., \cite{adlakha2013mean}). We believe that these results provide helpful information because the order $\preceq$ relates to the single-period profit function, and therefore, MFE can be ordered in terms of firms' payoffs. Further, $\preceq$ typically orders a variable of economic interest, such as the average capacity level in the capacity competition model or the average quality level in the quality ladder model.

\begin{theorem}
\label{Theorem DOP comp}Suppose that Assumptions \ref{Assumption DOP:0} and \ref{Assumption DOP}
hold. We denote by $s (e)$ the unique MFE when the parameter that influences the firms' decisions is $e$. 

(i) If the cost of a unit of investment increases, then the unique MFE decreases, i.e., $d_{2} \leq d_{1}$ implies $s (d_{2}) \succeq s (d_{1})$. 

(ii) Let $c \in I \subseteq \mathbb{R}$ be a parameter that influences the firms' profit function. If the profit function $u (x ,s ,c)$ has decreasing differences in $(x ,c)$, then the unique MFE decreases in $c$, i.e., $c_{1} \geq c_{2}$ implies $s (c_{2}) \succeq s (c_{1})$. 

(iii) Assume that $u (x ,s)$ is increasing in $x$. If the discount factor $\beta $ increases, then the unique MFE $s (\beta )$ increases, i.e., $\beta _{2} \geq \beta _{1}$ implies $s (\beta _{2}) \succeq s (\beta _{1})$. 
\end{theorem}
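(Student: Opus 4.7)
The plan is to invoke Theorem \ref{Theorem comp} once in each of the three cases, with parameter reparametrizations $e=-d$ for (i), $e=-c$ for (ii), and $e=\beta$ for (iii), so that each comparative statics claim takes the form $e_{2}\succeq_{I}e_{1}\Rightarrow s(e_{2})\succeq s(e_{1})$. The preconditions of Theorem \ref{Theorem comp}---Assumptions \ref{Assumption 0} and \ref{Assumption uniq} together with completeness of $\succeq$---follow from Theorem \ref{th:unique_olig} under the maintained hypotheses. Since $w(x,a,s,\zeta)=((1-\delta)x+k(a))\zeta$ is strictly increasing in $a$ by Assumption \ref{Assumption DOP:0}(i) and $\zeta$ is independent of the parameter, the required stochastic-dominance monotonicity of $Q(x,s,e,\cdot)$ in $e$ is in each case equivalent to monotonicity of the unique optimal investment $g(x,s,e)$ in $e$, and the entire proof reduces to verifying this policy-level monotonicity.

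Part (i) is immediate from Topkis's theorem: the Bellman objective
\begin{equation*}
u(x,s)-da+\beta\sum_{j=1}^{n}p_{j}V(w(x,a,s,\zeta_{j}),s)
\end{equation*}
depends on $d$ only through the linear term $-da$, so it has increasing differences in $(a,-d)$, giving $g$ nonincreasing in $d$. For part (iii), a standard induction on value iteration shows that $V(x,s,\beta)$ is increasing in $x$, using that $u(\cdot,s)$ is increasing in $x$ and $w$ is increasing in $x$. Hence the incremental gain
\begin{equation*}
\beta\sum_{j=1}^{n}p_{j}\bigl[V(w(x,a',s,\zeta_{j}),s,\beta)-V(w(x,a,s,\zeta_{j}),s,\beta)\bigr]
\end{equation*}
is nonnegative for $a'>a$, and a standard monotone comparative statics argument on the discounted dynamic program (after a payoff normalization making continuation values nonnegative) then shows that the Bellman objective has increasing differences in $(a,\beta)$, whence Topkis gives $g$ nondecreasing in $\beta$.

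Part (ii) is the main obstacle. Here $c$ enters the Bellman objective both through $u$ and through the continuation $V(w(x,a,s,\zeta_{j}),s,c)$, while $a$ enters only through $-da$ and through $w$, which is increasing in $a$; Topkis applied to the objective delivers $g$ nonincreasing in $c$ once one shows that the value function $V(x,s,c)$ has decreasing differences in $(x,c)$. The plan is to prove this by induction on the Bellman operator $T$, verifying that $T$ preserves the class of functions that are concave in $x$, increasing in $x$, and submodular in $(x,c)$. The delicate step is propagating submodularity through the maximization, since pointwise maxima of submodular functions need not themselves be submodular. The structural features of the model are what make the argument go through: the additive separability of $w$ in $(x,a)$, which makes $w_{x}(x,a,\zeta)=(1-\delta)\zeta$ independent of $a$; the strict concavity in $a$ supplied by Assumptions \ref{Assumption DOP:0} and \ref{Assumption DOP}, which delivers a unique and well-behaved optimizer via Lemma \ref{Lemma concave}; and the decreasing differences of $u$ in $(x,c)$ supplied by the hypothesis of part (ii). With $V$ shown to have decreasing differences in $(x,c)$, the Bellman objective inherits decreasing differences in $(a,c)$, Topkis yields $g$ nonincreasing in $c$, and Theorem \ref{Theorem comp} concludes the argument.
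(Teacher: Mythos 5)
Your overall architecture is the paper's: reduce each claim to monotonicity of the optimal policy $g$ in the parameter, translate that into stochastic-dominance monotonicity of $Q$ via the transition $((1-\delta)x+k(a))\zeta$, and invoke Theorem \ref{Theorem comp}. Your treatment of part (ii) is essentially the paper's proof: an induction on the Bellman operator showing that concavity in $x$ and decreasing differences in $(x,c)$ are preserved (the paper cites a Hopenhayn/Lovejoy-type lemma for the preservation of decreasing differences through the maximization), followed by Topkis. However, there is a genuine gap in part (i). You claim it is ``immediate from Topkis'' because ``the Bellman objective depends on $d$ only through the linear term $-da$,'' and you write the continuation term as $\beta\sum_{j}p_{j}V(w(x,a,s,\zeta_{j}),s)$ with no $d$. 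This is false: the value function is the fixed point of a Bellman operator whose per-period payoff contains $-da$, so $V=V(x,s,d)$ depends on $d$, and the objective's cross-differences in $(a,d)$ include the term $\beta\sum_{j}p_{j}\bigl[V(w(x,a_{2},s,\zeta_{j}),s,d)-V(w(x,a_{1},s,\zeta_{j}),s,d)\bigr]$, whose monotonicity in $d$ is exactly the statement that $V$ has decreasing differences in $(x,d)$. Part (i) therefore requires the same induction you reserve for part (ii) --- and indeed the paper's proof of part (i) is precisely that induction (showing $h$ has decreasing differences in $(x,a)$, $(x,d)$, and $(a,d)$, that $T$ preserves decreasing differences in $(x,d)$, and passing to the uniform limit), with parts (ii) and (iii) then proved ``by the same argument.''

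Part (iii) has the same issue in milder form: your observation that a larger $\beta$ multiplies a nonnegative increment of $V$ is only half of the argument, because the increment $V(w(x,a_{2},s,\zeta_{j}),s,\beta)-V(w(x,a_{1},s,\zeta_{j}),s,\beta)$ itself varies with $\beta$. The paper closes this by running the induction over the class of functions that are increasing in $x$ and have increasing differences in $(x,\beta)$ (equivalently, decreasing differences under the reverse order on $\beta$), combining the inductive hypothesis with the nonnegativity of the increment to get increasing differences of $h$ in $(a,\beta)$. Your phrase ``a standard monotone comparative statics argument on the discounted dynamic program'' gestures at this but does not supply it. To repair the proposal, state and prove (or cite) the single preservation lemma --- if $h(x,a,s,e,f)$ has decreasing differences in $(x,a)$, $(x,e)$, and $(a,e)$, then $\max_{a\in\Gamma(x)}h$ has decreasing differences in $(x,e)$ --- and apply it uniformly in all three parts, exactly as the paper does.
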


\subsection{Dynamic Advertising Competition Models}\label{sec:advertising} 

In this section we consider dynamic advertising competition models. In these models, firms' states correspond to customer goodwill or market share. 
In each period, the firms decide on their advertising expenditures $a$. The probability that the next period's customer goodwill is higher increases when the firms spend more on advertising. The firms' payoff functions depend on their own spending on advertising, on their own state, on the other firms' states, and on the other firms' spending on advertising. Thus, a firm's payoff function depends on the other firms' {\em dynamic} actions (in Sections \ref{sec:exten} and \ref{Sec: Coupling th} we extend the model and the results presented in Sections 2 and 3 to the case in which each player's payoff function depends on the other players' actions).    
Variants of dynamic models with this structure have been studied in the operations research literature in contexts other than advertising (for example, see \cite{hall2000customer}). We now describe our specific model.\footnote{Our model is a mean field version of the dynamic advertising model presented in \cite{heyman2004stochastic} and in Section 4.3 in \cite{olsen2014markov}).}

\textit{States.} The state of firm $i$ at time $t$ is denoted by $x_{i ,t} \in X$ where $X =\mathbb{R}_{ +}$. The state of a firm $x_{i ,t} \in X$ represents the customer goodwill.

\textit{Actions.} At each time $t$, firm $i$ chooses an amount of money to spend on advertising $a_{i ,t} \in A =[1 ,\bar{a}]$ where $\bar{a} >1$. 

\textit{States' dynamics.} When the firm spends more on advertising, the customer goodwill increases. The customer goodwill depreciates over time at rate $0< \delta <1$. If firm $i$'s state at time $t -1$ is $x_{i ,t -1}$, the firm takes an action $a_{i ,t -1}$ at time $t -1$, and $\zeta _{i ,t}$ is firm $i$'s realized idiosyncratic random shock at time $t$, then firm $i$'s state in the next period is given by
\begin{equation*}x_{i ,t} =(1 -\delta )(x_{i ,t -1} +a_{i ,t -1}) \zeta _{i ,t}.
\end{equation*}
We assume that $\zeta $ takes positive values $0 <\zeta _{1} <\ldots  <\zeta _{n}$. To ensure compactness we also assume that $(1-\delta)\zeta _{n} < 1$ (see Section \ref{Subsec:Quality ladder}). We slightly modify the transition dynamics from Section \ref{Subsec:Quality ladder} to remain consistent with the models used in the papers that motivate this section.

\textit{Payoff.} When a firm chooses to spend $a \in A$ on advertising, the firm's state is $x \in X$, and the population action-state profile is $s \in \mathcal{P} (X \times A)$, then the firm's single-period payoff function is given by 
\begin{equation*}\pi (x ,a ,s) = r\frac{(x+a)^{\gamma_{1}}}{(\int(x'+a')s(d(x',a')))^{\gamma_{2}}} - a
\end{equation*}
where $\frac{(x+a)^{\gamma_{1}}}{(\int(x'+a')s(d(x',a')))^{\gamma_{2}}}$ is the expected demand, $r>0$ is the price, and $0< \gamma_{1} < 1$, $0< \gamma_{2} < 1$ are parameters. 
The expected demand is increasing in the firm's current advertising expenditure and in the firm's current state, and is decreasing in the other firms' advertising expenditures and the other firms' states.

 We define a complete binary relation $ \succeq $ on $\mathcal{P} (X \times A)$,  by $s_{1} \succeq s_{2}$ if and only if $(\int(x'+a')s_{1}(d(x',a')))^{\gamma_{2}} \geq  (\int(x'+a')s_{2}(d(x',a')))^{\gamma_{2}}$. Clearly, $ \succeq $ agrees with $ \succeq  _{SD}$ (see Section \ref{sec:unique}). We can also derive comparative statics results for the dynamic advertising model. For example, using similar arguments to the arguments in Section \ref{Subsec:Quality ladder} we can show that when the discount factor $\beta$ increases, then the unique MFE increases in the following sense: if $\beta_{2} > \beta_{1}$, then $s(\beta _{2}) \succeq s(\beta _{1})$ where $s(\beta)$ is the unique MFE under discount factor $\beta$. We also show that the unique MFE increases when the market price $r$ increases.

\begin{theorem}\label{Theorem:dynamic advertising}
(i) The dynamic advertising competition model has a unique MFE. 

(ii) Let $s(\beta)$ be the unique MFE under the discount factor $\beta$. Then $\beta_{2} > \beta_{1}$ implies $s(\beta _{2}) \succeq s(\beta _{1})$. 

(iii) Let $s(r)$ be the unique MFE under the price $r$. Then $r_{2} > r_{1}$ implies $s(r _{2}) \succeq s(r_{1})$. 
\end{theorem}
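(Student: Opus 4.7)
The plan is to invoke the action-coupled extensions of Theorem \ref{Theorem uniq} and Theorem \ref{Theorem comp} (see Section \ref{Sec: Coupling th}), using the complete order $\succeq$ on $\mathcal{P}(X\times A)$ given by $s_1 \succeq s_2 \iff H(s_1) \geq H(s_2)$ with $H(s) := \int(x'+a')\,s(d(x',a'))$; this $\succeq$ agrees with $\succeq_{SD}$ since $(x',a')\mapsto x'+a'$ is increasing, and the primitives depend on $s$ only through $H(s)$. Although $X=\mathbb{R}_+$ is not compact, the bound $(1-\delta)\zeta_n<1$ together with $a\leq\bar{a}$ produces an absorbing compact interval $X=[0,B]$ with $B := (1-\delta)\zeta_n\bar{a}/(1-(1-\delta)\zeta_n)$. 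The profit $\pi$ is jointly continuous, concave in $(x,a)$ and strictly concave in $a$ (since $\gamma_1<1$); $w(x,a,s,\zeta)=(1-\delta)(x+a)\zeta$ is linear and increasing in $(x,a)$. An argument analogous to Lemma \ref{Lemma concave} then yields that $G$ is single-valued.

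The key structural observation is that $\pi$ and $w$ depend on $(x,a)$ only through $y := x+a$, so the Bellman equation rewrites as $V(x,H) = x + \max_{y\in[x+1,\,x+\bar{a}]} \phi(y,H)$, where $\phi(y,H) := ry^{\gamma_1}/H^{\gamma_2} - y + \beta \sum_j p_j V((1-\delta)y\zeta_j,H)$. Value iteration yields $V$ concave in $x$, hence $\phi(\cdot,H)$ is strictly concave and admits a unique unconstrained maximizer $y^{\ast}(H)$, independent of $x$. Consequently $x + g(x,s) = \max(\min(y^{\ast}(H), x+\bar{a}),\, x+1)$ is \emph{nondecreasing} in $x$, which together with the monotonicity of $w$ in $x+a$ gives $Q$ increasing in $x$. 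To obtain $Q$ decreasing in $s$, by Topkis it suffices that $\phi$ has decreasing differences in $(y,H)$, which in turn reduces to $V$ having decreasing differences in $(x,H)$. An induction on value iteration establishes the latter: $V_0\equiv 0$ is trivial, and the inductive step propagates the property through the Bellman operator using $\partial^2\pi/\partial(x+a)\partial H < 0$ together with the induction hypothesis. Thus $y^{\ast}(H)$ decreases in $H$, so $x+g(x,s)$ decreases in $s$ under $\succeq$, giving $Q$ decreasing in $s$.

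The main obstacle is to establish $X$-ergodicity of the monotone chain $x_{t+1} = (1-\delta)(x_t + g(x_t,s))\zeta_{t+1}$ on $X$, for each fixed $s$. Since $g(x,s)\in[1,\bar{a}]$ is bounded, a finite sequence of $N$ worst shocks $\zeta_1$ drives the chain into a vicinity of the lower deterministic fixed point with probability at least $p_1^N>0$, while a finite sequence of $\zeta_n$ shocks drives it into a vicinity of the upper fixed point with probability at least $p_n^N>0$; choosing $x^{\ast}$ strictly between these two vicinities verifies the splitting condition of \cite{bhattacharya1988asymptotics}, delivering the unique stationary distribution $\mu_s$ and weak convergence $M_s^n\theta\Rightarrow\mu_s$ required by Definition \ref{def:Xerg}. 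Combining everything, the extended Theorem \ref{Theorem uniq} yields (i).

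For (ii) and (iii), Theorem \ref{Theorem comp} reduces each claim to $Q(x,s,e,\cdot)$ being stochastically increasing in $e\in\{\beta,r\}$, which in turn amounts to $y^{\ast}$ being increasing in $e$. For $\beta$: $\partial^2\phi/\partial y\,\partial\beta = (1-\delta)\sum_j p_j\zeta_j V_x((1-\delta)y\zeta_j,H)\geq 0$ since $V_x\geq 0$ (as $\pi$ and $w$ are increasing in $x$). For $r$: a parallel induction shows $V_x$ is increasing in $r$ (base case from $\pi_x$ increasing in $r$), whence $\partial^2\phi/\partial y\,\partial r \geq 0$. In both cases $y^{\ast}$ increases in the parameter, so $x+g$ is nondecreasing in it, giving the required monotonicity of $Q$, and Theorem \ref{Theorem comp} delivers $s(\beta_2)\succeq s(\beta_1)$ and $s(r_2)\succeq s(r_1)$.
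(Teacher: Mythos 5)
Your proof follows essentially the same route as the paper: reduce everything to the action-coupled versions of Theorems \ref{Theorem uniq} and \ref{Theorem comp} (Theorem \ref{Theorem (Coupling actions)}), establish single-valuedness and policy monotonicity via a change of variables in $x+a$ together with concavity and decreasing/increasing differences, and verify compactness and $X$-ergodicity through the splitting condition. The paper's own proof uses the variable $z=(1-\delta)(x+a)$ and invokes Lemma \ref{Lemma single-valued} wholesale; your observation that the Bellman objective is $x+\phi(y,H)$ with $\phi$ independent of $x$, so that the policy is the projection of an $x$-independent maximizer $y^{\ast}(H)$ onto $[x+1,x+\bar a]$, is a nice streamlining that makes the monotonicity of $x+g(x,s)$ in $x$ immediate, but it is the same underlying argument.

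Two small points to repair. First, in part (ii) your cross-partial $\partial^{2}\phi/\partial y\,\partial\beta=(1-\delta)\sum_{j}p_{j}\zeta_{j}V_{x}(\cdot)$ treats $V$ as independent of $\beta$, but $V$ itself varies with the discount factor; the increasing differences of $\beta\sum_{j}p_{j}V((1-\delta)y\zeta_{j},H,\beta)$ in $(y,\beta)$ require, in addition to $V_{x}\geq 0$, that $V$ have increasing differences in $(x,\beta)$, which must be propagated by the same value-iteration induction you already use for $r$ (this is exactly how the paper handles it in the proof of Theorem \ref{Theorem DOP comp}(iii)). Second, the monotonicity and ergodicity hypotheses in Theorem \ref{Theorem (Coupling actions)} are imposed on the kernel $\overline{Q}$ over $\mathcal{P}(X\times A)$, not on the state-marginal kernel $Q$; since the action is a deterministic increasing transform of the (transformed) state given $s$, your arguments transfer, but the statement should be made for $\overline{Q}$ as in the paper's displayed computation. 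Neither issue changes the substance of the proof.
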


\subsection{A Dynamic Reputation Model}\label{Sec:dynamic rep}

In this section we consider a dynamic reputation model. Motivated by the proliferation of online markets, reputation models and the design of reputation systems have recently been widely studied in the operations and management science  literature.\footnote{For example, see \cite{dellarocas2003digitization},  \cite{aperjis2010optimal}, \cite{bolton2013engineering},  \cite{papanastasiou2017crowdsourcing}, and \cite{besbes2018information}.} These systems can mitigate the mistrust between buyers and sellers participating in the marketplace (see \cite{tadelis2016reputation}). Further, online markets typically consist of many small sellers, and therefore, assuming an MFE limit is natural.

We study a dynamic reputation model in which sellers improve their reputation level over time. The state of each seller consists of the average review given to her in the past history and the total number of reviews she has received.\footnote{Typically, review systems report simple averages; the number of reviews may also be relevant as it may signal more sales and more experience from a seller.} In each period, each seller receives a review from buyers.\protect\footnote{
This assumption is made only for simplicity. We can also assume that reviews arrive according to a Poisson process.} A seller's ranking is a simple average of her past reviews. Sellers invest in their products in order to improve their reviews over time. For example, Airbnb hosts can invest in cleaning their apartments, and sellers on eBay can invest in their packaging. Higher investments increase the probability of receiving a good review. Sellers' payoffs depend on their rankings and on the number of reviews they receive as well as on the other sellers' rankings and number of reviews. Each seller's payoff function increases in her ranking and in her number of reviews and decreases in the other sellers' rankings and number of reviews. This can capture, for example, the fact that a seller with a higher ranking can charge a higher price or garner more demand. 

The dynamic reputation model we consider in this section assumes that sellers arrive and depart over time. We make this modeling choice because of its realistic appeal, and to ensure that the number of reviews does not tend to infinity. Because we study a stationary setting, we assume that the sellers' rates of arrival and departure balance, so that the market size remains constant over time (in expectation). After each review, a seller departs the market and never returns with probability $1 -\beta $ where $0 <\beta  <1$. For each seller $i$ that departs, a new seller immediately arrives. We assign the new seller the same label $i$, and a $0$ ranking, and $0$ reviews. Under this assumption, it is straightforward to show that the seller's decision problem is the same stationary, infinite horizon, expected discounted reward maximization problem that we introduced in Section \ref{Section: model}, where the discount factor is the probability of remaining in the market.\footnote{For example, \cite{iyer2014mean} provide a similar regenerative model of arrivals and departures.}

We now describe the dynamic reputation model in more detail. 

\textit{States.} The state of seller $i$ at time $t$ is denoted by $x_{i ,t} =(x_{i ,t ,1} ,x_{i ,t ,2}) \in X_{1} \times X_{2} =X$.  $x_{i ,t ,1}$ represents seller $i$'s average numerical review rating up to time $t$. We call $x_{i ,t ,1}$ seller $i$'s ranking at period $t$. $x_{i ,t ,2}$ represents the number of reviews seller $i$ has received up to period $t$. 

\textit{Actions.} At each time $t$, seller $i$ chooses an action $a_{i ,t} \in A =[0 ,\bar{a}]$ in order to improve her ranking. The action changes the seller's state in a stochastic fashion.

\textit{States' dynamics.} If seller $i$'s state at time $t-1$ is $x_{i ,t -1}$, the seller takes an action $a_{i,t -1}$ at time $t -1$, and $\zeta _{i ,t}$ is seller $i$'s realized idiosyncratic random shock at time $t$, then seller $i$'s state in the next period is given by:
\begin{equation*}x_{i ,t} =\left (\min \left (\frac{x_{i ,t-1 ,2}}{1 +x_{i ,t-1 ,2}}x_{i ,t-1 ,1} +\frac{1}{1 +x_{i ,t-1 ,2}}(k(a)+\zeta _{i ,t}) ,M_{1}\right ) ,\min \left (x_{i ,t-1 ,2} +1 ,M_{2}\right )\right ),
\end{equation*}
 where $k :A \rightarrow \mathbb{R}$ is a strictly increasing and strictly concave function that determines the impact of the seller's investment on the next period's review. The next period's numerical review, $k(a) + \zeta$, is assumed to be non-negative.\footnote{In order to simplify the analysis and preserve Assumption \ref{Assumption 0}, we assume that the numerical value of a review $k(a) + \zeta$ can be any non-negative number and not a discrete number. In a model where $k(a) + \zeta$ is discrete our results still hold as long as the optimal strategy is single-valued.}  $M_{1}>0$ is the upper bound on the sellers' ranking and $M_{2}>0$ is the upper bound on the sellers' number of reviews. The latter are useful to keep the state space compact.
The first term in the dynamics represents the simple average of the numerical reviews received so far, while the second term represents the total number of reviews. Similarly to the previous models, the random shocks represent uncertainty in the review process. 

\textit{Payoff.} The cost of a unit of investment is $d>0$. When the seller's ranking is $x_{1}$, the seller's number of reviews is $x_{2}$, the seller chooses an action $a \in A$, and the population state is $s \in \mathcal{P} (X)$, then  the seller's single-period payoff is given by 
\begin{equation*}\pi (x ,a ,s) =\frac{\nu(x_{1} ,x_{2})}{\int \nu(x_{1} ,x_{2})s(d(x_{1} ,x_{2}))} -da
\end{equation*}
where $\nu$ is increasing in $x_1$ and $x_2$, concave, continuously differentiable in $x_{1}$, and positive. The functional form resembles the logit model studied in Section \ref{Subsec:Quality ladder}. 

The cost of a unit of investment can be seen as a lever that a platform may impact by design. In particular, a platform can reduce the cost of a unit of investment for the sellers by introducing tools to improve the buyers' experience of  using the sellers' products.   For example, an e-commerce platform could help facilitating logistics for its sellers, and a rental sharing platform could help its hosts connecting cleaning services.

 We define a complete and transitive binary relation $ \succeq $ on $\mathcal{P} (X)$ by $s_{1} \succeq s_{2}$ if and only if $\int \nu (x_{1} ,x_{2})s_{1}(d(x_{1} ,x_{2})) \geq \int \nu (x_{1} ,x_{2})s_{2}(d(x_{1} ,x_{2}))$. It is easy to see that $ \succeq $ agrees with $ \succeq  _{SD}$ (see Section \ref{sec:unique}).

We  use Theorem \ref{Corr:product space} to prove that the dynamic reputation model admits a unique MFE.\footnote{For this model we are able to show the monotonicity of the kernel $Q$ with respect to $x_1$ but not with respect to $x_2$.} We also show that when the platform reduces the cost of a unit of investment then the MFE increases. 
 
\begin{theorem}
\label{THM: REPUTATION}(i) The dynamic reputation model has a unique MFE.

(ii) Let $s (d)$ be the unique MFE under the unit of investment cost $d$. Then $d_{2} \geq d_{1}$ implies $s(d_{2}) \preceq s(d_{1})$. 
\end{theorem}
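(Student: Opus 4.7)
The plan is to deduce (i) from Theorem \ref{Corr:product space}, with $X_1 \subseteq [0,M_1]$ the ranking axis and $X_2 \subseteq \{0,1,\ldots,M_2\}$ the review-count axis, and to deduce (ii) from Theorem \ref{Theorem comp} with $d$ as the monotone parameter. The easy preconditions can be disposed of first: Assumption \ref{Assumption 0} holds because $\nu$ and $k$ are continuous and $X_1\times X_2$ is compact; the binary relation $\succeq$ defined by $s_1\succeq s_2 \Leftrightarrow \int\nu\, ds_1 \geq \int\nu\, ds_2$ is complete and agrees with $\succeq_{SD}$ since $\nu$ is increasing; single-valuedness of $G$ follows, as in Lemma \ref{Lemma concave}, from strict concavity of $k$ and linearity of $-da$ once one argues $V$ is concave and increasing in $x_1$. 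For $X$-ergodicity, $x_2$ climbs deterministically to $M_2$ and stays there, so the long-run dynamics reduce to a monotone chain on $[0,M_1]$ in $x_1$; a splitting condition (the bad shock $\zeta_1$ can pull rankings down toward $0$, while $\zeta_n$ combined with $a=\bar a$ can push them up to $M_1$) yields ergodicity via monotone Markov chain results, cf.\ \cite{hopenhayn1992} and \cite{bhattacharya1988asymptotics}.

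The key step is the monotonicity of $Q$ required by Theorem \ref{Corr:product space}. For monotonicity in $x_1$, I reparametrize by $z=\tfrac{x_2}{1+x_2}x_1+\tfrac{1}{1+x_2}k(a)$, so that $a=k^{-1}\bigl((1+x_2)z-x_2 x_1\bigr)$ and the only random component the player influences is $w_1=\min(z\zeta,M_1)$. With $x_2$ and $s$ fixed, the Bellman objective in $z$ is
\[
\tfrac{\nu(x_1,x_2)}{\int\nu\, ds}\;-\;d\,k^{-1}\bigl((1+x_2)z-x_2 x_1\bigr)\;+\;\beta\sum_j p_j\, V\bigl(\min(z\zeta_j,M_1),\,\min(x_2+1,M_2),\,s\bigr).
\]
The first summand is free of $z$; the second has strictly positive cross-partial with $x_1$ because $k^{-1}$ is convex (as $k$ is concave and increasing); the third is free of $x_1$. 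Thus the objective has strictly increasing differences in $(z,x_1)$, and Topkis's theorem gives $z^{\star}(x_1,x_2,s)$ increasing in $x_1$. Since $w_1=\min(z^{\star}\zeta,M_1)$ is increasing in $z^{\star}$ and $w_2=\min(x_2+1,M_2)$ is independent of $x_1$, for any $f$ increasing in its first coordinate the integral $\int f\, dQ((x_1,x_2),s,\cdot)$ is increasing in $x_1$, which is the condition demanded by Theorem \ref{Corr:product space}. For monotonicity in $s$, the static profit $\nu(x_1,x_2)/\int\nu\, ds$ has decreasing differences in $(x_1,s)$; this property is preserved by the Bellman operator (with $s$ treated as a fixed exogenous parameter), so $V$ inherits decreasing differences in $(x_1,s)$, hence $V_1(\cdot,s)$ is decreasing in $s$, and the first-order condition in $z$ forces $z^{\star}$ to be decreasing in $s$, so $Q$ is decreasing in $s$.

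For part (ii) the parameter $d$ enters the objective only through $-d\,k^{-1}\bigl((1+x_2)z-x_2 x_1\bigr)$, whose cross-partial with $z$ is strictly negative, so $z^{\star}$ is strictly decreasing in $d$. Hence $Q((x_1,x_2),s,d_2,\cdot)\preceq_{SD}Q((x_1,x_2),s,d_1,\cdot)$ whenever $d_2\geq d_1$, i.e.\ $Q$ is increasing in $d$ under the reversed order on $\mathbb{R}$, and Theorem \ref{Theorem comp} yields $s(d_2)\preceq s(d_1)$. The main obstacle throughout is the monotonicity of $z^{\star}$ in $x_1$: the raw optimal investment $a^{\star}$ can well be \emph{decreasing} in $x_1$ because of the concavity of $V$, and only after the change of variables does Topkis apply to the correct ``deterministic part of the next ranking'' quantity whose monotonicity is what one really needs.
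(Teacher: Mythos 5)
Your overall architecture matches the paper's: both proofs run through Theorem \ref{Corr:product space} for uniqueness and the comparative-statics machinery of Theorem \ref{Theorem comp} for part (ii), and your treatment of the monotonicity of the policy — the change of variables $z=\tfrac{x_2}{1+x_2}x_1+\tfrac{1}{1+x_2}k(a)$, increasing differences of $-d\,k^{-1}\bigl((1+x_2)z-x_2x_1\bigr)$ in $(z,x_1)$ via convexity of $k^{-1}$, decreasing differences of $\nu(x_1,x_2)/\int\nu\,ds$ in $(x_1,s)$, and Topkis — is essentially the paper's argument (it invokes Lemma \ref{Lemma single-valued} for these steps).

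The genuine gap is in the $X$-ergodicity step, and it traces back to your writing down the wrong transition kernel. In this model sellers depart after each review with probability $1-\beta$ and are replaced by a fresh seller at state $(0,0)$; the consistency condition of the MFE is with respect to the kernel that includes this regeneration,
\begin{equation*}
Q(x_1,x_2,s,B)=(1-\beta)\,\delta_{\{(0,0)\}}(B)+\beta\sum_{j=1}^{n}p_j\,1_{B}\bigl(\min(\mu(x_1,x_2,s)\zeta_j,M_1),\min(x_2+1,M_2)\bigr),
\end{equation*}
not the survival-only kernel you analyze. Consequently your claim that ``$x_2$ climbs deterministically to $M_2$ and stays there, so the long-run dynamics reduce to a monotone chain on $[0,M_1]$'' is false: $x_2$ resets to $0$ with probability $1-\beta$ every period, so there is no absorption at $M_2$ and no reduction to a one-dimensional chain. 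Nor can you fall back on monotone-chain splitting results on the full state space, since the kernel is provably not monotone in $x_2$ (the paper notes exactly this, which is why Theorem \ref{Corr:product space} rather than Theorem \ref{Theorem uniq} is needed); and even on the $x_1$-slice your splitting claim (``$\zeta_1$ can pull rankings down toward $0$'') is asserted without any condition on $\zeta_1$, which the reputation model does not assume to be less than $1$. The correct — and much simpler — route, which the regeneration hands you for free, is the Doeblin condition: $Q(x,s,B)\geq(1-\beta)\,\delta_{\{(0,0)\}}(B)$ uniformly in $(x,s)$, which immediately gives $X$-ergodicity. Your monotonicity conclusions for $Q$ in $x_1$, in $s$, and in $d$ do survive once the kernel is written as the mixture above, because the regeneration term is constant in $x_1$, $s$, and $d$; but as written the proposal never accounts for it.
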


\section{\label{Sec: HAMM}Heterogeneous Agent Macroeconomic Models}
In this section we consider heterogeneous agent macro models. In these models, there is a continuum of agents facing idiosyncratic risks only (and no aggregate risks). The
heterogeneous agents make decisions given certain market prices (in \cite{aiyagari1994},
for example, the market prices are the interest rate and the wage rate). The market prices are determined by the aggregate decisions of all the agents in the market. We consider a setting similar to the one presented in \cite{acemoglu2012}.
We note that this setting encompasses many important models in the economics literature. Examples include Bewley-Aiyagari models (see \cite{bewley1986stationary}, and \cite{aiyagari1994}), and models of industry equilibrium (see \cite{hopenhayn1992entry}).
While \cite{acemoglu2012} derive important existence and comparative statics
results for these models, to the best of our knowledge there are no general uniqueness results. In this Section we show that if the agents' strategy is decreasing in the aggregator (in the sense of \cite{acemoglu2012}), there
exists a unique equilibrium. 

We now describe our specific model. 

\textit{States.} The state of player $i$ at time $t$ is denoted by $x_{i ,t} =(x_{i ,t ,1} ,x_{i ,t ,2}) \in X_{1} \times X_{2} =X$ where $X_{1} \subseteq \mathbb{R}$ and $X_{2} \subseteq \mathbb{R}^{n -1}$. For example, in Bewley models $x_{i ,t ,1}$ typically represents agent $i$'s savings at period $t$ and $x_{2}$ represents agent $i$'s income or labor productivity at period $t$ (in this case $n=2$). 

\textit{Actions.} At each time $t$, player $i$ chooses an action $a_{i ,t} \in \Gamma  (x_{i ,t}) \subset \mathbb{R}$. 

\textit{States' dynamics.} The state of a player evolves in a Markovian
fashion. If player $i$'s state at time $t -1$ is $x_{i ,t -1}$, player $i$ takes an action $a_{i ,t -1}$ at time $t -1$, and $\zeta _{i ,t}$ is player $i$'s realized idiosyncratic random shock at time $t$, then player $i$'s state in the next period is given by
\begin{equation*}(x_{i ,t ,1} ,x_{i ,t ,2}) =(a_{i ,t -1} ,m (x_{i ,t -1 ,2} ,\zeta _{i ,t})),
\end{equation*}
where $m :X_{2} \times E \rightarrow X_{2}$ is a continuous function. For example, in Bewley models, in each period agents choose how much to save for future consumption and how much to consume in the current period. The agents' labor productivity evolves exogenously and the labor productivity function $m$ determines the next period's labor productivity given the current labor productivity. So if an agent chooses to save $a$, $\zeta$ is the realized random shock, and her current labor productivity is $x_{2}$, then the agent's next period state (savings-labor productivity pair) is given by $(a,m (x_{2},\zeta))$. 

\textit{Payoff.} As in \cite{acemoglu2012},
we assume that the payoff function depends on the population state through an aggregator. That is, if the population state is $s$, then the aggregator is given by $H (s)$ where $H :\mathcal{P} (X) \rightarrow \mathbb{R}$ is a continuous function. If the aggregator is $H (s)$, the player's state is $x \in X$, and the player takes an action $a \in \Gamma  (x)$, then the player's single-period payoff function is given by $\tilde{\pi}(x ,a ,H (s))$. 

We define a complete and transitive binary relation $ \succeq $ on $\mathcal{P} (X)$ by $s_{1} \succeq s_{2}$ if and only if $H (s_{1}) \geq H (s_{2})$. We assume that $ \succeq $ agrees with $ \succeq  _{SD}$. This assumption holds in most of the heterogeneous agent macro models, where $H$ is usually assumed to be increasing with respect to first order stochastic dominance (see \cite{acemoglu2012}).

Note that under the states' dynamics defined above, and assuming that $g (x ,s)=\tilde{g}(x,H(s))$ is the optimal stationary strategy, the transition kernel $Q$ is given by
\begin{equation*}Q (x_{1} ,x_{2} ,s ,B_{1} \times B_{2}) =1_{B_{1}}(\tilde{g} (x_{1} ,x_{2} ,H (s)) \sum _{j}p_{j} 1_{B_{2}} (m (x_{2} ,\zeta _{j})),
\end{equation*}
where $B_{1} \times B_{2} \in \mathcal{B} (X_{1} \times X_{2})$. 

We show that the model has a unique MFE if the optimal strategy is decreasing in the aggregator, i.e., if $H (s_{2}) \geq H (s_{1})$ implies $\tilde{g} (x_{1} ,x_{2} ,H (s_{2})) \leq \tilde{g} (x_{1} ,x_{2} ,H (s_{1}))$, $Q$ is $X$-ergodic, and $\tilde{g}$ is increasing in $x_{1}$. We note that we cannot apply Theorem \ref{Theorem uniq} to this model, since in most applications the optimal stationary strategy $\tilde{g}$ is not increasing in $x_{2}$, and thus $Q$ may not be increasing in $x_{2}$. However, in most applications (for example, all the applications discussed in \cite{acemoglu2012})
$\tilde{g}$ is increasing in $x_{1}$. Thus, we can use Theorem \ref{Corr:product space} to show that the heterogeneous agent macro model has a unique MFE under the conditions stated above.\protect\footnote{
Note that an MFE is usually called a stationary equilibrium in the economics literature (e.g., \cite{acemoglu2012}).}

\begin{corollary}
\label{Hetero Macro uniq} Assume that $G$ is single-valued, $Q$ is $X$-ergodic, and $\tilde{g}$ is increasing in $x_{1}$ and decreasing in the aggregator. Then the heterogeneous agent macro model has a unique MFE.
\end{corollary}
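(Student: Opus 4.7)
The plan is to invoke Theorem \ref{Corr:product space} with the product decomposition $X = X_1 \times X_2$, treating $x_1$ as the coordinate in which the kernel is monotone. The ambient hypotheses are already in place: the model's definition gives that $\succeq$, defined by $s_1 \succeq s_2$ if and only if $H(s_1) \geq H(s_2)$, is a complete order on $\mathcal{P}(X)$ and agrees with $\succeq_{SD}$; the corollary's assumptions give that $G$ is single-valued and $Q$ is $X$-ergodic. What remains to verify is that $Q$ is increasing in $x_1$ and decreasing in $s$ in the sense of Section \ref{sec:unique}.

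Both verifications reduce to inspecting the explicit form of $Q$ recorded just before the corollary. For any bounded Borel $f : X_1 \times X_2 \to \mathbb{R}$, linearity of the integral against this product-form kernel gives
\begin{equation*}
\int f(y_1, y_2) \, Q((x_1, x_2), s, d(y_1, y_2)) \;=\; \sum_{j=1}^{n} p_j \, f\bigl(\tilde{g}(x_1, x_2, H(s)),\, m(x_2, \zeta_j)\bigr).
\end{equation*}
If $f$ is increasing in its first argument, then since $\tilde{g}$ is increasing in $x_1$ by hypothesis and the terms $m(x_2, \zeta_j)$ do not involve $x_1$, every summand is increasing in $x_1$, so the whole integral is. This is exactly the definition of $Q$ being increasing in $x_1$ recorded at the end of Section \ref{sec:unique}.

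For the monotonicity in $s$, fix $x = (x_1, x_2)$ and suppose $s_2 \succeq s_1$, equivalently $H(s_2) \geq H(s_1)$. Since $\tilde{g}$ is decreasing in the aggregator, $\tilde{g}(x_1, x_2, H(s_2)) \leq \tilde{g}(x_1, x_2, H(s_1))$, so for any increasing $f$ each summand at $s_2$ does not exceed the corresponding summand at $s_1$, and hence
\begin{equation*}
\int f \, dQ(x, s_2, \cdot) \;\leq\; \int f \, dQ(x, s_1, \cdot).
\end{equation*}
By the Kamae--Krengel characterization of stochastic dominance through increasing test functions, this says $Q(x, s_1, \cdot) \succeq_{SD} Q(x, s_2, \cdot)$, i.e., $Q$ is decreasing in $s$.

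With these two monotonicity properties in hand, all the hypotheses of Theorem \ref{Corr:product space} are satisfied, and uniqueness of the MFE follows directly from that theorem. The reason for routing through Theorem \ref{Corr:product space} rather than Theorem \ref{Theorem uniq} is precisely the one emphasized in the text: $\tilde{g}$ is typically not monotone in the productivity coordinate $x_2$, so $Q$ need not be increasing in the full state $x$. This asymmetry between coordinates is absorbed by the product-space version of the uniqueness theorem, and I do not anticipate any further technical obstacle beyond the two short computations above.
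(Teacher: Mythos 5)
Your proposal is correct and follows essentially the same route as the paper's own proof: both reduce the corollary to Theorem \ref{Corr:product space} and then verify that $Q$ is increasing in $x_{1}$ and decreasing in $s$ by integrating increasing test functions against the explicit product-form kernel $Q(x_{1},x_{2},s,B_{1}\times B_{2})=1_{B_{1}}(\tilde{g}(x_{1},x_{2},H(s)))\sum_{j}p_{j}1_{B_{2}}(m(x_{2},\zeta_{j}))$ and using the assumed monotonicity of $\tilde{g}$ in $x_{1}$ and in the aggregator. No gaps.
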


In most applications, the payoff function $\tilde{\pi} $ has increasing differences in $(x_{1} ,a)$ which ensures that $\tilde{g}$ is increasing in $x_{1}$. The condition that $Q$ is $X$-ergodic also usually holds in applications. For example, \cite{aiyagari1994}  proves that $Q$ is $X$-ergodic in his model. Thus, in many applications, in order to ensure uniqueness, one only needs to check that $\tilde{g}$ is decreasing in the aggregator. In the next section we illustrate this  in a Bewley-type model introduced in \cite{aiyagari1994}.

\noindent \textbf{A Bewley-Aiyagari Model}. 
Bewley models are widely studied and used in the modern macroeconomics literature (for a survey see \cite{heathcote2009}). As previously mentioned, in Bewley models agents receive a state-dependent income in each period and they solve an infinite horizon consumption-savings problem; that is, the agents must decide how much to save and how much to consume in each period. The agents can transfer assets from one period to another only by investing in a risk-free bond, and have some borrowing limit. \cite{aiyagari1994} extends the Bewley model to a general equilibrium model with production. We now describe the model of \cite{aiyagari1994} in the setting of a mean field game.

In a Bewley-Aiyagari model, $x_{1}$ represents the agents' savings and $x_{2}$ represents the agents' labor productivity. $m(x_{2},\zeta )$ represents the labor productivity function. That is, if the current labor productivity is $x_{2}$ then the next period's labor productivity is given by $m (x_{2} ,\zeta _{j})$ with probability $p_{j}$. If the agents' labor productivity is $x_{2}$ then their income is given by $wx_{2}$ where $w >0$ is the wage rate. The agents' savings rate of return is $R >0$. 

In each period $t$, the agents choose their next period's savings level $a \in \Gamma  (x_{1} ,x_{2})$  where $\Gamma  (x_{1} ,x_{2})=[ -\underline{b} ,\min \{R x_{1} +w x_{2},\bar{b}\}]$, and consume $c =R x_{1} +w x_{2} -a$. That is, the agents' savings are lower than their cash-on-hand $Rx_{1} + wx_{2}$ and higher than the borrowing constraint $\underline{b} \geq 0$. $\bar{b}$ is an upper bound on savings that ensures compactness. 

The wage rate and the interest rate are determined in general equilibrium. There is a representative firm with a production function $F(K,N)$ that is homogeneous of degree one. $N$ is the labor supply and $K$ is the capital. We assume that $F$ is twice continuously differentiable, strictly concave, and strictly increasing. The first order conditions of the firm's maximization problem yield\footnote{The firm's maximization problem is given by $\max_{K,N} F(K,N) - (R-1+\delta)K - wN$. For more details see, for example, \cite{acemoglu2012} and  \cite{Light2017}.} $F_{k}(K ,N) =R +\delta  -1$ and $F_{N}(K ,N) =w$ where $\delta  >0$ is the depreciation rate and $F_{i}(K ,N)$ denotes the partial derivative of $F$ with respect to $i =K ,N$. A standard argument\footnote{Since $F$ is homogeneous of degree one we have $F(K,1)=KF_{K}(K,1) + F_{N}(K,1)$. Using the first order conditions we have  $f(K)=Kf'(K)+w$.} shows that $R=f'(K)-\delta +1$ and $w =f (K) -f^{ \prime } (K) K$ where $F(K,1)=f(K)$.  

 In equilibrium we have $\int _{X}x_{1} s (d (x_{1} ,x_{2})) =K$ where $s$ is an invariant savings-labor productivities distribution. That is, the aggregate supply of savings equals the total capital. 
We define $H (s) =\int _{X}x_{1} s (d (x_{1} ,x_{2}))$. It is easy to see that $\succeq$ agrees with $ \succeq  _{SD}$ (see Section \ref{sec:unique}). 

The agents' utility from consumption is given by a utility function $u$ which is assumed to be strictly concave and strictly increasing. If the agents choose to save $a$ then their consumption in the current period is $R x_{1} +w x_{2} -a$. Thus, using the equilibrium conditions $R =f^{ \prime } (H(s)) - \delta +1$ and $w =f (H (s)) -f^{ \prime } (H (s)) H (s)$, in a Bewley-Aiyagari model the payoff function $\tilde{\pi} $ is given by
\begin{equation*}\tilde{\pi}  (x ,a ,H (s)) =u\left((f^{ \prime } (H (s)) - \delta + 1)x_{1} +(f (H (s)) -f^{ \prime } (H (s)) H (s)) x_{2} -a\right).
\end{equation*}
It is easy to establish that $G$ is single-valued and that Assumption \ref{Assumption 0} holds. Thus, the existence of an equilibrium in a Bewley-Aiyagari model follows from Theorem \ref{Theorem existence}.\footnote{Some of the previous existence results rely on the $X$-ergodicity of $Q$ (e,g., \cite{acikgoz2015existence}) or on monotonicity arguments (e.g., \cite{acemoglu2012}). The proof presented in this paper shows that these conditions are not needed in order to establish the existence of an equilibrium.}   

Under mild technical conditions on the utility function (for example, if $u$ is bounded or if $u$ belongs to the constant relative risk aversion class), the $X$-ergodicity of $Q$ can be proven in a similar manner to \cite{acikgoz2015existence}. It can be established also that the next period's
savings are increasing in the current period's savings, i.e., $\tilde{g}$ is increasing in $x_{1}$. Thus, to prove the uniqueness of an MFE in a Bewley-Aiyagari model, one needs to prove
that $\tilde{g}$ is decreasing in the aggregator $H(s)$. In a recent paper, \cite{Light2017} proves the uniqueness of an MFE for the special case that the agents' utility function is in the CRRA (constant relative risk aversion) class with a relative risk aversion coefficient
that is less than or equal to one, and the production function's elasticity of substitution is bounded below by $1$. Under these assumptions, we can use the results in \cite{Light2017} to show that $\tilde{g}$ is decreasing in the aggregator $H(s)$. Then, we can use Corollary \ref{Hetero Macro uniq} to prove the uniqueness of an MFE. As a note for future research, our results suggest that the result in \cite{Light2017}
could be generalized, weakening the conditions on the relative risk aversion and on the production function. With this, we believe our approach could be used to show uniqueness for a broader class of heterogeneous agent macro models. Finally, we note that the uniqueness result in \cite{hopenhayn1992entry}
can be obtained from Corollary \ref{Hetero Macro uniq} also. For the sake of brevity we omit the details here.

\section{Conclusions}
This paper studies the existence and uniqueness of an MFE in stochastic games with a general state space. We provide conditions that ensure
the uniqueness of an MFE. We also prove that there exists an MFE under continuity and concavity conditions on the primitives of the model. We show that
a general class of dynamic oligopoly models satisfies these conditions, and thus, these models have a unique MFE. Furthermore, we prove the existence of a unique MFE in heterogeneous agent macro models. We also derive general comparative statics results regarding the MFE and apply them to dynamic oligopoly models. 

We believe that our results can be applied to other models in operations research
and economics. For example, in order to analyze market design problems in online platforms, like in the reputation model we studied, it is natural to assume a large-scale MFE limit. Typical questions of interest in these contexts involve the market's response to platforms' market design choices. Hence, knowing that this response is unique and that one can predict its directional changes could significantly strengthen the analysis of these platforms.

We believe our results can be extended to prove the uniqueness of an invariant distribution for a general Markov chain where the next period's state  depends on the previous state and on the previous state's distribution. These Markov chains can capture other interesting applications in operations research, such as strategic queueing systems. We leave this direction for future work.

\bibliographystyle{ecta}
\bibliography{unique1}


\newpage

\appendix

\section{\label{Sec: Ext} Appendix: Extensions}

In this section we extend the model presented in Section \ref{Section: model}. In Section \ref{Sec: Coupling th} we study a model where the players are coupled through actions and in Section \ref{Sec: Ex-ante h} we study a model where the players are ex-ante heterogeneous. 

\subsection{Coupling Through Actions \label{Sec: Coupling th}} 

In this section we consider a model where the transition function and the payoff function of each player depend on both the states and the actions of all other players. The model is the same as the original model in Section \ref{Section: model} except that now the probability measure $s$ describes the joint distribution of players over actions and states and not only over states, that is, $s \in \mathcal{P}(X \times A)$. Thus, the transition function $w(x,a,s,\zeta)$ and the payoff function $\pi (x,a,s)$ depend on the joint distribution over state-action pairs $s \in \mathcal{P}(X \times A)$. We refer to $s \in \mathcal{P}(X \times A)$ as the population action-state profile and to the marginal distribution of the population action-state profile over $X$ as the population state (i.e., the population state's distribution is described by the  probability measure $s\left (\cdot ,A\right )$). 

An MFE is defined similarly to the definition in Section \ref{Section: model}. In an MFE, every player
conjectures that $s$ is the fixed long run population action-state profile, and plays according to a stationary strategy $g$. If every  player plays according to the strategy $g$ when the population action-state profile is $s$, then $s$ constitutes an invariant distribution. 

Given the stationary strategy $g$, $s \in \mathcal{P}(X \times A)$ is an invariant distribution if 
\begin{equation}s\left (B \times D\right ) =\int _{X}\int _{B}1_{D}(g\left (y ,s\right ))Q\left (x ,s ,dy\right )s\left (dx ,A\right ) =\int _{X}\overline{Q}(x ,s ,B \times D)s(dx ,A), \label{Equation: Action1}
\end{equation}
for all $B \times D \in \mathcal{B}(X \times A)$ where $Q(x,s,B) = \Pr (w(x ,s ,g(x ,s) ,\zeta ) \in B)$ and\protect\footnote{
Note that $\overline{Q}$ is a Markov kernel on $X \times A.$}    
\begin{equation*}\overline{Q}\left (x ,s ,B \times D\right ) =\int _{B}1_{D}(g\left (y ,s\right ))Q\left (x ,s ,dy\right ).
\end{equation*}
To see that Equation (\ref{Equation: Action1}) holds, first assume that $X$ and $A$ are discrete sets. The joint probability mass  function of a stationary distribution $s\left (y ,a\right )$ is given by \begin{equation*}s(y ,a) =s(y ,A)\overline{s}(a\vert y) =s(y ,A)1_{\{a\}}(g(y ,s))
\end{equation*} where $\overline{s}(a\vert y)$ is the probability of playing the action $a \in A$ given that the state is $y \in X$. Since the players use the pure strategy $g$ we have $\overline s(a\vert y)=1_{\{a\}}(g(y ,s))$.
Thus, \begin{equation*}s(B \times D) =\sum _{y \in B}\sum _{a \in D}s(y ,A)1_{\{a\}}(g(y ,s)) =\sum _{y \in B}s(y,A)1_{D}(g(y ,s)).
\end{equation*}

In addition, since $s$ is invariant, the marginal distribution $s\left ( \cdot  ,A\right )$ must satisfy $s(y ,A) =\sum _{x \in X}s(x ,A)Q(x ,s ,y)$. Thus,  

\begin{equation*}s(B \times D) =\sum _{x \in X}\sum _{y \in B}1_{D}\left (g\left (y ,s\right )\right)Q(x,s,y)s(x ,A).
\end{equation*}
Similarly, Equation (\ref{Equation: Action1}) holds in the general state space. 

If $A$ is compact then $X \times A$ is compact, and thus, $\mathcal{P}\left (X \times A\right )$ is compact in the weak topology.  Similar arguments to the arguments in the proof of Theorem \ref{Theorem existence} show that the operator $\overline{\Phi } :\mathcal{P} (X) \rightarrow \mathcal{P} (X)$ defined by
\begin{equation*}\overline{\Phi } s(B \times D) =\int  _{X}\overline{Q}(x,s,B \times D)s(dx,A).
\end{equation*}
is continuous (see more details in the proof of Theorem \ref{Theorem (Coupling actions)}). Thus,  as in the proof of Theorem \ref{Theorem existence}, we can apply Schauder-Tychonoff's fixed point theorem to prove that $\overline{\Phi}$ has a fixed point.

The uniqueness result holds under the same conditions as the conditions in Theorem \ref{Theorem uniq} except that the assumptions on the Markov kernel $Q$ in Assumption \ref{Assumption uniq} part (i) are assumed on the Markov kernel $\overline{Q}$. The proof of Theorem \ref{Theorem (Coupling actions)} part (i) is essentially the same as the proof of Theorem \ref{Theorem uniq}. Similarly, Theorem \ref{Theorem (Coupling actions)} part (iii) holds when the assumptions on the Markov kernel $Q$ are assumed on the Markov kernel $\overline{Q}$.  

We summarize the discussion in the following Theorem.

\begin{theorem}
\label{Theorem (Coupling actions)}Consider the model described in this section. Suppose that the action set $A$ is compact. 

(i) Under the assumptions of Theorem  \ref{Theorem uniq} where $Q$ is replaced by $\overline{Q}$  the MFE is unique.

(ii) Under the assumptions of Theorem \ref{Theorem existence} there exists an MFE. 

(iii) Let $(I,\succeq_{I})$ be a partially ordered set. Assume that $\overline{Q}$ is increasing in $e$ on $I$.Then, under the assumptions of part (i), the unique MFE $s (e)$ is increasing in the following sense: $e_{2}  \succeq _{I}e_{1}$ implies $s (e_{2}) \succeq s (e_{1})$.\protect\footnote{
Recall that we say that $\overline{Q}$ is increasing in $e$ if $\overline{Q} (x ,s ,e_{2} , \cdot ) \succeq  _{SD}\overline{Q} (x ,s ,e_{1}, \cdot )$ for all $x$, $s$, and all $e_{2} ,e_{1} \in I$ such that $e_{2}  \succeq _{I}e_{1}$. Note that the orders $\succeq _{SD}$ and $\succeq$ are on measures over state-action pairs.} 
\end{theorem}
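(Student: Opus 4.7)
The three parts of Theorem~\ref{Theorem (Coupling actions)} are direct lifts of Theorems~\ref{Theorem uniq}, \ref{Theorem existence}, and \ref{Theorem comp} from $\mathcal{P}(X)$ to $\mathcal{P}(X \times A)$, with $Q$ replaced by $\overline{Q}$ and the auxiliary operator $M_s$ replaced by its action-coupled analogue. I would organize the proof in three short blocks, each a careful transcription of the base argument, and flag the single genuinely new ingredient, which appears in part (ii).

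For part (i), I would introduce
\[
\overline{M}_{s}\theta(B \times D) \;=\; \int_{X} \overline{Q}(x,s,B \times D)\,\theta(dx,A),
\]
so that an MFE is exactly a pair with $s$ a fixed point of $\overline{M}_{s}$, and $X \times A$-ergodicity of $\overline{Q}$ plays the role that $X$-ergodicity of $Q$ plays in Theorem~\ref{Theorem uniq}. I would then reproduce, almost verbatim, the chain of inequalities in the proof of Theorem~\ref{Theorem uniq}: take two MFEs $s_1 ,s_2$ with $s_2 \succeq s_1$ and two measures $\theta_1  \succeq _{SD} \theta_2$ in $\mathcal{P}(X \times A)$, use that $\overline{Q}(x,s,B \times D)$ is decreasing in $s$ on upper sets and that $\int 1_{B \times D}(y,a)\,\overline{Q}(x,s,d(y,a))$ is increasing in $x$ on upper sets, to conclude $\overline{M}_{s_1} \theta_1  \succeq _{SD} \overline{M}_{s_2} \theta_2$. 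Iterating and invoking $X \times A$-ergodicity yields $s_1  \succeq _{SD} s_2$, hence $s_1 \succeq s_2$ by agreement, so $s_1 \sim s_2$; this forces $\pi$ and $w$ (and thus the pure-strategy optimizer $g$ and $\overline{Q}(x,\cdot,\cdot)$) to coincide under $s_1$ and $s_2$, and a final appeal to $X \times A$-ergodicity gives $s_1 = s_2$. Completeness of $\succeq$ then closes the argument exactly as before.

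For part (ii), compactness of $A$ (and of $X$) makes $X \times A$ a compact metric space, so $\mathcal{P}(X \times A)$ is compact and convex in the weak topology. The aim is to apply Schauder--Tychonoff's theorem to $\overline{\Phi}:\mathcal{P}(X \times A) \to \mathcal{P}(X \times A)$, which reduces to proving weak continuity of $\overline{\Phi}$. This is the main obstacle: unlike in the proof of Theorem~\ref{Theorem existence}, the expression for $\overline{Q}$ contains the indicator $1_D(g(y,s))$, which is discontinuous in its argument and has $s$-dependence through the selection $g$. I would address this in two steps. First, I would invoke Berge's maximum theorem together with single-valuedness of $G$ (as in Lemma~\ref{Lemma 1} and Lemma~\ref{Lemma concave}) to conclude that $g$ is jointly continuous in $(x,s)$. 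Second, I would verify that for any bounded continuous $f:X \times A \to \mathbb{R}$,
\[
\int f(y,a)\,\overline{\Phi}s(d(y,a)) \;=\; \int_{X}\!\int_{X} f\!\left(y,g(y,s)\right) Q(x,s,dy)\,s(dx,A)
\]
is continuous in $s$ in the weak topology, by invoking the generalized bounded convergence theorem of Serfozo exactly as in the proof of Theorem~\ref{Theorem existence}: continuity of $g$ gives continuous convergence of the integrand $f(y,g(y,s_n)) \to f(y,g(y,s))$ along $s_n \Rightarrow s$, and continuity of $w$ and $\pi$ propagates through $Q$ as in the base case.

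For part (iii), I would mimic Theorem~\ref{Theorem comp} verbatim, using uniqueness from part (i). Setting $\theta_1 = s(e_2)$, $\theta_2 = s(e_1)$, and comparing $\overline{M}_{s(e_2),e_2}$ to $\overline{M}_{s(e_1),e_1}$, monotonicity of $\overline{Q}$ in $e$ combined with monotonicity in $x$ and in $s$ gives $\overline{M}_{s(e_2),e_2}\theta_1  \succeq _{SD} \overline{M}_{s(e_1),e_1}\theta_2$ for comparable starting measures; iterating and passing to the weak limit via $X \times A$-ergodicity yields $s(e_2)  \succeq _{SD} s(e_1)$, and then $s(e_2) \succeq s(e_1)$ by agreement. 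Overall, the only step that is not a straightforward lift of the base results to $\mathcal{P}(X \times A)$ is the weak continuity of $\overline{\Phi}$ in part (ii); once joint continuity of the pure-strategy selection $g$ is in hand, Serfozo's theorem reduces everything to the argument already used in Theorem~\ref{Theorem existence}.
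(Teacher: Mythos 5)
Your proposal is correct and takes essentially the same route as the paper: parts (i) and (iii) are deferred verbatim to the proofs of Theorems~\ref{Theorem uniq} and~\ref{Theorem comp} with $Q$ replaced by $\overline{Q}$, and part (ii) reduces, exactly as you say, to the weak continuity of $\overline{\Phi}$. Your key step of rewriting $\int f\,d(\overline{\Phi}s)$ as an integral of the continuous composite $f\left(w(x,g(x,s),s,\zeta_j),g(w(x,g(x,s),s,\zeta_j),s)\right)$ -- thereby sidestepping the discontinuous indicator $1_{D}(g(y,s))$ -- and then invoking continuity of $g$ together with Serfozo's theorem is precisely the paper's Equation~(\ref{Equation: Action2}) argument.
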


The assumptions on $\overline{Q}$ that are needed in order to guarantee the uniqueness of an MFE can be verified in a similar manner to the assumptions on $Q$. In particular, in some models it is enough to show that the policy function $g(x,s)$ is increasing in the state $x$ and decreasing in the population action-state profile state $s$ which is a natural property in many dynamic oligopoly models (see Section \ref{Sec: DOM}). In Section \ref{sec:advertising} we prove that the policy  function $g(x,s)$ is increasing in $x$ and decreasing in $s$ in a dynamic advertising model where each player's payoff function depends on the other players' actions, and we use Theorem \ref{Theorem (Coupling actions)} to prove that the model has a unique MFE. 

\subsection{Ex-ante Heterogeneity \label{Sec: Ex-ante h}}
In this section we study a mean field model with ex-ante heterogeneous players. We assume that the players are heterogeneous in their payoff functions and in their transition functions. Assume that before the time horizon, each player has a type $\theta  \in \Theta$, where $\Theta $ is a finite partially ordered set. Each player's type is fixed throughout the horizon. Let $\Upsilon$ be the probability mass function over the type space; $\Upsilon (\theta )$ is the mass of players whose type is $\theta  \in \Theta$, which is common knowledge. Adding the argument $\theta  \in \Theta$ to the functions defined in Section \ref{Section: model}, we can modify the definitions of Section \ref{Section: model} to include the ex-ante heterogeneity of the players. In particular, we denote by $w (x ,a ,s ,\zeta  ,\theta )$ the transition function of type $\theta  \in \Theta $ and by $\pi  (x ,a ,s ,\theta )$ the payoff function of type $\theta  \in \Theta $.

Let $X_{h} =X \times \Theta$ be an extended state space for the mean field model with ex-ante heterogeneous players. If a player's extended state is $x_{h} =(x ,\theta ) \in X_{h}$ then the player's state is $x$ and the player's type is $\theta $. Let $s_{h}$ be the population state over the extended state space, i.e., $s_{h} \in \mathcal{P}(X \times \Theta )$. 

For a probability measure $s_{h} \in \mathcal{P}(X \times \Theta )$, define a probability measure $S\left (s_{h}\right ) \in \mathcal{P}(X)$ by
\begin{equation*}S\left (s_{h}\right )(B) =\sum _{\theta  \in \Theta }s_{h}(B ,\theta )
\end{equation*} for all $B \in \mathcal{B}(X)$. That is, $S(s_{h})$ is the marginal distribution of $s_{h}$ that describes the population state. 

For the model with ex-ante heterogeneous players we define the payoff function $\pi _{h}\left (x_{h} ,a ,s_{h}\right ) =\pi  (x ,a ,S\left (s_{h}\right ) ,\theta )$. Note that we consider a model where each player's payoff function depends on the other players' states (the population state) and not on the other players' types. This seems reasonable in most applications, as types usually represent ex-ante heterogeneity in the payoff functions, discount factors, etc. We now define the transition function.

For a fixed extended population state $s_{h} \in \mathcal{P}(X \times \Theta )$ and a strategy $g(x ,S\left (s_{h}\right ) ,\theta )$, the probability that player $i$'s next period's state will lie in a set $B \times D \in \mathcal{B}(X) \times 2^{\Theta }$, given that her current state is $x_{h} =(x,\theta) \in X_{h}$, her type is $\theta $, and she takes the action $a =g (x ,S\left (s_{h}\right ) ,\theta )$, is:
\begin{equation*}Q_{h} (x_{h} ,s_{h} ,B \times D) =\Pr  (w(x,g(x,S\left (s_{h}\right ) ,\theta ) ,S\left (s_{h}\right ) ,\zeta  ,\theta ) \in B)1_{D}\left (\theta \right).
\end{equation*}

These definitions map the payoff function and transition function in the model with ex-ante heterogeneous players to the model with ex-ante homogeneous players that we considered in Section \ref{Section: model}. Thus, all the results in this paper hold also in the case of ex-ante heterogeneity where the assumptions that we made on $\pi $, $w$ and $Q$ are now assumed on $\pi _{h}$, $w_{h}$ and $Q_{h}$. Thus, all our results can easily be extended to the case of ex-ante heterogeneous players. Note that in this model, players of different types may play different MFE strategies. We now provide more details. 

Similarly to Section \ref{Section: model}, in an MFE every player plays according to the strategy $g$ when the extended population state is $s_{h}$ and $s_{h}$ constitutes an invariant distribution given the strategy $g$. That is, $s_{h}$ satisfies 
\begin{equation*}
    s_{h}(B \times D) = \int _{X_{h}} Q_{h} (x_{h} ,s_{h} ,B \times D) s_{h}(dx_{h})
\end{equation*}
for all $B \times D \in \mathcal{B}(X) \times 2^{\Theta}$. 

The following theorem follows immediately from the results in the main text when $Q$ is replaced by $Q_{h}$. Note that $X_{h} = X \times \Theta$ is a product space so we can use Theorem \ref{Corr:product space} instead of Theorem \ref{Theorem uniq} to prove the uniqueness of an MFE.  

\begin{theorem} \label{Thm: ex-ante het}
\label{Theorem: Ex-ante het} Consider the model described in this section. 

(i) Under the assumptions of Theorem \ref{Corr:product space} (with the state space $X\times \Theta$) where $Q$ is replaced by $Q_{h}$, the MFE is unique.

(ii) Under the assumptions of Theorem \ref{Theorem existence} there exists an MFE. 

(iii) Let $(I,\succeq_{I})$ be a partially ordered set. Assume that $Q_{h}$ is increasing in $e$ on $I$. Then, under the assumptions of part (i), the unique MFE $s_{h} (e)$ is increasing in the following sense: $e_{2}  \succeq _{I}e_{1}$ implies $s _{h} (e_{2}) \succeq s_{h} (e_{1})$.
\end{theorem}

We define the $X$-transition function of a type $\theta$ player by 
$$Q_{\theta}(x,s_{h},B) = \Pr  (w(x,g(x,S\left (s_{h}\right ) ,\theta ) ,S\left (s_{h}\right ) ,\zeta  ,\theta ) \in B) $$
for all $B \in \mathcal{B}(X)$.
As discussed in Section \ref{sec:unique}, the key assumption that implies the uniqueness of an MFE is related to the transition function's monotonicity properties. In particular, the assumption is that the transition function is increasing in the players' own states and decreasing in the extended population state. In the case of ex-ante heterogeneity, the next Lemma shows that if the transition function of each player $Q_{\theta}$ is increasing in $x$ and decreasing in $s_{h}$ for every type $\theta$ then $Q_{h}$ is increasing in $x$ and decreasing in $s_{h}$ with respect to $x$. This fact is useful for applications when we want to verify the monotonicity conditions needed in Theorem \ref{Thm: ex-ante het} part (i) that imply the uniqueness of an MFE. 

\begin{lemma}\label{Lemma: Ex-ante}
Assume that $Q_{\theta}$ is increasing in $x$ and decreasing in $s_{h}$ for every type $\theta$. Then $Q_{h}$ is increasing in $x$ and decreasing in $s_{h}$ with respect to $x$. 
\end{lemma}

\section{Appendix: Proofs}

\subsection{Uniqueness: Proof of Theorem \ref{Corr:product space}} 
\begin{proof} [Proof of Theorem~\ref{Corr:product space}]
Assume without loss of generality that $Q$ is increasing in $x_{1}$ and decreasing in $s$ with respect to $x_{1}$. 
 
For $s_{1} ,s_{2} \in \mathcal{P} (X)$ we write $s_{1}  \succeq _{SD ,X_{1}}s_{2}$ if for all functions $f :X_{1} \times X_{2} \rightarrow \mathbb{R}$ that are increasing in the first argument (i.e., $x_{1}^{ \prime } \geq x_{1}$ implies that $f(x_{1}^{ \prime } ,x_{2}) \geq f(x_{1} ,x_{2})$ for all $x_{2} \in X$) we have
\begin{equation*}\int _{X}f (x_{1} ,x_{2}) s_{1} (d (x_{1} ,x_{2})) \geq \int _{X}f (x_{1} ,x_{2}) s_{2} (d (x_{1} ,x_{2})).
\end{equation*}
We note that
if $ \succeq $ agrees with $ \succeq  _{SD}$, then $ \succeq $ agrees with $ \succeq _{SD ,X_{1}}$ (recall that $s_{2}  \succeq  _{SD}s_{1}$ if the last inequality holds for every increasing function $f :X_{1} \times X_{2} \rightarrow \mathbb{R}$). 

Let $f :X_{1} \times X_{2} \rightarrow \mathbb{R}$ be increasing in the first argument, $\theta _{1} ,\theta _{2} \in \mathcal{P} (X)$ and assume that $\theta _{1}  \succeq _{SD ,X_{1}}\theta _{2}$. Let $s_{1} ,s_{2}$ be two MFEs such that $s_{2} \succeq s_{1}$. We have
\begin{align*}\int _{X}f (y_{1} ,y_{2}) M_{s_{2}} \theta _{2} (d (y_{1} ,y_{2})) &  =\int _{X}\int _{X}f(y_{1} ,y_{2})Q((x_{1} ,x_{2}) ,s_{2} ,d(y_{1} ,y_{2}))\theta _{2}(d (x_{1} ,x_{2}))  \\
 & \leq \int _{X}\int _{X}f(y_{1} ,y_{2})Q((x_{1} ,x_{2}) ,s_{2} ,d(y_{1} ,y_{2}))\theta _{1}(d (x_{1} ,x_{2}))  \\
 & \leq \int _{X}\int _{X}f(y_{1} ,y_{2})Q((x_{1} ,x_{2}) ,s_{1} ,d(y_{1} ,y_{2}))\theta _{1}(d (x_{1} ,x_{2}))  \\
 &  =\int f (x_{1} ,x_{2}) M_{s_{1}} \theta _{1} (d (x_{1} ,x_{2})).\end{align*}
 Thus, $M_{s_{1}} \theta _{1}  \succeq _{SD ,X_{1}}M_{s_{2}} \theta _{2}$. The first inequality follows from the facts that $f$ is increasing in the first argument, $Q$ is increasing in $x_{1}$, and $\theta _{1}  \succeq _{SD ,X_{1}}\theta _{2}$.  The second inequality follows from the fact that $Q$ is decreasing in $s$ with respect to $x_{1}$.

We conclude that $M_{s_{1}}^{n} \theta _{1}  \succeq _{SD ,X_{1}}M_{s_{2}}^{n} \theta _{2}$ for all $n \in \mathbb{N}$. $Q$ being $X$-ergodic implies that $M_{s_{i}}^{n} \theta _{i}$ converges weakly to $\mu _{s_{i}} =s_{i}$. Since $ \succeq _{SD ,X_{1}}$ is a closed order, we have $s_{1}  \succeq _{SD ,X_{1}}s_{2}$ which implies that $s_{1} \succeq s_{2}$. The rest of the proof is the same as the proof of Theorem \ref{Theorem uniq}. 
\end{proof}

\subsection{Existence: Proofs of Theorem \ref{Theorem existence} and Lemma \ref{Lemma concave}}
We first introduce preliminary notation and results. 

Let $B (X \times \mathcal{P} (X))$ be the space of all bounded functions on $X \times \mathcal{P} (X)$. Define the operator $T :B (X \times \mathcal{P} (X)) \rightarrow B (X \times \mathcal{P} (X))$ by
\begin{equation*}Tf (x ,s) =\underset{a \in \Gamma  (x)}{\max }h (x ,a ,s ,f)
\end{equation*}where
\begin{equation*}h (x ,a ,s ,f) =\pi  (x ,a ,s) +\beta  \sum _{j =1}^{n}p_{j} f (w (x ,a ,s ,\zeta _{j}) ,s).
\end{equation*}
The operator $T$ is called the Bellman operator.

\begin{lemma}
\label{Lemma 1}The optimal strategy correspondence $G (x ,s)$ is non-empty, compact-valued and upper hemicontinuous. 
\end{lemma}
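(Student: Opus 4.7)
The plan is to establish continuity of the value function $V$ on $X \times \mathcal{P}(X)$ and then invoke Berge's Maximum Theorem. Throughout, I use that $\mathcal{P}(X)$ endowed with the weak topology is a compact metrizable space (since $X$ is compact separable), and I work in $C_b(X \times \mathcal{P}(X))$, the Banach space of bounded continuous real-valued functions on $X \times \mathcal{P}(X)$ under the sup norm.

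\textbf{Step 1: The Bellman operator preserves $C_b(X \times \mathcal{P}(X))$.} Given $f \in C_b(X \times \mathcal{P}(X))$, consider the integrand
\[
h(x,a,s,f) = \pi(x,a,s) + \beta \sum_{j=1}^{n} p_j\, f(w(x,a,s,\zeta_j), s).
\]
Under Assumption~\ref{Assumption 0}, $\pi$ and $w$ are jointly continuous and $f$ is continuous, so $h$ is continuous in $(x,a,s)$; it is also bounded since $\pi$ and $f$ are. Because $\Gamma$ is compact-valued and continuous (both upper and lower hemicontinuous), Berge's Maximum Theorem applied to $h(x,a,s,f)$ over $a \in \Gamma(x)$ yields that $Tf(x,s) = \max_{a \in \Gamma(x)} h(x,a,s,f)$ is continuous in $(x,s)$, and clearly bounded. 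Hence $T: C_b(X \times \mathcal{P}(X)) \to C_b(X \times \mathcal{P}(X))$.

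\textbf{Step 2: $T$ is a contraction with a unique fixed point.} Blackwell's sufficient conditions hold: $T$ is monotone ($f \le g$ implies $Tf \le Tg$), and it discounts ($T(f + c) = Tf + \beta c$ for constants $c$). Therefore $T$ is a $\beta$-contraction on the Banach space $C_b(X \times \mathcal{P}(X))$, and by the Banach fixed point theorem has a unique fixed point $V^{\ast} \in C_b(X \times \mathcal{P}(X))$. Standard dynamic programming arguments (e.g., \cite{bertsekas1978stochastic}) identify $V^{\ast}$ with the value function $V$; in particular $V$ is continuous on $X \times \mathcal{P}(X)$.

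\textbf{Step 3: Apply Berge's Maximum Theorem to $G$.} Substituting $f = V$ into the analysis of Step~1, the map $(x,a,s) \mapsto h(x,a,s,V) = \pi(x,a,s) + \beta \sum_{j=1}^n p_j V(w(x,a,s,\zeta_j), s)$ is continuous. Since $\Gamma$ is compact-valued and continuous, Berge's Maximum Theorem immediately yields that the argmax correspondence
\[
G(x,s) = \operatorname*{argmax}_{a \in \Gamma(x)} h(x,a,s,V)
\]
is non-empty, compact-valued, and upper hemicontinuous on $X \times \mathcal{P}(X)$, which is the claim.

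The main obstacle is Step~1 / Step~2, namely showing that the fixed point $V$ actually lies in the space of \emph{continuous} functions on $X \times \mathcal{P}(X)$ (continuity in $s$ with respect to the weak topology is the subtle part). This is handled by showing that $T$ preserves continuity in $(x,s)$ jointly, which requires jointly continuity of $\pi$ and $w$ on $X \times A \times \mathcal{P}(X)$ (with $\mathcal{P}(X)$ carrying the weak topology), exactly as guaranteed by Assumption~\ref{Assumption 0}(i). Once continuity of $V$ is in hand, the conclusion is a direct application of Berge's Theorem.
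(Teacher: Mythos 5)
Your proof is correct and follows essentially the same route as the paper: show the Bellman operator preserves joint continuity via the maximum theorem, deduce continuity of $V$ (you via Blackwell/Banach on $C_b(X\times\mathcal{P}(X))$, the paper via uniform convergence of $T^n f$ to $V$ and closedness of continuous functions under uniform limits — the same contraction argument in different clothing), and then apply the maximum theorem once more to the argmax correspondence $G$.
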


\begin{proof}
Assume that $f \in B (X \times \mathcal{P} (X))$ is (jointly) continuous. Then for each $\zeta  \in E$, $f (w (x ,a ,s ,\zeta ) ,s)$ is continuous as the composition of continuous functions. Thus, $h (x ,a ,s ,f)$ is continuous as the sum of continuous functions. Since $\Gamma  (x)$ is continuous, the maximum theorem (see Theorem 17.31 in \cite{aliprantis2006infinite})
implies that $T f (x ,s)$ is jointly continuous. 

We conclude that for all $n =1 ,2 ,3 \ldots$, $T^{n} f$ is continuous. Under Assumption \ref{Assumption 0}, standard dynamic programming arguments (see \cite{bertsekas1978stochastic}) show that
$T^{n} f$ converges to $V$ uniformly. Since the set of continuous functions is closed under uniform convergence, $V$ is continuous. Thus, $h(x,a,s,V)$ is continuous. From the maximum theorem, $G(x,s)$ is non-empty, compact-valued and upper hemicontinuous. 
\end{proof}

We say that $k_{n} :X \rightarrow \mathbb{R}$ converges continuously to $k$ if $k_{n} (x_{n}) \rightarrow k (x)$ whenever $x_{n} \rightarrow x$. The following Proposition is a special case of Theorem 3.3 in \cite{serfozo1982convergence}.

\begin{proposition}
\label{Prop1}Assume that $k_{n} :X \rightarrow \mathbb{R}$ is a uniformly bounded sequence of functions. If $k_{n} :X \rightarrow \mathbb{R}$ converges continuously to $k$ and $s_{n}$ converges weakly to $s$ then
\begin{equation*}\underset{n \rightarrow \infty }{\lim }\int _{X}k_{n} (x) s_{n} (d x) =\int _{X}k (x) s (d x).
\end{equation*}
\end{proposition}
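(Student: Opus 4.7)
The plan is to lift the statement to a common probability space via Skorokhod's representation theorem and then reduce it to the bounded convergence theorem. Under Assumption \ref{Assumption 0}(ii) together with the standing assumption that $X$ is a separable metric space, $X$ is compact metric and therefore Polish, so the weak convergence $s_{n}\to s$ in $\mathcal{P}(X)$ admits a Skorokhod representation: there exists a probability space $(\Omega,\mathcal{F},P)$ carrying $X$-valued random variables $Y_{n},Y$ such that $Y_{n}$ has law $s_{n}$, $Y$ has law $s$, and $Y_{n}\to Y$ almost surely.

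On the full-measure event where $Y_{n}(\omega)\to Y(\omega)$, I would apply the continuous-convergence hypothesis directly to the (deterministic, for fixed $\omega$) sequence $Y_{n}(\omega)$ to get $k_{n}(Y_{n}(\omega))\to k(Y(\omega))$. Since $\{k_{n}\}$ is uniformly bounded by some constant $M$, the random variables $k_{n}(Y_{n})$ are dominated by the constant $M$, so the bounded convergence theorem yields $\mathbb{E}[k_{n}(Y_{n})]\to\mathbb{E}[k(Y)]$. Changing variables back via the image-measure formula, $\mathbb{E}[k_{n}(Y_{n})]=\int_{X}k_{n}\,ds_{n}$ and $\mathbb{E}[k(Y)]=\int_{X}k\,ds$, which is exactly the claim.

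The one obstacle I anticipate is a minor measurability issue: one needs $k$ to be Borel measurable so that $\int k\,ds$ and the expectation $\mathbb{E}[k(Y)]$ are well defined. This is handled by specializing the continuous-convergence hypothesis to the constant sequence $x_{n}\equiv x$, which yields pointwise convergence $k_{n}(x)\to k(x)$ on $X$; hence $k$ is a pointwise limit of the (necessarily Borel measurable) functions $k_{n}$ and is therefore Borel measurable. A more direct but less clean alternative would split $\int k_{n}\,ds_{n}-\int k\,ds=\int(k_{n}-k)\,ds_{n}+\int k\,d(s_{n}-s)$ and control each term using tightness of $\{s_{n}\}$ (automatic since $X$ is compact) together with a uniform bound and continuous convergence; however, the Skorokhod route avoids the need to track subsequences and is the cleanest way to package the argument.
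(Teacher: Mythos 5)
Your proof is correct, but it takes a different route from the paper: the paper does not argue the proposition at all, it simply observes that the statement is a special case of Theorem 3.3 in \cite{serfozo1982convergence} (a generalized bounded convergence theorem for sequences of measures converging weakly together with continuously convergent integrands). You instead give a self-contained argument via the Skorokhod--Dudley representation, which is legitimate here because $X$ is a compact (hence Polish) metric space, and the reduction to the classical bounded convergence theorem on the common probability space is clean. Your handling of the measurability of $k$ is also fine, and in fact slightly more is true: continuous convergence forces $k$ to be continuous (a diagonal argument shows that if $k(x_m)\not\to k(x)$ along some $x_m\to x$, one can interleave a sequence violating $k_n(x_n)\to k(x)$), so $k$ is automatically bounded and continuous and $\int k\,ds$ is unproblematic. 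What each approach buys: the citation to Serfozo keeps the paper short and covers more general settings (e.g., non-probability measures, unbounded integrands under uniform integrability), whereas your argument is elementary, transparent, and makes explicit exactly which structural features of the model (separability and compactness of $X$, uniform boundedness) are being used; the only cost is the reliance on the Skorokhod representation, which is a heavier tool than strictly necessary --- your sketched alternative of splitting $\int k_n\,ds_n-\int k\,ds$ and using that continuous convergence on a compact space implies uniform convergence would avoid it entirely.
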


In order to establish the existence of an MFE, we will use the following Proposition (see Corollary 17.56 in \cite{aliprantis2006infinite}).

\begin{proposition}
(Schauder-Tychonoff) Let $K$ be a nonempty, compact, convex subset of a locally convex Hausdorff space, and let $f :K \rightarrow K$ be a continuous function. Then the set of fixed points of $f$ is compact and nonempty. 
\end{proposition}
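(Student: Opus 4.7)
I would split the statement into two parts: (a) the fixed point set is compact, and (b) the fixed point set is nonempty. Part (a) is routine. Part (b) is the heart of the theorem and the main obstacle, reducing the infinite-dimensional problem to Brouwer's fixed point theorem via finite-dimensional approximations controlled by the locally convex topology.

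For compactness of $\mathrm{Fix}(f) = \{x \in K : f(x) = x\}$, I would observe that the map $\psi : K \to K \times K$ defined by $\psi(x) = (x, f(x))$ is continuous, and the diagonal $\Delta = \{(x,x) : x \in K\}$ is closed in $K \times K$ because $K$ is Hausdorff. Then $\mathrm{Fix}(f) = \psi^{-1}(\Delta)$ is closed in the compact set $K$, hence compact.

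For nonemptiness, I would use the standard Schauder-projection argument adapted to locally convex spaces. Since the ambient space is locally convex Hausdorff, its topology is generated by a separating family $\mathcal{P}$ of continuous seminorms. Fix $p \in \mathcal{P}$ and $\varepsilon > 0$. Because $f(K) \subseteq K$ is compact, I would cover $f(K)$ by finitely many open $p$-balls of radius $\varepsilon$ centered at points $y_1, \ldots, y_n \in K$. Let $C = \operatorname{co}\{y_1, \ldots, y_n\}$, a nonempty compact convex subset of a finite-dimensional affine subspace (and $C \subseteq K$ by convexity of $K$). Define the continuous partition of unity $\lambda_i(y) = \max(\varepsilon - p(y - y_i),0) / \sum_j \max(\varepsilon - p(y - y_j),0)$ and the Schauder projection $\pi_{p,\varepsilon}(y) = \sum_i \lambda_i(y)\, y_i$. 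By construction, $\pi_{p,\varepsilon} : f(K) \to C$ is continuous and satisfies $p(\pi_{p,\varepsilon}(y) - y) < \varepsilon$ for every $y \in f(K)$.

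Now I would consider the composition $g_{p,\varepsilon} = \pi_{p,\varepsilon} \circ f$ restricted to $C$, which is a continuous self-map of the nonempty compact convex finite-dimensional set $C$. Brouwer's fixed point theorem yields some $x_{p,\varepsilon} \in C$ with $\pi_{p,\varepsilon}(f(x_{p,\varepsilon})) = x_{p,\varepsilon}$, and hence $p(f(x_{p,\varepsilon}) - x_{p,\varepsilon}) < \varepsilon$. Index the resulting net by the directed set of pairs $(F, \varepsilon)$ where $F \subset \mathcal{P}$ is a finite subfamily and $\varepsilon > 0$, using $p_F := \max_{p \in F} p$ in place of $p$ above. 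By compactness of $K$, this net has a subnet converging to some $x^* \in K$. Continuity of $f$ and of each seminorm $p$ yields $p(f(x^*) - x^*) = 0$ for every $p \in \mathcal{P}$, and since $\mathcal{P}$ separates points (Hausdorff), $f(x^*) = x^*$.

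The main obstacle is the approximation step in the locally convex setting: one must simultaneously control all seminorms in a directed family (rather than a single norm as in the Banach case) and pass to a subnet, rather than a subsequence, since the space is not assumed metrizable. Verifying that $\pi_{p,\varepsilon}$ maps into $K$ (which follows from convexity of $K$ and $y_i \in K$) and that the finite-dimensional Brouwer step is legitimate for $C$ are the key technical checks; once these are in place, the limit argument via the net is a standard diagonalization.
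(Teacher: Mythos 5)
Your proof is correct, but note that the paper itself does not prove this proposition: it is stated as an imported background result, attributed to Corollary 17.56 of \cite{aliprantis2006infinite}, so there is no in-paper argument to compare against. What you have supplied is the classical proof of the Schauder--Tychonoff theorem: compactness of $\mathrm{Fix}(f)$ as the preimage of the closed diagonal (valid since $K$ is Hausdorff), and nonemptiness via Schauder projections onto finite-dimensional convex hulls, Brouwer's theorem, and a limit along a subnet indexed by finite sets of seminorms. The steps all check out: a finite $p_F$-net of the compact set $f(K)$ exists; $\pi_{p,\varepsilon}$ maps into $C=\operatorname{co}\{y_1,\dots,y_n\}\subseteq K$ and satisfies $p(\pi_{p,\varepsilon}(y)-y)<\varepsilon$ because only indices with $p(y-y_i)<\varepsilon$ receive positive weight; $C$ carries the Euclidean topology of its finite-dimensional affine hull, so Brouwer applies; and the separating family of seminorms converts the approximate fixed-point estimates into an exact fixed point at the limit. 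Two points of polish, neither a gap: it is cleanest to choose the centers $y_i$ inside $f(K)$ itself (compactness of $f(K)$ guarantees a finite subcover of the cover by $p_F$-balls centered at its own points), and in the final limit you should state explicitly that cofinality of the subnet ensures that, for each fixed $p$ and $\delta>0$, the subnet is eventually indexed by pairs $(F,\varepsilon)$ with $p\in F$ and $\varepsilon<\delta$, which is what lets you conclude $p(f(x^*)-x^*)\le\delta$ from continuity of $y\mapsto p(f(y)-y)$.
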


\begin{proof}
[Proof of Theorem~\ref{Theorem existence}] Let $g (x ,s) =G (x ,s)$ be the unique optimal stationary strategy. From Lemma \ref{Lemma 1},  $g$ is continuous. 

Consider the operator $\Phi  :\mathcal{P} (X) \rightarrow \mathcal{P} (X)$ defined by
\begin{equation*}\Phi  s (B) =\int _{X}Q_{g} (x ,s ,B) s (d x).
\end{equation*}
If $s$ is a fixed point of $\Phi $ then $s$ is an MFE. Since $X$ is compact $\mathcal{P} (X)$ is compact (i.e., compact in the weak topology, see \cite{aliprantis2006infinite}).
Clearly $\mathcal{P} (X)$ is convex. $\mathcal{P} (X)$ endowed with the weak topology is a locally convex Hausdorff space. If $\Phi $ is continuous, we can apply Schauder-Tychonoff's fixed point theorem to prove that $\Phi $ has a fixed point. We now show that $\Phi $ is continuous. 

First, note that for every bounded and measurable function $f :X \rightarrow \mathbb{R}$ and for every $s \in \mathcal{P} (X)$ we have
\begin{equation}\int _{X}f (x) \Phi  s (d x) =\int _{X}\sum _{j}p_{j} f (w (x ,g (x ,s) ,s ,\zeta _{j}) s (d x). \label{1.1}
\end{equation}To see this, first assume that $f =1_{B}$ where $1_{B}$ is the indicator function of $B \in \mathcal{B} (X)$. Then
\begin{align*}\int _{X}f (x) \Phi  s (d x) &  =\int _{X}1_{B} \Phi  s (d x) \\
 &  =\int _{X}Q_{g} (x ,s ,B) s (d x) \\
 &  =\int _{X}\sum _{j}p_{j} 1_{B} (w (x ,g (x ,s) ,s ,\zeta _{j}) s (d x) \\
 &  =\int _{X}\sum _{j}p_{j} f (w (x ,g (x ,s) ,s ,\zeta _{j}) s (d x). \end{align*}A standard argument shows that (\ref{1.1}) holds for every bounded and measurable function $f$. 

Assume that $s_{n}$ converges weakly to $s$. Let $f :X \rightarrow \mathbb{R}$ be a continuous and bounded function. Since $w$ is jointly continuous and $g$ is continuous(see Lemma \ref{Lemma 1}), we have $$f (w (x_{n} ,g (x_{n} ,s_{n}) ,s_{n} ,\zeta )) \rightarrow f(w (x ,g (x ,s) ,s ,\zeta )$$ whenever $x_{n} \rightarrow x$. 
Let $k_{n} (x):=\sum _{j =1}^{n}p_{j} f (w (x ,g (x ,s_{n}) ,s_{n} ,\zeta _{j})$ and 

$k (x):=\sum _{j =1}^{n}p_{j} f (w (x ,g (x ,s) ,s ,\zeta _{j})$. Then $k_{n}$ converges continuously to $k$, i.e., $k_{n} (x_{n}) \rightarrow k (x)$ whenever $x_{n} \rightarrow x$. Since $f$ is bounded, the sequence $k_{n}$ is uniformly bounded. Using Proposition \ref{Prop1} and equality (\ref{1.1}),
we have
\begin{align*}\underset{n \rightarrow \infty }{\lim }\int _{X}f (x) \Phi  s_{n} (d x) &  =\underset{n \rightarrow \infty }{\lim }\int _{X}k_{n} (x) s_{n} (d x) \\
 &  =\int _{X}k (x) s (d x) \\
 &  =\int _{X}f (x) \Phi  s (d x)\;\;\text{.}\;\;\end{align*}Thus, $\Phi  s_{n}$ converges weakly to $\Phi  s$. We conclude that $\Phi $ is continuous. Thus, by the Schauder-Tychonoff's fixed point theorem, $\Phi $ has a fixed point. 
\end{proof}

\vspace{5mm}

\begin{proof}
[Proof of Lemma~\ref{Lemma concave}] Assume that $f \in B (X \times \mathcal{P} (X))$ is concave and increasing in $x$. Since the composition of a concave and increasing function with a concave function is a concave function, the function $f (w (x ,a ,s ,\zeta ) ,s)$ is concave in $(x ,a)$ for all $s$ and $\zeta $. Since $w$ and $f$ are increasing in $x$ then $f (w (x ,a ,s ,\zeta ) ,s)$ is increasing in $x$ for all $a$, $s$ and $\zeta $. Thus, $h (x ,a ,s ,f)$ is concave in $(x ,a)\;$and increasing in $x$ as the sum of concave and increasing functions. A standard argument shows that $T f$ is increasing in $x$. Proposition 2.3.6 in \cite{bertsekas2003convex}
and the fact that $\Gamma  (x)$ is convex-valued imply that $T f (x ,s) =\underset{a \in \Gamma  (x)}{\max }h (x ,a ,s ,f)$ is concave in $x$. 

We conclude that for all $n =1 ,2 ,3 \ldots$, $T^{n} f$ is concave and increasing in $x$. Standard dynamic programming arguments (see \cite{bertsekas1978stochastic}) show that $T^{n} f$ converges to $V$ uniformly. Since the set of concave and increasing functions is closed under uniform convergence, $V$ is concave and increasing in $x$. 

Since $\pi $ is strictly concave in $a$, $h (x ,a ,s ,V)$ is strictly concave in $a$. This implies that $G(x ,s) =\ensuremath{\operatorname*{argmax}}_{a \in \Gamma (x)}h(x ,a ,s ,V)$ is single-valued which proves the Lemma. 
\end{proof}

\subsection{Comparative statics: Proof of Theorem \ref{Theorem comp}}

\begin{proof}
[Proof of Theorem~\ref{Theorem comp}] 
Under the assumptions of Theorem \ref{Theorem uniq},
the operator $M_{s} :\mathcal{P} (X) \times I \rightarrow \mathcal{P} (X)$ defined by
\begin{equation*}M_{s} (\theta  ,e) ( \cdot ) =\int _{X}Q (x,s,e, \cdot)\theta (d x)
\end{equation*}has a unique fixed point $\mu _{s ,e}$ for each $s \in \mathcal{P} (X)$ and $e \in I$.  

Fix $s \in \mathcal{P} (X)$. Let $\theta _{2}  \succeq  _{SD}\theta _{1}$ and $e_{2}  \succeq _{I}e_{1}$ and let $B$ be an upper set. We have
\begin{align*}M_{s} (\theta _{2} ,e_{2}) (B) &  =\int _{X}Q (x ,s ,e_{2} ,B) \theta _{2} (d x) \\
 &  \geq \int _{X}Q (x ,s ,e_{2} ,B) \theta _{1} (d x) \\
 &  \geq \int _{X}Q (x ,s ,e_{1} ,B) \theta _{1} (d x) =M_{s} (\theta _{1} ,e_{1}) (B).\end{align*}Thus, $M_{s} (\theta _{2} ,e_{2})  \succeq  _{SD}M_{s} (\theta _{1} ,e_{1})$. The first inequality holds because $\theta _{2}  \succeq  _{SD}\theta _{1}$ and $Q$ is increasing in $x$ when $B$ is an upper set. The second inequality follows from the fact that $Q\;$is increasing in $e$ when $B$ is an upper set. 

We conclude that $M_{s}$ is an increasing function from $\mathcal{P} (X) \times I$ into $\mathcal{P} (X)$ when $\mathcal{P} (X)$ is endowed with $ \succeq  _{SD}$. Thus,  $M_{s}^{n}(\theta _{2} ,e_{2})  \succeq  _{SD}M_{s}^{n} (\theta _{1} ,e_{1})$ for all $n \in \mathbb{N}$. $Q$ being $X$-ergodic implies that $M_{s}^{n}(\theta _{i} ,e_{i}) $ converges weakly to $\mu _{s ,e_{i}}$. Since $ \succeq  _{SD}$ is closed under weak convergence (see \cite{kamae1977stochastic}), we have  $\mu _{s ,e_{2}}  \succeq  _{SD}\mu _{s ,e_{1}}$.  

Now assume that $e_{2}  \succeq _{I}e_{1}$ and let $s (e_{2}) ,s (e_{1})$ be the corresponding MFEs. Assume in contradiction that $s (e_{2}) \prec s (e_{1})$. From the same argument as in Theorem \ref{Theorem uniq} we can conclude that $\mu _{s (e_{2}) ,e}  \succeq  _{SD}\mu _{s (e_{1}) ,e}$ for each $e \in I$. Note that $s (e)$ is an MFE if and only if $s (e) =\mu _{s (e) ,e}$. We have
\begin{equation*}s (e_{2}) =\mu _{s (e_{2}) ,e_{2}}  \succeq  _{SD}\mu _{s (e_{2}) ,e_{1}}  \succeq  _{SD}\mu _{s (e_{1}) ,e_{1}} =s (e_{1}).
\end{equation*}Transitivity of $ \succeq  _{SD}$ implies $s (e_{2})  \succeq  _{SD}s (e_{1})$. But since $ \succeq  _{SD}$ agrees with $ \succeq $, $s (e_{2})  \succeq  _{SD}s (e_{1})$ implies $s (e_{2}) \succeq s (e_{1})$ which is a contradiction. We conclude that $s (e_{2}) \succeq s (e_{1})$. 
\end{proof}

\subsection{Dynamic Oligopoly Models: Proofs of Theorems \ref{Unique DOP}, \ref{Theorem DOP comp}, \ref{Theorem:dynamic advertising}, and \ref{THM: REPUTATION}}
\begin{proof}
[Proof of Theorem~\ref{Unique DOP}] The idea of the proof is to show that the conditions of Theorem \ref{Theorem uniq} and Theorem \ref{Theorem existence} hold. In Lemma \ref{Lemma single-valued}
we prove that the optimal stationary investment strategy is single-valued. In Lemma \ref{Lemma Q is increasing}
we prove that $Q$ is increasing in $x$ and decreasing in $s$. In Lemma \ref{Lemma compact state} we prove that the state space can be chosen to
be compact. That is, there exists a compact set $\bar{X} =[0 ,\bar{x}]$ such that $Q (x ,s ,\bar{X}) =1$ whenever $x \in \bar{X}$ and all $s \in \mathcal{P} (X)$. This means that if a firm's initial state is in $\bar{X}$, then the firm's state will remain in $\bar{X}$ in the next period with probability $1$. In Lemma \ref{Lemma Q is ergodic} we prove that $Q$ is $\bar{X}$-ergodic. Thus, all conditions from Theorem \ref{Theorem uniq} and Theorem \ref{Theorem existence} hold and we conclude that the model has a unique MFE. 
\end{proof}

We first introduce some notations. 

Let $B (X \times \mathcal{P} (X))$ be the space of all bounded functions on $X \times \mathcal{P} (X)$. For $f \in B (X \times \mathcal{P} (X))$ define
\begin{equation*}f_{x} (x ,s):=\frac{ \partial f (x ,s)}{ \partial x}.
\end{equation*}For the rest
of the paper we say that $f \in B (X \times \mathcal{P} (X))$ is differentiable if it is differentiable in the first argument. Similarly, we write $u_{x} (x ,s)$ to denote the derivative of $u$ with respect to $x$. 

For the proof of the theorem, it will be convenient to change the decision variable in the Bellman equation.
Define $$z =(1 -\delta ) x +k (a),$$ and note that we can write $a =k^{ -1} (z -(1 -\delta ) x)$, which is well defined because $k$ is strictly increasing. The resulting Bellman operator is given by
\begin{equation*}Kf(x ,s) =\underset{z \in \Gamma  (x)}{\max }J (x ,z ,s ,f),
\end{equation*}where $\Gamma  (x) =[(1 -\delta )x +k (0) ,(1 -\delta ) x +k (\bar{a})]$ and
\begin{equation*}J (x ,z ,s ,f) =\pi(x,z,s) +\beta  \sum _{j}p_{j} f (z \zeta _{j} ,s),
\end{equation*}
where $\pi(x,z,s)=u (x ,s) -d k^{ -1} (z -(1 -\delta ) x)$.

Let $\mu _{f} (x ,s) =\operatorname{argmax}_{z \in \Gamma  (x)}J (x ,z ,s ,f)$ and $\mu  (x ,s) =\operatorname{argmax}_{z \in \Gamma  (x)}J (x ,z ,s ,V)$. Note that $\mu  (x ,s) =(1 -\delta ) x +k (g (x ,s))$ where $g$ is the optimal stationary investment strategy. With this change of variables, we can use the \emph{envelope}
theorem (see \cite{benveniste1979}). Since $u$ and $k$ are continuously differentiable, then $J (x ,z ,s ,f)$ is continuously differentiable in $x$. The envelope theorem implies that $K f$ is differentiable and
\begin{equation*}K f_{x} (x ,s) =\frac{ \partial \pi  (x ,\mu _{f} (x ,s) ,s)}{ \partial x} =u_{x} (x ,s) +d (1 -\delta ) (k^{ -1})^{ \prime } (\mu _{f} (x ,s) -(1 -\delta ) x).
\end{equation*}

\begin{lemma} \label{lemma:mu}
\label{Lemma single-valued} $\mu  (x ,s)$ is single-valued, increasing in $x$ and decreasing in $s$. 
\end{lemma}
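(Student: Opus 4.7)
The plan is to derive all three properties of $\mu(x,s)$ from Topkis's monotonicity theorem applied to $J(x,z,s,V)$ once we know that $V$ is concave in $x$ and has decreasing differences in $(x,s)$. Recall
\begin{equation*}
J(x,z,s,f) = u(x,s) - d\,k^{-1}(z-(1-\delta)x) + \beta \sum_{j} p_j f(z\zeta_j, s).
\end{equation*}
Strict concavity of $J$ in $z$ will be immediate once $V$ is concave in $x$: the term $-d\,k^{-1}(z-(1-\delta)x)$ is strictly concave in $z$ because $k^{-1}$ is strictly convex ($k$ is strictly concave and strictly increasing), and $V(z\zeta_j, s)$ is concave in $z$, so their sum is strictly concave and yields a unique maximizer $\mu(x,s)$.

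For the monotonicity claims I would invoke Topkis on $J(x,\cdot,s,V)$ over the ascending interval $\Gamma(x) = [(1-\delta)x + k(0), (1-\delta)x + k(\bar{a})]$. The mixed partial $\partial^2/(\partial z\,\partial x)[-d\,k^{-1}(z-(1-\delta)x)] = d(1-\delta)(k^{-1})''(\cdot) > 0$ (no other term of $J$ contributes to this cross partial), so $J$ has strictly increasing differences in $(z,x)$, and Topkis immediately gives $\mu(x,s)$ increasing in $x$. For decreasing differences in $(z,s)$, note that only $\beta \sum_j p_j V(z\zeta_j, s)$ depends jointly on $z$ and $s$, so this reduces to showing $V$ itself has decreasing differences in $(x,s)$; given this, Topkis yields $\mu(x,s)$ decreasing in $s$.

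The remaining step is to establish that $V$ is concave and increasing in $x$ and has decreasing differences in $(x,s)$. I would show that the Bellman operator $K$ preserves this class $\mathcal{F}$ and invoke uniform convergence $K^n 0 \to V$. Concavity and monotonicity of $Kf$ in $x$ follow as in Lemma \ref{Lemma concave}: $J(x,z,s,f)$ is jointly concave in $(x,z)$ and the feasibility set $\{(x,z) : z \in \Gamma(x)\}$ is convex, so $Kf$ is concave in $x$ by Proposition~2.3.6 of \cite{bertsekas2003convex}. The critical preservation step is decreasing differences: by the envelope theorem,
\begin{equation*}
Kf_x(x,s) = u_x(x,s) + d(1-\delta)(k^{-1})'\bigl(\mu_f(x,s) - (1-\delta)x\bigr),
\end{equation*}
where $\mu_f$ is the unique maximizer of $J(x,\cdot,s,f)$. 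Now $u_x(x,\cdot)$ is decreasing in $s$ by Assumption \ref{Assumption DOP}; applying the Topkis argument above to $f$ (which has decreasing differences by the inductive hypothesis) shows $\mu_f(x,s)$ is decreasing in $s$, and since $(k^{-1})'$ is increasing ($k^{-1}$ convex), the second term of $Kf_x$ is also decreasing in $s$. Hence $Kf$ has decreasing differences in $(x,s)$. Starting from $V^{(0)} \equiv 0 \in \mathcal{F}$ and passing to the pointwise limit (decreasing differences is defined by pointwise inequalities and so passes to limits) yields $V \in \mathcal{F}$.

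The main obstacle is closing this induction cleanly: it requires that each iterate be regular enough for the envelope formula to apply, which follows from the $C^1$ assumptions on $u$ and $k$ together with concavity of the iterates (giving left/right derivatives that coincide under strict concavity of $J$ in $z$), and then that all the inductive properties pass to the uniform limit, which is routine.
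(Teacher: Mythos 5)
Your proposal is correct and follows essentially the same route as the paper: both prove that the Bellman operator $K$ preserves concavity in $x$ and decreasing differences in $(x,s)$ (via the envelope formula $Kf_{x}(x,s)=u_{x}(x,s)+d(1-\delta)(k^{-1})^{\prime}(\mu_{f}(x,s)-(1-\delta)x)$ together with Topkis applied to $\mu_{f}$), pass these properties to the uniform limit $V$, and then apply Topkis to $J(x,\cdot,s,V)$ over the ascending correspondence $\Gamma(x)$, with single-valuedness coming from strict convexity of $k^{-1}$. The only cosmetic differences are that you verify increasing differences in $(x,z)$ by a cross-partial rather than the paper's discrete convexity inequality, and you pass decreasing differences to the limit as a pointwise inequality where the paper routes through convergence of the policies $\mu_{f^{n}}\rightarrow\mu$; both are valid.
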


\begin{proof}
The main step of the proof is to show that if $f \in B (X \times \mathcal{P} (X))$ has decreasing differences then $Kf \in B (X \times \mathcal{P} (X))$ has decreasing differences. This implies that the value function $V$ has decreasing differences. An application of a Theorem by Topkis implies that $\mu  (x ,s)$ is increasing in $x$ and decreasing in $s$. Single-valuedness of $\mu$ follows from the concavity of the value function. We provide the details below. 

Assume that $f \in B (X \times \mathcal{P} (X))$ is concave in $x$, differentiable, and has decreasing differences. The function $f (z \zeta  ,s)$ is concave and increasing in $z$ for all $s$ and $\zeta $. Since $k$ is strictly concave and strictly increasing, $k^{ -1}$ is strictly convex and strictly increasing. This implies that $ -k^{ -1} (z -(1 -\delta ) x)$ is concave in $(x ,z)$. Thus, $J (x ,z ,s ,f)$ is concave in $(x ,z)$ as the sum of concave functions. Proposition 2.3.6 in \cite{bertsekas2003convex}
and the fact that $\Gamma  (x)$ is convex-valued imply that $K f (x ,s)$ is concave in $x$. 

Since $f$ has decreasing differences, then $f (z \zeta  ,s)$ has decreasing differences in $(z ,s)$ for all $\zeta $. Thus, $J$ has decreasing differences in $(z ,s)$ as the sum of functions with decreasing differences. From
Theorem 6.1 in \cite{topkis1978}, $\mu _{f} (x ,s)$ is decreasing in $s$ for every $x$. 

Let $x_{2} \geq x_{1}$, $z_{2} \geq z_{1}$, $y^{ \prime } =z_{1} -(1 -\delta ) x_{2}$, $y =z_{1} -(1 -\delta ) x_{1}$ and $t =z_{2} -z_{1}$. Note that $y \geq y^{ \prime }$. Convexity of $k^{ -1}$ implies that for $y \geq y^{ \prime }$ and $t \geq 0$, we have $k^{ -1} (y +t) -k^{ -1} (y) \geq k^{ -1} (y^{ \prime } +t) -k^{ -1} (y^{ \prime })$. That is, $k^{ -1} (z -(1 -\delta ) x)$ has decreasing differences in $(x ,z)$. Thus, $\pi  (x ,z ,s) =u (x ,s) -k^{ -1} (z -(1 -\delta ) x)$ has increasing differences in $(x ,z)$. 

Let $s_{2} \succeq s_{1}$. For every $x \in X$ we have
\begin{align}K f_{x} (x ,s_{1}) &  =\pi _{x} (x ,\mu _{f} (x ,s_{1}) ,s_{1}) \notag \\
 &  \geq \pi _{x} (x ,\mu _{f} (x ,s_{1}) ,s_{2}) \notag \\
 &  \geq \pi _{x} (x ,\mu _{f} (x ,s_{2}) ,s_{2}) =K f_{x} (x ,s_{2}). \label{eq:Kfx}\end{align}The first and last equality follow from the envelope theorem. The first inequality follows since $\pi $ has decreasing differences in $(x ,s)$. The second inequality follows from the facts that $\pi $ has increasing differences in $(x ,z)$ and $\mu _{f} (x ,s_{1}) \geq \mu _{f} (x ,s_{2})$. Thus, $K f$ has decreasing differences. 

Define $f^{n} =K^{n} f:=K (K^{n -1} f)$ for $n =1 ,2 ,\ldots $ where $K^{0} f:=f$. By iterating the previous argument we conclude that $f_{x}^{n} (x ,s)$ is decreasing in $s$ and $f^{n} (x ,s)$ is concave in $x$ for every $n \in \mathbb{N}$. 

Standard dynamic programming arguments (see \cite{bertsekas1978stochastic}) show that  $f^{n}$ converges uniformly to $V$. Since the set of concave functions is closed under uniform convergence, $V$ is concave in $x$. The envelope theorem implies that $f_{x}^{n} (x ,s) =\pi _{x} (x ,\mu _{f^{n}} (x ,s) ,s)$ for every $n \in \mathbb{N}$. Since $J (x ,z ,s ,f^{n})$ is strictly concave in $z$ when $f^{n}$ is concave, $\mu _{f^{n}}$ is single-valued. Theorem 3.8 and Theorem 9.9 in \cite{stokey1989}
show that $\mu _{f^{n}} \rightarrow \mu $. Thus, $f_{x}^{n} (x ,s) =\pi _{x} (x ,\mu _{f^{n}} (x ,s) ,s) \rightarrow \pi _{x} (x ,\mu  (x ,s) ,s) =V_{x} (x ,s)$. Using \eqref{eq:Kfx}, we conclude that $V_{x} (x ,s)$ is decreasing in $s$; hence, $V$ has decreasing differences. The same argument as above shows that $J (x ,z ,s ,V)$ has decreasing differences in $(z ,s)$ and increasing differences in $(x ,z)$. Since $J (x ,z ,s ,V)$ is strictly concave in $z$, then $\mu $ is single-valued. It is easy to see that $\Gamma  (x)$ is ascending in the sense of \cite{topkis1978} (i.e., for $x_{2} \geq x_{1}$ if $z \in \Gamma  (x_{2})$ and $z^{ \prime } \in \Gamma  (x_{1})$ then $\max \{z ,z^{ \prime }\} \in \Gamma  (x_{2})$ and $\min \{z ,z^{ \prime }) \in \Gamma  (x_{1})$). Theorem 6.1 in \cite{topkis1978} implies that $\mu  (x ,s)$ is increasing in $x$ and decreasing in $s$ which proves the Lemma. 
\end{proof}

\begin{lemma}
\label{Lemma Q is increasing}$Q$ is increasing in $x$ for each $s \in S$ and decreasing in $s$ for each $x \in X$. 
\end{lemma}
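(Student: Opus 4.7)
The strategy is to reduce the monotonicity of the kernel $Q$ to the monotonicity of the auxiliary function $\mu(x,s) = (1-\delta)x + k(g(x,s))$ that was already established in Lemma~\ref{Lemma single-valued}. Under the optimal stationary strategy $g$, the one-step transition is exactly $w(x,g(x,s),s,\zeta) = \mu(x,s)\zeta$, so for every Borel set $B$,
\begin{equation*}
Q(x,s,B) \;=\; \Pr\bigl(\mu(x,s)\zeta \in B\bigr) \;=\; \sum_{j=1}^{n} p_j\, 1_B\bigl(\mu(x,s)\zeta_j\bigr).
\end{equation*}
Thus $Q(x,s,\cdot)$ is the law of the random variable $\mu(x,s)\zeta$, which is a positive scaling of the nonnegative shock $\zeta$ by a scalar that depends monotonically on the arguments.

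The plan is to verify the two monotonicity claims by a pointwise argument on upper sets, invoking the characterization that $s_1 \succeq_{SD} s_2$ iff $s_1(B)\geq s_2(B)$ for every upper set $B$. First, fix $s \in \mathcal{P}(X)$ and take $x_2 \geq x_1$. By Lemma~\ref{Lemma single-valued}, $\mu(x_2,s) \geq \mu(x_1,s)$, and since $\zeta_j > 0$ for each $j$, multiplication by $\zeta_j$ preserves the inequality: $\mu(x_2,s)\zeta_j \geq \mu(x_1,s)\zeta_j$. For any upper set $B \subseteq X$, this implies $1_B(\mu(x_1,s)\zeta_j) \leq 1_B(\mu(x_2,s)\zeta_j)$, and summing against the probabilities $p_j$ yields $Q(x_1,s,B) \leq Q(x_2,s,B)$. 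Hence $Q(x_2,s,\cdot) \succeq_{SD} Q(x_1,s,\cdot)$.

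Second, fix $x \in X$ and take $s_2 \succeq s_1$. Again by Lemma~\ref{Lemma single-valued}, $\mu(x,s_2) \leq \mu(x,s_1)$, so $\mu(x,s_2)\zeta_j \leq \mu(x,s_1)\zeta_j$ for every $j$, and the same indicator argument on any upper set $B$ gives $Q(x,s_2,B) \leq Q(x,s_1,B)$, i.e.\ $Q(x,s_1,\cdot) \succeq_{SD} Q(x,s_2,\cdot)$. The argument is essentially a direct bookkeeping exercise once Lemma~\ref{Lemma single-valued} is in hand, so I do not foresee a genuine obstacle; the only subtle point is to remember that the shocks are strictly positive (guaranteed by $0 < \zeta_1 < \cdots < \zeta_n$), which is what lets multiplicative scaling preserve the partial order on $X \subseteq \mathbb{R}_+$.
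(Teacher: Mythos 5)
Your proposal is correct and follows essentially the same route as the paper: both reduce the claim to the monotonicity of $\mu(x,s)=(1-\delta)x+k(g(x,s))$ established in Lemma~\ref{Lemma single-valued}, then check the stochastic dominance inequality on upper sets via $Q(x,s,B)=\Pr(\mu(x,s)\zeta\in B)$. Your version merely spells out the positivity of the shocks and the indicator-sum representation a bit more explicitly than the paper does.
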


\begin{proof}
For each $s \in \mathcal{P} (X)$, $x_{2} \geq x_{1}$ and any upper set $B$ we have
\begin{align*}Q (x_{2} ,s ,B) &  =\Pr  (((1 -\delta ) x_{2} +k (g (x_{2} ,s)) \zeta ) \in B) \\
 &  =\Pr  (\mu  (x_{2} ,s) \zeta  \in B) \\
 &  \geq \Pr  (\mu  (x_{1} ,s) \zeta  \in B) =Q (x_{1} ,s ,B),\end{align*}where the inequality follows since $\mu $ is increasing in $x$. Thus, $Q (x_{2} ,s , \cdot ) \succeq  _{SD}Q (x_{1} ,s , \cdot )$. 

Similarly since $\mu  (x ,s)$ is decreasing in $s$, $Q$ is decreasing in $s$ for each $x \in X$. 
\end{proof}

We prove the following useful auxiliary lemma.

\begin{lemma}
\label{Lemma properties}(i) $\mu  (x ,s)$ is strictly increasing in $x$. 

(ii) For all $s \in \mathcal{P} (X)$, $\mu $ is Lipschitz-continuous in the first argument with a Lipschitz constant $1$. That is,
\begin{equation*}\vert \mu  (x_{2} ,s) -\mu  (x_{1} ,s)\vert  \leq \vert x_{2} -x_{1}\vert,
\end{equation*}for all $x_{2} ,x_{1}$ and $s \in \mathcal{P} (X)$. 
\end{lemma}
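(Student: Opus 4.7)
My plan is to handle the two parts separately, leveraging the machinery already developed in the proof of Lemma~\ref{lemma:mu} (value function concave and differentiable in $x$, $\mu$ weakly increasing in $x$, $J$ strictly concave in $z$).

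For part~(i), I will upgrade the weak monotonicity of $\mu$ to strict monotonicity using the fact that $J(x,z,s,V)$ has \emph{strictly} increasing differences in $(x,z)$. Since the only term that couples $x$ and $z$ is $-dk^{-1}(z-(1-\delta)x)$, and $k^{-1}$ is strictly convex (as the inverse of a strictly concave, strictly increasing $k$), differentiation gives
\begin{equation*}
J_{z}(x_{2},z,s,V)-J_{z}(x_{1},z,s,V)=d\bigl[(k^{-1})^{\prime}(z-(1-\delta)x_{1})-(k^{-1})^{\prime}(z-(1-\delta)x_{2})\bigr]>0
\end{equation*}
for $x_{2}>x_{1}$ and all $z$. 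Suppose toward contradiction that $\mu(x_{1},s)=\mu(x_{2},s)=z^{*}$. The feasible set $\Gamma(x_{2})=[(1-\delta)x_{2}+k(0),(1-\delta)x_{2}+k(\bar a)]$ is a strict translate of $\Gamma(x_{1})$ by $(1-\delta)(x_{2}-x_{1})>0$, so the status of $z^{*}$ (interior, at the lower bound, at the upper bound) can differ between the two problems. A short case analysis using the first-order conditions at each $x_{i}$ ($J_{z}=0$ in the interior, $\geq 0$ at the upper boundary, $\leq 0$ at the lower boundary) together with the displayed strict inequality yields a contradiction in every case; e.g., if $z^{*}$ is interior at $x_{1}$, then $J_{z}(x_{1},z^{*})=0$ forces $J_{z}(x_{2},z^{*})>0$, which is incompatible with $z^{*}$ being either interior or at the lower bound at $x_{2}$ (and the upper-bound case is ruled out because $z^{*}\leq (1-\delta)x_{1}+k(\bar a)<(1-\delta)x_{2}+k(\bar a)$). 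The symmetric arguments dispose of the remaining configurations.

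For part~(ii), I reparametrize the inner problem by letting $y=z-(1-\delta)x=k(a)$, so that the feasible set becomes the $x$-independent interval $[k(0),k(\bar a)]$ and the objective becomes
\begin{equation*}
\tilde J(x,y,s)=u(x,s)-dk^{-1}(y)+\beta\sum_{j}p_{j}V\bigl((y+(1-\delta)x)\zeta_{j},s\bigr).
\end{equation*}
The terms $u(x,s)$ and $-dk^{-1}(y)$ contribute zero cross-difference in $(x,y)$, and each summand $V((y+(1-\delta)x)\zeta_{j},s)$ has decreasing differences in $(x,y)$ because $V$ is concave in its first argument (proved inside Lemma~\ref{lemma:mu}) and the map $(x,y)\mapsto (y+(1-\delta)x)\zeta_{j}$ is linear with equal cross-increments; concretely, for $x_{2}>x_{1}$ and $y_{2}>y_{1}$, concavity of $w\mapsto V(w\zeta_{j},s)$ applied to the common increment $(1-\delta)(x_{2}-x_{1})$ based at the higher point $y_{2}+(1-\delta)x_{1}$ versus the lower point $y_{1}+(1-\delta)x_{1}$ delivers the submodularity inequality. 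Let $\nu(x,s)=\arg\max_{y\in[k(0),k(\bar a)]}\tilde J(x,y,s)$, which is single-valued by strict concavity of $\tilde J$ in $y$ (inherited from strict convexity of $k^{-1}$). Topkis's theorem then yields that $\nu(x,s)$ is decreasing in $x$. Since $\mu(x,s)=(1-\delta)x+\nu(x,s)$, for $x_{2}>x_{1}$,
\begin{equation*}
0\leq \mu(x_{2},s)-\mu(x_{1},s)=(1-\delta)(x_{2}-x_{1})+\bigl[\nu(x_{2},s)-\nu(x_{1},s)\bigr]\leq (1-\delta)(x_{2}-x_{1})\leq x_{2}-x_{1},
\end{equation*}
where the leftmost inequality is Lemma~\ref{lemma:mu}, giving the Lipschitz bound with constant $1$.

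The main obstacle I anticipate is the boundary bookkeeping in part~(i): the interior case is immediate from the first-order condition and strict increasing differences, but ruling out equality of optima requires checking the sign of $J_{z}(x_{i},z^{*})$ at each configuration of $z^{*}$ relative to both $\Gamma(x_{1})$ and $\Gamma(x_{2})$ and showing that the strict cross-difference in $J_{z}$ is inconsistent with the pair of optimality conditions. In part~(ii), the only subtlety is verifying decreasing differences of $V$ composed with a linear map without appealing to second derivatives, which the concavity-with-common-increment argument above handles cleanly.
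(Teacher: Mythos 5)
Your proof is correct. Part (i) is essentially the paper's argument: the paper also supposes $\mu(x_{1},s)=\mu(x_{2},s)=z^{*}$ and derives a contradiction from the first-order condition at $x_{1}$, the strict monotonicity of $(k^{-1})^{\prime}$, and the fact that $z^{*}<\max\Gamma(x_{2})$; the only difference is that the paper collapses your case analysis up front by observing that feasibility in $\Gamma(x_{2})$ forces $z^{*}\geq (1-\delta)x_{2}+k(0)>\min\Gamma(x_{1})$ while $z^{*}\leq\max\Gamma(x_{1})<\max\Gamma(x_{2})$, so the only configuration to check is $J_{z}(x_{1},z^{*})\geq 0$ against $J_{z}(x_{2},z^{*})\leq 0$ (note that your lower-boundary-at-$x_{1}$ case is disposed of by this feasibility observation, not by a ``symmetric argument''). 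Part (ii) is a genuinely different route. The paper argues by contradiction: assuming $\mu(x_{2},s)-\mu(x_{1},s)>x_{2}-x_{1}$, it deduces $\mu(x_{2},s)-(1-\delta)x_{2}>\mu(x_{1},s)-(1-\delta)x_{1}$ and combines the first-order condition at $x_{1}$, strict convexity of $k^{-1}$, and concavity of $V$ to force $\mu(x_{2},s)=\min\Gamma(x_{2})$, which is incompatible with the assumed gap. Your reparametrization $y=z-(1-\delta)x=k(a)$ instead makes the feasible set $[k(0),k(\bar a)]$ independent of $x$, establishes decreasing differences of $\tilde J$ in $(x,y)$ by the derivative-free common-increment argument, and applies Topkis once more to get $\nu(x,s)=k(g(x,s))$ decreasing in $x$. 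This buys a structural fact the paper never states explicitly (investment is decreasing in the firm's own state) and the sharper estimate $0\leq\mu(x_{2},s)-\mu(x_{1},s)\leq(1-\delta)(x_{2}-x_{1})$, i.e., a Lipschitz constant of $1-\delta<1$, which would make the contraction step for $f_{s}(x,\zeta_{1})$ in the proof of Lemma \ref{Lemma Q is ergodic} immediate even without using $\zeta_{1}<1$. The paper's first-order-condition route is more local and avoids a second invocation of Topkis, but is not shorter; both are valid.
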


\begin{proof}
(i) Fix $s \in \mathcal{P} (X)$. Assume in contradiction that $x_{2} >x_{1}$ and $\mu  (x_{1} ,s) =\mu  (x_{2} ,s)$. First note that $\mu  (x_{1} ,s) =\mu  (x_{2} ,s) \geq (1 -\delta ) x_{2} +k (0) >(1 -\delta ) x_{1} +k (0):=\min \Gamma  (x_{1})$. Thus, $\min \Gamma  (x_{1}) <\mu  (x_{1} ,s) \leq \max \Gamma  (x_{1}) <\max \Gamma  (x_{2})$. We have
\begin{align*}0 &  \leq  -d (k^{ -1})^{ \prime } (\mu  (x_{1} ,s) -(1 -\delta ) x_{1}) +\beta  \sum _{j =1}^{n}p_{j} \zeta _{j} V_{x} (\mu  (x_{1} ,s) \zeta _{j} ,s) \\
 &  < -d (k^{ -1})^{ \prime } (\mu  (x_{2} ,s) -(1 -\delta ) x_{2}) +\beta  \sum _{j =1}^{n}p_{j} \zeta _{j} V_{x} (\mu  (x_{2} ,s) \zeta _{j} ,s),\end{align*}which contradicts the optimality of $\mu  (x_{2} ,s)$, since $\mu  (x_{2} ,s) <\max \Gamma  (x_{2})$. The first inequality follows from the first order condition (recall that $\min \Gamma  (x_{1}) <\mu  (x_{1} ,s)$). The second inequality follows from the fact that $k^{ -1}$ is strictly convex, which implies that $(k^{ -1})^{ \prime }$ is strictly increasing. Thus, $\mu $ is strictly increasing in $x$.

(ii) Fix $s \in \mathcal{P} (X)$. Let $x_{2} >x_{1}$. If $\mu  (x_{1} ,s) =\max \Gamma  (x_{1}) =(1 -\delta ) x_{1} +k (\bar{a})$, then
\begin{equation*}\mu  (x_{2} ,s) -\mu  (x_{1} ,s) \leq (1 -\delta ) (x_{2} -x_{1}) +k (\bar{a}) -k (\bar{a}) \leq x_{2} -x_{1}.
\end{equation*}So we can assume
that $\mu  (x_{1} ,s) <\max \Gamma  (x_{1})$. Assume in contradiction that $\mu  (x_{2} ,s) -\mu  (x_{1} ,s) >x_{2} -x_{1}$. Then $\mu  (x_{2} ,s) -(1 -\delta ) x_{2} >\mu  (x_{1} ,s) -(1 -\delta ) x_{1}$. We have
\begin{align*}0 &  \geq  -d (k^{ -1})^{ \prime } (\mu  (x_{1} ,s) -(1 -\delta ) x_{1}) +\beta  \sum _{j =1}^{n}p_{j} \zeta _{j} V_{x} (\mu  (x_{1} ,s) \zeta _{j} ,s) \\
 &  > -d (k^{ -1})^{ \prime } (\mu  (x_{2} ,s) -(1 -\delta ) x_{2}) +\beta  \sum _{j =1}^{n}p_{j} \zeta _{j} V_{x} (\mu  (x_{2} ,s) \zeta _{j} ,s).\end{align*}The first inequality follows from the first order condition. The second inequality follows from the facts that $(k^{ -1})$ is strictly convex and $V$ is concave (see the proof of Lemma \ref{lemma:mu}). The last inequality implies that $\mu  (x_{2} ,s) =\min \Gamma  (x_{2}) =(1 -\delta ) x_{2} +k (0)$. But $\mu  (x_{1} ,s) \geq \min \Gamma  (x_{1})$ implies
\begin{equation*}\mu  (x_{2} ,s) -\mu  (x_{1} ,s) \leq (1 -\delta ) (x_{2} -x_{1}) <x_{2} -x_{1},
\end{equation*}which is a contradiction.
We conclude that $\mu $ is Lipschitz-continuous in the first argument with a Lipschitz constant $1$. 
\end{proof}

\begin{lemma} \label{Lemma compact state}The state space can be chosen to be compact: There exists a compact set $\bar{X} =[0 ,\bar{x}]$ such that $Q (x ,s ,\bar{X}) =1$ whenever $x \in \bar{X}$ and all $s \in \mathcal{P} (X)$.

\end{lemma}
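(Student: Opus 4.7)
The plan is to exploit Assumption \ref{Assumption DOP:0}(ii), namely $(1-\delta)\zeta_n < 1$, to construct an explicit invariant upper bound $\bar{x}$ for the state dynamics. The key observation is that along any sample path,
\begin{equation*}
x_{i,t} = ((1-\delta) x_{i,t-1} + k(a_{i,t-1})) \zeta_{i,t} \le ((1-\delta) x_{i,t-1} + k(\bar{a})) \zeta_n,
\end{equation*}
since $k$ is increasing, $a_{i,t-1} \in [0,\bar{a}]$, and $\zeta_{i,t} \le \zeta_n$. So the dynamics are dominated by the affine recursion $y \mapsto (1-\delta)\zeta_n y + k(\bar{a})\zeta_n$, whose unique fixed point is
\begin{equation*}
\bar{x} := \frac{k(\bar{a})\zeta_n}{1 - (1-\delta)\zeta_n},
\end{equation*}
which is well defined and strictly positive because $(1-\delta)\zeta_n < 1$ and $k(\bar{a}) > 0$.

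First I would verify directly that $\bar{X} = [0,\bar{x}]$ is forward invariant under the Markov kernel $Q$. For any $x \in [0,\bar{x}]$, any $a \in \Gamma(x) \subseteq [0,\bar{a}]$, and any realization $\zeta_j$, the next-period state satisfies
\begin{equation*}
0 \le ((1-\delta)x + k(a))\zeta_j \le ((1-\delta)\bar{x} + k(\bar{a}))\zeta_n = (1-\delta)\zeta_n \bar{x} + k(\bar{a})\zeta_n = \bar{x},
\end{equation*}
where the last equality uses the definition of $\bar{x}$ as the fixed point. The lower bound $0$ follows from $k(a) \ge k(0) > 0$, $\zeta_j > 0$, and $x \ge 0$. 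Hence the support of the one-step transition from any $x \in \bar{X}$ lies in $\bar{X}$, which gives $Q(x, s, \bar{X}) = 1$ for every $s \in \mathcal{P}(X)$.

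I do not expect any real obstacle here: the argument is a direct computation once the right $\bar{x}$ is identified, and Assumption \ref{Assumption DOP:0}(ii) is exactly tailored to make the contraction $(1-\delta)\zeta_n < 1$ work. The only point to be slightly careful about is to note that the bound is uniform in the population state $s$ and in the chosen action (in particular, in the optimal policy $g(x,s)$), because the dominating recursion uses the worst-case action $\bar{a}$ and the worst-case shock $\zeta_n$. Consequently, without loss of generality we may replace $X$ by $\bar{X} = [0,\bar{x}]$ in the analysis of the MFE, which is what the subsequent lemmas (in particular the $\bar{X}$-ergodicity claim of Lemma \ref{Lemma Q is ergodic}) will rely on.
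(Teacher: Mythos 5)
Your proof is correct and follows essentially the same approach as the paper: both dominate the dynamics by the worst-case affine recursion $x \mapsto ((1-\delta)x + k(\bar{a}))\zeta_n$ and invoke $(1-\delta)\zeta_n < 1$. Your version is marginally cleaner in that it exhibits the explicit threshold $\bar{x} = k(\bar{a})\zeta_n/(1-(1-\delta)\zeta_n)$ and verifies invariance directly from $a \le \bar{a}$, whereas the paper only asserts the existence of a suitable $\bar{x}$ and then uses the monotonicity of the policy $\mu$ in $x$ to pass from the boundary point $\bar{x}$ to all $x \in [0,\bar{x}]$ --- a dependence your argument does not need.
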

\begin{proof}
Fix $s \in \mathcal{P} (X)$. Since $\max \Gamma  (x) =(1 -\delta ) x +k (\bar{a})$, for all $x >0$, we have
\begin{equation*}\frac{\mu  (x ,s) \zeta _{n}}{x} \leq (1 -\delta ) \zeta _{n} +\frac{k (\bar{a}) \zeta _{n}}{x}.
\end{equation*}The last inequality
and the fact that $(1 -\delta ) \zeta _{n} <1$ imply that there exists $\bar{x}$ (that does not depend on $s$) such that $\mu  (x ,s) \zeta _{n} <x$ for all $x \geq \bar{x}$. 

Let $\bar{X} =[0 ,\bar{x}]$. For all $s \in \mathcal{P} (X)$ and $\zeta  \in E$, if $x \in \bar{X}$ we have
\begin{equation*}\mu  (x ,s) \zeta  \leq \mu  (\bar{x} ,s) \zeta _{n} <\bar{x}.
\end{equation*}
That is, $\mu  (x ,s) \zeta  \in \bar{X}$. Thus, $Q (x ,s ,\bar{X}) =\Pr  (\mu  (x ,s) \zeta  \in \bar{X}) =1$ whenever $x \in \bar{X}$. 
\end{proof}

\begin{lemma}
\label{Lemma Q is ergodic}$Q$ is $\bar{X}$-ergodic. 
\end{lemma}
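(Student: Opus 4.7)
My plan is to invoke a standard ergodicity theorem for monotone Markov chains on a compact interval (e.g., \cite{hopenhayn1992} Theorem 2, or \cite{bhattacharya1988asymptotics}). For each fixed $s \in \mathcal{P}(\bar{X})$ such a result delivers both a unique invariant distribution $\mu_s$ and the weak convergence $M_s^n\theta \Rightarrow \mu_s$ from every initial $\theta \in \mathcal{P}(\bar{X})$, which is precisely $\bar{X}$-ergodicity. Its hypotheses are (a) $Q(\cdot,s,\cdot)$ is stochastically monotone in $x$; and (b) a monotone mixing condition (MMC): there exist $c \in \bar{X}$ and $N \in \mathbb{N}$ with $Q^N(0,s,[c,\bar{x}]) > 0$ and $Q^N(\bar{x},s,[0,c]) > 0$. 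Condition (a) is exactly Lemma~\ref{Lemma Q is increasing}, so all of the work reduces to verifying the MMC.

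To verify (b), I would study the two extremal deterministic iterations $f(x) := \mu(x,s)\zeta_1$ and $h(x) := \mu(x,s)\zeta_n$, both continuous and increasing on $\bar{X} = [0,\bar{x}]$. Using $\mu(\cdot,s) \geq k(0) > 0$ together with the bound $\mu(\bar{x},s)\zeta_n < \bar{x}$ established inside the proof of Lemma~\ref{Lemma compact state}, the sequence $h^n(0)$ is increasing and converges to the smallest fixed point $x^*_U(s) \in (0,\bar{x})$ of $h$, while $f^n(\bar{x})$ is decreasing and converges to the largest fixed point $x^*_L(s) \in (0,\bar{x})$ of $f$. Once $x^*_L(s) < x^*_U(s)$ is known, the MMC follows: for any $c \in (x^*_L(s), x^*_U(s))$ and $N$ sufficiently large, the constant-shock events of probability $p_n^N$ and $p_1^N$ give $Q^N(0,s,[c,\bar{x}]) \geq p_n^N > 0$ and $Q^N(\bar{x},s,[0,c]) \geq p_1^N > 0$.

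The main obstacle is therefore to rule out $x^*_L(s) \geq x^*_U(s)$. Here I would leverage the unit-Lipschitz property of $\mu(\cdot,s)$ from Lemma~\ref{Lemma properties}(ii). Suppose $x^*_L \geq x^*_U$; then monotonicity and the Lipschitz-$1$ bound give $0 \leq \mu(x^*_L,s) - \mu(x^*_U,s) \leq x^*_L - x^*_U$, while the fixed-point identities $\mu(x^*_L,s) = x^*_L/\zeta_1$ and $\mu(x^*_U,s) = x^*_U/\zeta_n$ convert this into $x^*_L(1/\zeta_1 - 1) \leq x^*_U(1/\zeta_n - 1)$. Because $\zeta_1 < 1 < \zeta_n$ and both fixed points are strictly positive (each is at least $k(0)\zeta_1 > 0$), the left-hand side is strictly positive while the right-hand side is strictly negative, a contradiction. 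Hence $x^*_L(s) < x^*_U(s)$ for every $s$, the MMC holds, and $\bar{X}$-ergodicity follows from the cited monotone-chain theorem.
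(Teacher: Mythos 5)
Your proposal is correct and follows essentially the same route as the paper: both verify the monotone mixing (splitting) condition by iterating the extremal deterministic maps $\mu(\cdot,s)\zeta_1$ and $\mu(\cdot,s)\zeta_n$ from $\bar{x}$ and $0$ respectively, showing the resulting fixed points are strictly separated, and then invoking the Hopenhayn--Prescott/Bhattacharya--Lee ergodic theorem for stochastically monotone kernels on a compact state space. The only difference is a minor one in the separation step: the paper gets $C_s \geq C_s^*$ from Topkis's comparative statics of fixed points in $\zeta$ together with the contraction property of $\mu(\cdot,s)\zeta_1$ (which uses the unit-Lipschitz bound of Lemma~\ref{Lemma properties}), whereas you use that same Lipschitz bound directly in a short contradiction argument; both are valid.
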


\begin{proof}
Fix $s \in \mathcal{P} (X)$. Define the sequences $x_{k +1} =\mu  (x_{k} ,s) \zeta _{n}$ and $y_{k +1} =\mu  (y_{k} ,s) \zeta _{1}$ where $x_{1} =0$ and $y_{1} =\bar{x}$. Note that $\{x_{n}\}_{n =1}^{\infty }$ is strictly increasing, i.e., $x_{k +1} >x_{k}$ for all $k$. To see this, first note that $x_{2} =\mu  (x_{1} ,s) \zeta _{n} \geq k (0) \zeta _{n} >0 =x_{1}$. Now if $x_{k} >x_{k -1}$, then $\mu $ being strictly increasing in $x$ (see Lemma \ref{Lemma properties} part (i)) implies that $x_{k +1} =\mu  (x_{k} ,s) \zeta _{n} >\mu  (x_{k -1} ,s) \zeta _{n} =x_{k}$. Let $C_{s} =\min \{x \in \mathbb{R}_{ +} :\mu  (x ,s) \zeta _{n} =x\}$. From the facts that $\mu  (0 ,s) \zeta _{n} \geq k (0) \zeta _{n} >0$, $\mu  (\bar{x} ,s) \zeta _{n} <\bar{x}$ (see Lemma \ref{Lemma compact state}), and $\mu $ is continuous (see Lemma \ref{Lemma 1}), by Brouwer fixed point theorem $C_{s}$ is well defined. Similarly, the sequence $\{y_{n}\}_{n =1}^{\infty }$ is strictly decreasing and therefore converges to a limit $C_{s}^{ \ast }$. 

We claim that $C_{s} > C_{s}^{ \ast }$. To see this, first note that Lemma \ref{Lemma compact state} implies that the function $f_{s}$, defined by $f_{s} (x,\zeta ) =\mu  (x ,s) \zeta $, is from $\bar{X} \times E$ into $\bar{X}$. Note that $f_{s}$ is increasing in both arguments and that $\bar{X}$ is a complete lattice. Thus, Corollary 2.5.2 in \cite{topkis2011supermodularity} implies that the greatest and least fixed points of $f_{s}$ are increasing in $\zeta $. Lemma \ref{Lemma properties} part (ii) and $\zeta _{1}<1$ imply that $f_{s} (x ,\zeta _{1}) =\mu  (x ,s) \zeta _{1}$ is a contraction mapping from $\bar{X}$ to itself. Thus, $f_{s} (x ,\zeta _{1})$ has a unique fixed point which equals the limit of the sequence $\{y_{n}\}_{n =1}^{\infty }$, $C_{s}^{\ast}$. Since the least fixed point of $f_{s}$ is increasing in $\zeta$ we conclude that $C_{s} \geq C_{s}^{ \ast }$. Since $\mu $ is increasing and positive we have $C_{s} =\mu (C_{s} ,s)\zeta _{n} >\mu (C_{s} ,s)\zeta _{1} \geq \mu (C_{s}^{ \ast } ,s)\zeta _{1} =C_{s}^{ \ast }$ 

Let $x^{ \ast } =(C_{s} +C_{s}^{ \ast })/2$. Since $x_{k}\uparrow C_{s}$ and $y_{k}\downarrow C_{s}^{ \ast }$, there exists a finite $N_{1}$ such that $x_{k} >x^{ \ast }$ for all $k \geq N_{1}$, and similarly, there exists a finite $N_{2}$ such that $y_{k} <x^{ \ast }$ for all $k \geq N_{2}$. Let $m =\max \{N_{1} ,N_{2}\}$. Thus, after $m$ periods there exists a positive probability ($\zeta _{1}^{m}$) to move from the state $\bar{x}$ to the set $[0 ,x^{ \ast }]$, and a positive probability to move from the state $0$ to the set $[x^{ \ast } ,\bar{x}]$. That is, we found $x^{ \ast } \in [0 ,\bar{x}]$ and $m >0$ such that $Q^{m}(\bar{x} ,s ,[0 ,x^{ \ast }]) >0$ and $Q^{m}(0 ,s ,[x^{ \ast } ,\bar{x}]) >0$. Since $\bar{X}$ is compact and $Q$ is increasing in $x$, then $Q$ is $\bar{X}$-ergodic (see Theorem 2 in \cite{hopenhayn1992} or Theorem 2.1 in \cite{bhattacharya1988asymptotics}). 
\end{proof}

\vspace{5mm} 

Now, we prove Theorem \ref{Theorem DOP comp}. The main idea behind the proof is to show that the optimal stationary strategy $g$ is increasing or decreasing in the relevant parameter using a lattice-theoretical approach and then to conclude that the conditions of Theorem \ref{Theorem comp} hold. 

Let $(I , \succeq_{I} )$ be a partial order set that influences the firms' decisions. We denote a generic element in $I\;$ by $e$. For instance, $e$ can be the discount factor or the cost of a unit of investment. Throughout the proof of Theorem \ref{Theorem DOP comp}
we allow an additional argument in the functions that we consider. For instance, the value function $V$ is denoted by:
\begin{equation*}V (x ,s ,e) =\underset{a \in [0 ,\bar{a}]}{\max } ~h (x ,a ,s ,e ,V).
\end{equation*} Likewise, the optimal stationary strategy is denoted by $g (x ,s ,e)$, and $u (x ,s ,e)$ is the one-period profit function. Here, we come back to the original formulation over actions $a$.

\begin{proof}
[Proof of Theorem~\ref{Theorem DOP comp}] i) Assume that $f \in B (X \times \mathcal{P} (X) \times I)$ is concave in the first argument and has decreasing differences in $(x ,d)$ where $I \subseteq \mathbb{R}_{ +}$ is the set of all possible unit investment costs endowed with the natural order, $d_{2} \geq d_{1}$. 

Fix $s \in \mathcal{P} (X)$. Note that $da$ has increasing differences in $(a,d)$. Thus, $u (x ,s) -d a$ has decreasing differences in $(a ,d)$, $(x ,a)$ and $(x ,d)$. Since $f$ has decreasing differences and $k$ is increasing, the function $f (((1 -\delta ) x +k (a)) \zeta  ,s ,d)$ has decreasing differences in $(a ,d)$ and $(x ,d)$ for every $\zeta  \in E$. Since $f$ is concave in the first argument and $k$ is increasing, it can be shown that the function $f (((1 -\delta ) x +k (a)) \zeta  ,s ,d)$ has decreasing differences in $(x ,a)$ for every $\zeta  \in E$. Thus, the function
\begin{equation*}h (x ,a ,s ,d ,f) =u (x ,s) -d a +\beta  \sum _{j =1}^{n}p_{j} f (((1 -\delta ) x +k (a)) \zeta _{j} ,s ,d)
\end{equation*}has decreasing differences in $(x ,a)$, $(x ,d)$ and $(a ,d)$ as the sum of functions with decreasing differences. 

A similar argument to Lemma 1 in \cite{hopenhayn1992} or Lemma 2 in \cite{lovejoy1987ordered}
implies that if $h (x ,a ,s ,d ,f)$ has decreasing differences in $(x ,a)$, $(x ,d)$ and $(a ,d)$, then $T f (x ,s ,d) =\underset{a \in [0 ,\bar{a}]}{\max }h (x ,a ,s ,d ,f)$ has decreasing differences in $(x ,d)$. The proof of Lemma \ref{Lemma single-valued} implies that
$Tf$ is concave in $x$. We conclude that for all $n =1 ,2 ,3...$, $T^{n} f$ is concave in $x$ and has decreasing differences. Standard dynamic programming arguments (see \cite{bertsekas1978stochastic}) show that $T^{n} f$ converges to $V$ uniformly. Since the set of functions with decreasing differences is closed under uniform convergence, $V$ has decreasing differences in $(x ,d)$. From the same argument as above, $h (x ,a ,s ,d ,V)$ has decreasing differences in $(a ,d)$. Theorem 6.1 in \cite{topkis1978} implies that $g (x ,s ,d)$ is decreasing in $d$. 

Define the order $ \succeq _{I}$ by $d_{2}  \succeq _{I}d_{1}$ if and only if $d_{1} \geq d_{2}$. Thus $d_{2}  \succeq _{I}d_{1}$ implies that
\begin{align*}Q (x ,s ,d_{2} ,B) &  =\Pr (((1 -\delta ) x +k (g (x ,s ,d_{2})) \zeta  \in B) \\
 &  \geq \Pr (((1 -\delta ) x +k (g (x ,s ,d_{1})) \zeta  \in B) \\
 &  =Q (x ,s ,d_{1} ,B)\end{align*}for all $x ,s$ and every upper set $B$, because $g (x ,s ,d)$ is decreasing in $d$. That is, $Q (x ,s ,d_{2} , \cdot ) \succeq  _{SD}Q (x ,s ,d_{1} , \cdot )$ for all $x ,s$ and $d_{2} ,d_{1} \in I$ such that $d_{2}  \succeq _{I}d_{1}$. From Theorem \ref{Theorem comp} and Theorem \ref{Unique DOP} we conclude that $d_{2}  \succeq _{I}d_{1}$ implies $s (d_{2}) \succeq s (d_{1})$, i.e., $d_{2} \leq d_{1}$ implies $s (d_{2}) \succeq s (d_{1})$. 

(ii) The proof of part (ii) is the same as the proof of part (i)
and is therefore omitted. 

(iii) Assume that $f \in B (X \times \mathcal{P} (X) \times I)$ is increasing in the first argument and has decreasing differences in $(x ,\beta )$ where $I =(0 ,1)$ is the set of all possible discount factors endowed with the reverse order; $\beta _{2}  \succeq _{I}\beta _{1}$ if and only if $\beta _{1} \geq \beta _{2}$. A standard argument shows that $T f$ is increasing in the first argument. We will only show that $h (x ,a ,s ,\beta  ,f)$ has decreasing differences in $(a ,\beta )$ and $(x ,\beta )$; the rest of the proof is the same as the proof of part (i). Fix $s$, $x$ and let $\beta _{2}  \succeq _{I}\beta _{1}$ (i.e., $\beta _{1} \geq \beta _{2}$), and $a_{2} \geq a_{1}$. Decreasing differences of $f$ and the fact that $k$ is increasing imply that $f (((1 -\delta ) x +k (a_{2})) \zeta  ,s ,\beta ) -f (((1 -\delta ) +k (a_{1})) \zeta  ,s ,\beta )$ is decreasing in $\beta $ for all $\zeta  \in E$. Since $\beta _{1} \geq \beta _{2}$, $f$ and $k$ are increasing, we have
\begin{align*}\beta _{2} \sum _{j =1}^{n}p_{j} (f (((1 -\delta ) x +k (a_{2})) \zeta _{j} ,s ,\beta _{2}) -f (((1 -\delta ) x +k (a_{1})) \zeta _{j} ,s ,\beta _{2})) \\
 \leq \beta _{1} \sum _{j =1}^{n}p_{j} (f (((1 -\delta ) x +k (a_{2})) \zeta _{j} ,s ,\beta _{1}) -f (((1 -\delta ) x +k (a_{1})) \zeta _{j} ,s ,\beta _{1})).\end{align*}
 Thus $h (x ,a ,s ,\beta  ,f)$ has decreasing differences in $(a ,\beta )$. A similar argument shows that $h (x ,a ,s ,\beta  ,f)$ has decreasing differences in $(x ,\beta )$. 
\end{proof}

\vspace{5mm}

\begin{proof}[Proof of Theorem~\ref{Theorem:dynamic advertising}]
(i) The proof of the Theorem is similar to the proof of Theorem \ref{Unique DOP}.
The idea of the proof is to show that the
conditions of Theorem \ref{Theorem (Coupling actions)} hold. We now show that $\overline{Q}$ is increasing in $x$ and decreasing in $s$ (see Section \ref{Sec: Coupling th} for the definition of $\overline{Q}$). 

We use the same change of variables and notation as in the proof of Theorem \ref{Unique DOP}. Define
\begin{equation}z =(1 -\delta )(x +a)
\end{equation}
and note that $a =(1 -\delta )^{-1}z -x$. The resulting Bellman operator is given by
\begin{equation*}Kf(x ,s) =\underset{z \in \Gamma  (x)}{\max }J(x,z ,s ,f),
\end{equation*}
where $\Gamma  (x) =[(1 -\delta )(x +1) ,(1 -\delta )(x +\overline{a})]$,
\begin{equation*}J (x ,z ,s ,f) =\pi  (x ,z ,s) +\beta  \sum _{j}p_{j} f\left (z\zeta _{j} ,s\right ),
\end{equation*} 
and 
\begin{align*}\pi (x ,z ,s) &  =r\frac{(x +(1 -\delta )^{-1}z -x)^{\gamma _{1}}}{\left (\int (x^{ \prime } +(1 -\delta )^{ -1}z^{\prime } -x^{ \prime })s(dx^{ \prime},dz^{ \prime })\right )^{\gamma _{2}}} -(1 -\delta )^{ -1}z -x \\
 &  =r\frac{\left ((1 -\delta )^{ -1}z\right )^{\gamma _{1}}}{\left (\int (1 -\delta )^{ -1}z^{ \prime }s(dx^{ \prime } ,dz^{ \prime })\right )^{\gamma _{2}}} -x -(1 -\delta )^{ -1}z.\end{align*}
 Let $\mu  (x ,s) =\operatorname{argmax}_{z \in \Gamma  (x)}J (x ,z ,s ,V)$. Since $\pi$ is concave in $(x,z)$, Lemma \ref{Lemma single-valued} implies that the policy function $\mu (x ,s)$ is single-valued.

It is immediate that $\pi $ has increasing differences in $(x ,z)$, and decreasing differences in $(z ,s)$ and $(x ,s)$. Here $s_{2} \succeq s_{1}$ if and only if 
$$\int (1 -\delta )^{ -1}z^{ \prime }s_{2}(dx^{ \prime } ,dz^{ \prime }) \geq \int (1 -\delta )^{ -1}z^{ \prime }s_{1}(dx^{ \prime } ,dz^{ \prime }).$$ 
From Lemma \ref{Lemma single-valued}, we can show that $\mu $ is increasing in $x$ and decreasing in $s$. 

Thus, for each $s \in \mathcal{P} (X \times A)$, $x_{2} \geq x_{1}$ and any upper set $B \times D \in \mathcal{B}(X \times A)$ we have 
\begin{align*}\overline{Q}(x_{2} ,s ,B \times D) &  =\sum _{j =1}^{n}p_{j}1_{B \times D}(\mu (x_{2} ,s)\zeta _{j} ,\mu \left (\mu (x_{2} ,s)\zeta _{j} ,s\right )) \\
 &  \geq \sum _{j =1}^{n}p_{j}1_{B \times D}(\mu (x_{1} ,s)\zeta _{j} ,\mu (\mu (x_{1} ,s)\zeta _{j} ,s)) \\
 &  =\overline{Q}(x_{1} ,s ,B \times D) .\end{align*}
 The equalities follow from the proof of Theorem \ref{Theorem (Coupling actions)}. The inequality follows because $\mu $ is increasing in $x$. Thus, $\overline{Q} (x_{2} ,s , \cdot ) \succeq  _{SD}\overline{Q} (x_{1} ,s , \cdot )$, i.e., $\overline{Q}$ is increasing in $x$.   
 
Similarly, because $\mu  (x ,s)$ is decreasing in $s$, we can show that $\overline{Q}$ is decreasing in $s$ for each $x \in X$.  

We conclude that $\overline{Q}$ is decreasing in $s$ and increasing in $x$. Compactness of the state space $X$ and $X$-ergodicity of $\overline{Q}$ can be established using similar arguments to the arguments in Theorem \ref{Unique DOP}.  Thus, all the conditions of Theorem \ref{Theorem (Coupling actions)} parts (i) and (ii) hold. We conclude that the dynamic advertising model has a unique MFE. 

 The proofs of parts (ii) and (iii) are similar to the proof of Theorem \ref{Theorem DOP comp} and are therefore omitted.  
\end{proof}

\vspace{5mm}

\begin{proof}[Proof of Theorem~\ref{THM: REPUTATION}]
(i) First note that the state space $X =[0 ,M_{1}] \times [0 ,M_{2}]$ is compact. We now show that $Q$ is increasing in $x_{1}$ and decreasing in $s$ with respect to $x_{1}$.

For the proof of the theorem, it will be convenient to change the decision variable in the Bellman equation.
Define \begin{equation*}z =\frac{x_{2}}{1 +x_{2}}x_{1} +\frac{1}{1 +x_{2}}k (a),
\end{equation*} and note that we can write $a =k^{ -1} (z(1 +x_{2}) -x_{2}x_{1})$, which is well defined because $k$ is strictly increasing. The resulting Bellman operator is given by
\begin{equation*}Kf(x_{1},x_{2} ,s) =\underset{z \in \Gamma  (x_{1},x_{2})}{\max }J (x_{1},x_{2} ,z ,s ,f),
\end{equation*}
where $\Gamma  (x_{1},x_{2}) =[\frac{x_{2}}{1 +x_{2}}x_{1} +\frac{1}{1 +x_{2}}k (0) ,\frac{x_{2}}{1 +x_{2}}x_{1} +\frac{1}{1 +x_{2}}k (\bar{a})]$,
\begin{equation*}J(x_{1},x_{2} ,z ,s ,f) =\pi  (x_{1},x_{2},z ,s) +\beta  \sum _{j}p_{j} f\left (\min \left (z + \frac{\zeta _{j}}{1+x_{2}} ,M_{1} \right ) ,\min (x_{2} +1 ,M_{2}) ,s\right ),
\end{equation*}
and
\begin{equation*}\pi (x_{1},x_{2} ,z ,s) =\frac{\nu (x_{1} ,x_{2})}{\int \nu (x_{1} ,x_{2})s(d(x_{1} ,x_{2}))} -dk^{ -1} (z(1 +x_{2}) -x_{2}x_{1}) .
\end{equation*}
Let $\mu  (x_{1},x_{2} ,s) =\operatorname{argmax}_{z \in \Gamma  (x_{1},x_{2})} J(x_{1},x_{2} ,z ,s ,V)$. From the arguments as the arguments in Lemma \ref{Lemma single-valued}, the optimal stationary strategy $\mu (x_{1},x_{2} ,s)$ is single-valued. 

Let $x^{\prime}_{1} \leq x_{1}$ and $s_{2} \succeq s_{1}$. Because $\nu$ is increasing, we have  
\begin{align*} \nu (x_{1} ,x_{2}) \left (\frac{1}{\int \nu (x_{1} ,x_{2})s_{2}(d(x_{1} ,x_{2}))} - \frac{1}{\int \nu (x_{1} ,x_{2})s_{1}(d(x_{1} ,x_{2}))} \right ) \\ 
\leq \nu (x^{\prime}_{1} ,x_{2}) \left (\frac{1}{\int \nu (x_{1} ,x_{2})s_{2}(d(x_{1} ,x_{2}))} - \frac{1}{\int \nu (x_{1} ,x_{2})s_{1}(d(x_{1} ,x_{2}))} \right ). \end{align*}  
Thus, $\pi$ has decreasing differences in $(x_{1},s)$. In addition, $\pi$ has decreasing differences in $(z,s)$ and increasing differences in $(x_{1},z)$ (see the proof of Lemma \ref{Lemma single-valued}). From Lemma \ref{Lemma single-valued}, we can show that $\mu $ is increasing in $x_{1}$ and decreasing in $s$.

Recall that in every period, with probability $1-\beta$, each seller departs the market and a new seller with state $(0,0)$ immediately arrives to the market. With probability $\beta$, each seller stays in the market  and moves to a new state according to the dynamics described in Section \ref{Sec:dynamic rep}. Thus, we have
\begin{align*}Q(x_{1} ,x_{2} ,s ,B_{1} \times B_{2}) &=(1 -\beta )\delta _{\{(0 ,0)\}}(B_{1} \times B_{2}) \\ &
 +\beta \Pr \left ( (\min \left (\mu (x_{1} ,x_{2} ,s) + \frac{\zeta _{j}}{1+x_{2}}  ,M_{1} \right ) ,\min (x_{2} +1 ,M_{2})) \in B_{1} \times B_{2}\right ) \\ &
 =  (1 -\beta )1_{B_{1} \times B_{2}}(0,0) \\ &
 + \beta \sum _{j=1}^{n}p_{j}1_{B_{1} \times B_{2}}(\min \left (\mu (x_{1} ,x_{2} ,s) + \frac{\zeta _{j}}{1+x_{2}}  ,M_{1} \right ) ,\min (x_{2} +1 ,M_{2}))\end{align*}
 where $\delta _{\{c\}}$ is the Dirac measure on the point $c \in \mathbb{R}^{2}$. 
 Let $f :X_{1} \times X_{2} \rightarrow \mathbb{R}$ be increasing in the first argument. Assume that $x^{\prime}_{1} \leq x_{1}$. We have
\begin{align*}\int _{X}f (y_{1} ,y_{2})Q((x^{\prime}_{1},x_{2}),s,d(y_{1},y_{2})) &  = (1 -\beta )f(0,0) \\
& + \beta \sum _{j=1}^{n}p_{j} f \left (\min \left (\mu (x^{\prime}_{1} ,x_{2} ,s) + \frac{\zeta _{j}}{1+x_{2}} ,M_{1} \right ) ,\min (x_{2} +1 ,M_{2}) \right ) \\ &
\leq (1 -\beta )f(0,0) \\ 
& + \beta \sum _{j=1}^{n}p_{j} f \left (\min \left (\mu (x_{1} ,x_{2} ,s) + \frac{\zeta _{j}}{1+x_{2}} ,M_{1} \right ) ,\min (x_{2} +1 ,M_{2}) \right )   \\ &
=\int _{X}f (y_{1},y_{2})Q((x_{1},x_{2}),s,d(y_{1},y_{2})).
\end{align*} 
The inequality follows from the facts that $\mu $ is increasing in $x_{1}$, and $f$ is increasing in the first argument.

We conclude that $Q$ is increasing in $x_{1}$. Similarly, because $\mu$ is decreasing in $s$, we can prove that $Q$ is decreasing in $s$ with respect to $x_{1}$. We now show that $Q$ is $X$-ergodic. 

The Markov chain $Q$ is said to satisfy the Doeblin condition if there exists a positive integer $n_{0}$, $\epsilon >0$ and a probability measure $\upsilon $ on $X$ such that $Q^{n_{0}}(x ,s,B) \geq \epsilon \upsilon (B)$ for all $x \in X$ and all measurable $B$. From the definition of $Q$, we have $Q(x,s,B) \geq (1-\beta) \delta _{\{(0 ,0)\}}(B)$ for every measurable $B$, so $Q$ satisfies the Doeblin condition. Thus, $Q$ is $X$-ergodic (see Theorem 8 in  \cite{roberts2004general}). 

Thus, all the conditions of Theorem \ref{Corr:product space} and Theorem \ref{Theorem existence} are satisfied. We conclude that the dynamic reputation model has a unique MFE. 

(ii) The proof of part (ii) is similar to the proof of Theorem \ref{Theorem DOP comp} and is therefore omitted.  
\end{proof}

\subsection{Heterogeneous Agent Macro Models: Proof of Corollary \ref{Hetero Macro uniq}}
\begin{proof}
[Proof of Corollary~\ref{Hetero Macro uniq}] From Theorem \ref{Corr:product space} we only need to show that $Q$ is increasing in $x_{1}$ and decreasing in $s$ in order to prove Corollary \ref{Hetero Macro uniq}. 

Let $f :X_{1} \times X_{2} \rightarrow \mathbb{R}$ be increasing in the first argument. Assume that $x^{\prime}_{1} \leq x_{1}$. We have
\begin{align*}\int _{X}f (y_{1} ,y_{2})Q((x^{\prime}_{1},x_{2}),s,d(y_{1},y_{2})) &  =\sum _{j}p_{j} f (\tilde{g} (x^{\prime}_{1} ,x_{2} ,H (s)) ,m (x_{2} ,\zeta _{j})) \\
 &  \leq \sum _{j}p_{j} f (\tilde{g} (x_{1} ,x_{2} ,H (s)) ,m (x_{2} ,\zeta _{j})) \\
 &  =\int _{X}f (y_{1} ,y_{2})Q((x_{1},x_{2}),s,d(y_{1},y_{2})). \end{align*} 
 The inequality follows from the facts that $\tilde{g}$ is increasing in $x_{1}$ and $f$ is increasing in the first argument. In a similar manner, because $\tilde{g}$ is decreasing in the aggregator, we can show that $Q$ is decreasing in $s$ with respect to $x_{1}$.

We conclude that $Q$ is increasing in $x_{1}$ and decreasing in $s$. 
\end{proof}

\subsection{Extensions: Proofs of Theorem \ref{Theorem (Coupling actions)} and Lemma \ref{Lemma: Ex-ante}}

\begin{proof}
[Proof of Theorem~\ref{Theorem (Coupling actions)}]
The proofs of part (i) and of part (iii) are the same as the proofs of Theorem \ref{Theorem uniq} and of Theorem \ref{Theorem comp}. To prove part (ii) we need to show that the operator $\overline{\Phi}:\mathcal{P} (X) \rightarrow \mathcal{P} (X)$ defined by
\begin{equation*}\overline{\Phi } s (B \times D) =_{}\int _{X}\overline{Q}(x ,s ,B \times D)s(dx,A).
\end{equation*}
is continuous (the rest of the proof is the same as the proof of Theorem \ref{Theorem existence}). The continuity of $\overline{\Phi}$ follows from a similar argument to the argument in the proof of Theorem \ref{Theorem existence}. We provide the proof for completeness. 

Note that for every bounded and measurable function $f :X \times A \rightarrow \mathbb{R}$ and for every $s \in \mathcal{P} (X \times A)$ we have
\begin{equation}\int _{X \times A}f\left (x ,a\right )\overline{\Phi }s\left (d\left (x ,a\right )\right ) =\int _{X}\sum \limits _{j =1}^{n}p_{j}f\left (w\left (x ,g(x ,s) ,s ,\zeta _{j}\right ) ,g\left (w\left (x ,g(x ,s),s ,\zeta _{j}\right ) ,s\right )\right )s\left (dx,A\right ). \label{Equation: Action2}
\end{equation}
To see this, first assume that $f =1_{B \times D}$ for some measurable set $B \times D \in \mathcal{B} (X \times A)$.  We have 
\begin{align*}\int _{X \times A}f\left (x ,a\right )\overline{\Phi}s\left (d(x ,a)\right ) &  =\overline{\Phi}s\left (B \times D\right ) \\
 &  =\int _{X}\int _{B}1_{D}(g\left (y ,s\right ))Q\left (x ,s ,dy\right )s(dx ,A) \\
 &  =\int _{X}\int _{X}1_{B}\left (y\right )1_{D}\left (g(y ,s)\right)Q(x ,s ,dy)s(dx ,A) \\
 &  =\int _{X}\int _{X}1_{B \times D}\left (y ,g(y ,s)\right )Q(x ,s ,dy)s(dx ,A) \\
 &  =\int _{X}\sum _{j =1}^{n}p_{j}1_{B \times D}\left (w\left (x ,g\left (x ,s\right ) ,s ,\zeta _{j}\right ) ,g\left (w\left (x ,g\left (x ,s\right ) ,s ,\zeta _{j}\right ) ,s\right )\right )s\left (dx ,A\right) \\
 &  =\int _{X}\sum \limits _{j =1}^{n}p_{j}f\left (w\left (x ,g\left (x ,s\right ) ,s ,\zeta _{j}\right ) ,g\left (w\left (x ,g\left (x ,s\right ) ,s ,\zeta _{j}\right ) ,s\right )\right )s\left (dx ,A\right ). \end{align*} 
 A standard argument shows that Equation (\ref{Equation: Action2}) holds for every bounded and measurable function $f$.

Assume that $s_{n}$ converges weakly to $s$. Thus, the marginal distribution $s_{n}(\cdot,A)$ converges weakly to $s(\cdot,A)$. Let $f :X \times A \rightarrow \mathbb{R}$ be a continuous and bounded function. Because $w$ and $g$ are continuous, we have \begin{equation*}f (w (x_{n} ,g (x_{n} ,s_{n}) ,s_{n} ,\zeta ) ,g(w\left (x_{n} ,g\left (x_{n} ,s_{n}\right ) ,s_{n} ,\zeta \right ) ,s_{n})) \rightarrow f(w (x ,g (x ,s) ,s ,\zeta ) ,g(w\left (x ,g\left (x ,s\right ) ,s ,\zeta \right ),s))
\end{equation*} 
whenever $x_{n} \rightarrow x$. 

Let $$k_{n}\left (x\right ) : =\sum _{j =1}^{n}p_{j}f(w (x ,g (x ,s_{n}) ,s_{n} ,\zeta _{j}) ,g(w\left (x ,g\left (x ,s_{n}\right ) ,s_{n} ,\zeta_{j} \right ) ,s_{n}))$$ 
and 
$$k\left (x\right ) : =\sum _{j =1}^{n}p_{j}f(w (x ,g (x ,s) ,s ,\zeta_{j} ) ,g(w\left (x ,g\left (x ,s\right ) ,s ,\zeta_{j} \right ) ,s)).$$  

Then $k_{n}$ converges continuously to $k$, i.e., $k_{n} (x_{n}) \rightarrow k (x)$ whenever $x_{n} \rightarrow x$. Since $f$ is bounded, the sequence $k_{n}$ is uniformly bounded. Using Proposition \ref{Prop1} yields
\begin{align*}\underset{n \rightarrow \infty }{\lim }\int _{X \times A}f (x,a) \overline{\Phi}  s_{n} (d (x ,a)) &  =\underset{n \rightarrow \infty }{\lim }\int _{X}k_{n} (x) s_{n}(dx,A) \\
 &  =\int _{X}k (x) s (dx,A) \\
 &  =\int _{X \times A}f (x,a) \overline{\Phi}  s (d(x,a))).\end{align*}
 Thus, $\Phi  s_{n}$ converges weakly to $\Phi  s$. We conclude that $\Phi$ is continuous. 
\end{proof}

\begin{proof}[Proof of Lemma \ref{Lemma: Ex-ante}]
Let $f:X \times \Theta \rightarrow R$ be increasing in in the first. The fact that $Q_{\theta}$ is increasing in $x$ implies that the function 
\begin{align*}
    \int _{X \times \Theta} f(y,\theta^{\prime})Q_{h}(x,\theta,s_{h},d(y,\theta ^{\prime}))  =  \int _{X \times \Theta} f(y,\theta^{\prime}) Q_{\theta}(x,s_{h},dy)1_{D}(d\theta ^{\prime})
\end{align*}
is increasing in $x$ for every type $\theta$ and every extended population state $s_{h}$. That is, $Q_{h}$ is increasing in $x$. Similarly, $Q_{h}$ is decreasing in $s_{h}$ with respect to $x$ when $Q_{\theta}$ is decreasing in $s_{h}$. 
\end{proof}

\end{document}